\documentclass[a4paper,11pt]{article}                    
\usepackage{vmargin}
\setmarginsrb{2.4cm}{2.4cm}{2.4cm}{2.4cm}{0pt}{5mm}{0pt}{6mm}
\usepackage{graphicx}
\usepackage{amssymb,amsmath,textcomp}
\usepackage{amsthm}
\usepackage{color}
\usepackage{hyperref}
\usepackage{tabularx}
\usepackage[justification=centering]{caption}
\usepackage{latexsym}
\usepackage{authblk}

\newcommand{\remove}[1]{}

\newcommand{\defparproblem}[4]{
 \vspace{3mm}
\noindent\fbox{
  \begin{minipage}{.95\textwidth}
  \begin{tabular*}{\textwidth}{@{\extracolsep{\fill}}lr} \textsc{#1} \\ \end{tabular*}
  {\bf{Input:}} #2  \\
  {\bf{Parameter:}} #3\\
  {\bf{Question:}} #4
  \end{minipage}
  }
  \vspace{2mm}
}

\usepackage{microtype}%if unwanted, comment out or use option "draft"

%\graphicspath{{./graphics/}}%helpful if your graphic files are in another directory
\newtheorem{definition}{\bf Definition}[section]
\newtheorem{observation}{Observation}
\newtheorem{reduction rule}{\bf Reduction Rule}[section]
\newtheorem{lemma}{\bf Lemma}[section]
\newtheorem{theorem}{\bf Theorem}[section]

\newtheorem{corollary}{Corollary}[section]
\newtheorem{conjecture}{Conjecture}[section]
\newtheorem{branching rule}{Branching Rule}[section]

\newtheorem{prop}{Proposition}[section]

\newcommand{\BB}{{\mathcal B}}

\newcommand{\FF}{\ensuremath{\mathcal{F}}\xspace}

\newcommand{\OO}{\mathcal{O}}

\newcommand{\nka}{${\sf NP \subseteq coNP/poly}$}
\newcommand{\FPT}{{\sf FPT}}
\newcommand{\NP}{{\sf NP}}
\newcommand{\fP}{{\sf P}}
\newcommand{\W}{{\sf W}}
\newcommand{\XP}{\sf XP}

\newcommand{\EDS}{{\sc Efficient Dominating Set }}
\newcommand{\SCWP}{{\sc Set-Cover with Partition}}
\newcommand{\ESCWP}{{\sc Exact Set-Cover with Partition}}
\newcommand{\WSMCWP}{{\sc Weighted Set-Multicover with Partition}}
\newcommand{\SetCover}{{\sc Set-Cover}}

\usepackage{tikz}
\usetikzlibrary{decorations.shapes,decorations.pathreplacing,decorations.pathmorphing}
\usetikzlibrary{arrows,matrix,shapes}
\usetikzlibrary{positioning}
\tikzset{
        stars/.style={star,inner sep=2pt}
    }

\usepackage{todonotes}
\usepackage[ruled,vlined,linesnumbered]{algorithm2e}

\title{Parameterized Complexity of Dominating Set Variants in Almost Cluster and Split Graphs\footnote{Some of the results from this paper appeared in proceedings of 13th International Computer Symposium in Russia (CSR) 2018~\cite{GJKMR18csr}.}}

\date{}

\author[1]{Dishant Goyal}
\author[2]{Ashwin Jacob}
\author[3]{Kaushtubh Kumar}
\author[4]{Diptapriyo Majumdar}
\author[5]{Venkatesh Raman}
\affil[1]{Indian Institute of Technology Delhi, New Delhi, India\\
\texttt{{dishant.in@gmail.com}}}
\affil[2]{National Institute of Technology Calicut, India\\
  \texttt{ashwinjacob@nitc.ac.in}}
\affil[3]{Mentor Graphics Corporation, Noida, India\\
\texttt{{kaushtubhkp10@gmail.com}}}
\affil[4]{Indraprastha Institute of Information Technology Delhi, New Delhi, India\\
\texttt{diptapriyo@iiitd.ac.in}}
\affil[5]{The Institute of Mathematical Sciences, HBNI, Chennai, India\\
\texttt{vraman@imsc.res.in}}

\bibliographystyle{plain}

\begin{document}
\maketitle

\begin{abstract}
We consider structural parameterizations of the fundamental {\sc Dominating Set} problem and its variants in the parameter ecology program.
We give improved FPT algorithms and lower bounds under well-known conjectures for dominating set in graphs that are $k$ vertices away from a cluster graph or a split graph.
These are graphs in which there is a set of $k$ vertices (called the modulator) whose deletion results in a cluster graph or a split graph.
We also call $k$ as the deletion distance (to the appropriate class of graphs).
Specifically, we show the following results.
When parameterized by the deletion distance $k$ to cluster graphs: 
\begin{itemize}
\item
we can find a minimum dominating set (DS) in $3^k n^{\OO(1)}$-time.
Within the same time, we can also find a minimum independent dominating set (IDS) or a minimum dominating clique (DC) or a minimum efficient dominating set (EDS) or a minimum total dominating set (TDS).
 These algorithms are obtained through a dynamic programming approach for an interesting generalization of set cover which may be of independent interest.
\item
We complement our upper bound results by showing that at least for minimum DS, DC and TDS, $(2-\epsilon)^k n^{\OO(1)}$-time algorithm is not possible for any $\epsilon > 0$ under, what is known as, Set Cover Conjecture. 
We also show that most of these variants of dominating set do not have polynomial sized kernel. 
\end{itemize}
The standard dominating set and most of its variants are \NP-hard or \W[2]-hard in split graphs. For the two variants IDS and EDS that are polynomial time solvable in split graphs, we show that when parameterized by the deletion distance $k$ to split graphs,
\begin{itemize}
\item
IDS can be solved in $2^k n^{\OO(1)}$ time and we provide an $\Omega(2^k)$ lower bound under the strong exponential time hypothesis (SETH);
\item
the $2^k$ barrier can be broken for EDS by designing an  $3^{k/2}n^{\OO(1)}$ algorithm. This is one of the very few problems with a runtime better than $2^k n^{\OO(1)}$ in the realm of structural parameterization. We also show that no $2^{o(k)}n^{\OO(1)}$ algorithm is possible unless the exponential time hypothesis (ETH) is false.
\end{itemize}
\end{abstract}

\section{Introduction}
\label{sec:intro}
\subsection{Motivation}
The {\sc Dominating Set} problem is one of the classical \NP-Complete graph theoretic problems \cite{garey1979computers}. 
It asks for a minimum set of vertices in a graph such that every vertex is either in that set or has a neighbor in that set. It, along with several variations, including
{\it independent domination, total domination, efficient domination, connected domination, total perfect domination, threshold domination} are well-studied in all algorithmic paradigms \cite{garey1979computers} , including parameterized complexity \cite{downey1995fixed,alber2004polynomial,philip2012polynomial}, exact algorithms \cite{van2011exact}, approximation algorithms \cite{downey2008parameterized,ChlebikC06,ChlebikC08}, and structural \cite{haynes1997domination,haynes2023fundamentals} points of view.
All of these versions are hard for the parameterized complexity class \W[1] in general graphs when parameterized by solution size \cite{downey1995fixed} and hence are unlikely to be fixed-parameter tractable (See~\cite{CFKLMPPS15} for more details).

One of the goals of parameterized complexity is to identify parameters under which the NP-hard problems are fixed-parameter tractable. This is also of practical interest as often there are some small parameters (other than solution size) that capture important practical inputs. 
This has resulted in the parameter ecology program where one studies problems under a plethora of parameters, and recently there has 
been a lot of active research~\cite{FJR13,JRV14,Cai03a,guo2004structural,niedermeier2010reflections} in this area.  
In particular, identifying a parameter as small as possible, under which a problem is fixed-parameter tractable or has a polynomial-sized kernel, is an interesting research direction. 
We continue this line of research and consider parameterizations of {\sc Dominating Set} variants that are more 
natural and functions of the input graph.
\textit{Structural parameterization} of a problem is where the parameter is a function of the input structure rather than the standard output size. A recent trend in structural parameterization is to study a problem parameterized by deletion distance to various graph classes where the problem is trivial.
To the best of our knowledge, the preliminary version \cite{GJKMR18csr} of this paper is the first serious study of structural parameterization of any version of the dominating set problem, more particularly on deletion distance parameters.
After that, there have been some follow-up works on some other {\sc Dominating Set} variants, i.e. {\sc $(\rho, \sigma)$-Dominatng Set} \cite{AshokRTarxiv22}
	{\sc Locating Dominating Set} \cite{CappelleGS21Lagos,ChakrabortyFMT24} and {\sc Grouped Dominating Set} \cite{HanakaOOU23} with respect to structural parameters.

Our parameter of interest is the `distance' of the graph from a natural class of graphs. Here by distance, we mean the number of vertices whose deletion results in the class of graphs. Note that if {\sc Dominating Set} is {\NP}-hard in a graph class, then it will continue to be {\NP}-hard even on graphs that are $k$ away from the class, even for constant $k$ (in particular for $k=0$) and hence {\sc Dominating Set} is unlikely to be fixed-parameter tractable. Hence it is natural to consider graphs that are not far from a class of graphs where {\sc Dominating Set} is polynomial-time solvable. 
Our case study considers two such special graphs: cluster graphs, where each connected component is a clique, and split graphs, where the vertex set can be partitioned into a clique and an independent set. In the former, all the variants of dominating set we consider are polynomial-time solvable, while in the latter class of split graphs, we consider independent and efficient dominating set that are polynomial-time solvable.
We call the set of vertices whose deletion results in a cluster graph and split graph as \textit{cluster vertex deletion set} (CVD) and \textit{split vertex deletion set} (SVD), respectively.

Finally, we remark that the size of minimum CVD and minimum SVD is at most the size of a minimum vertex cover in a graph, which is a well-studied parameterization in the parameter-ecology program~\cite{FJR13}.
Each of the variants of {\sc Dominating Set} is FPT when the minimum vertex cover size is considered as the parameter.
Since the minimum CVD size and minimum SVD size are structurally smaller than the minimum vertex cover size, the efficiency of parameterized algorithms can change.

\subsection{Definitions, Our Results and Organization of the paper}
We start by describing the variants of dominating set we consider in the paper.
A subset $S \subseteq V(G)$ is a \textit{dominating set} if $N[S] = V(G)$. If $S$ is also an independent set, then $S$ is an \textit{independent dominating set}. If $S$ is also a clique, then $S$ is a \textit{dominating clique}.
It is called an \textit{efficient dominating set} if for every vertex $v \in V$, $|N[v] \cap S| = 1$. 
Note that a graph may not have an efficient dominating set (for example, for a $4$-cycle).
If for every vertex $v$, $|N(v)\cap S| \ge r$, $S$ is a \textit{threshold dominating set} with threshold $r$.
When $r=1$, $S$ is a \textit{total dominating set}. 
Note that for dominating set, the vertices in $S$ do not need other vertices to dominate them, but they do in a total dominating set. For more information on these dominating set variants, see~\cite{haynes1997domination}.
We will often denote the problems {\sc Dominating Set} by DS, {\sc Efficient Dominating Set} by EDS, {\sc  Independent Dominating Set} by IDS, {\sc Dominating Clique} by DC, {\sc Threshold Dominating Set} by {\sc ThDS}, and {\sc Total Dominating Set} by TDS in the rest of the article.

%\todo[inline]{Does this line look good? The reviewers wanted the acronyms to be defined.}
%A \textit{cluster} graph is simply a disjoint collection of cliques. A \textit{split} graph is a graph whose vertex set can be partitioned into two, one inducing a clique and another inducing an independent set. 
When we say that a graph $G$ is {\em $k$-away} from a graph in a graph class, we mean that there is a subset $S$ of $k$ vertices in the graph such that $G\setminus S$ belongs to the class. 
%We also refer to $k$ as the distance of the graph to the class. 

Now we describe the main results in the paper (See Table~\ref{table:nonlin} for a summary).
When parameterized by the deletion distance $k$ to cluster graphs, 
\begin{itemize}
\item
we can find a minimum dominating set in $3^k n^{\OO(1)}$ time. Within the same time, we can also find a minimum IDS or a minimum DC or a minimum EDS, or a minimum TDS. We also give an $\OO^*((r+2)^k)$ algorithm for a minimum threshold dominating set with threshold $r$. These algorithms are obtained through a dynamic programming approach for interesting generalizations of set cover, which may be of independent interest.
%\todo[inline]{Highlight the Set Cover variants}
These results are discussed in Section~\ref{sec:cluster-upper-bounds}.
\item
We complement our upper bound results by showing that for dominating set, dominating clique, and total dominating set, $(2-\epsilon)^k n^{\OO(1)}$ algorithm is not possible for any $\epsilon > 0$ under what is known as Set Cover Conjecture. We also show that for IDS, $(2-\epsilon)^kn^{\OO(1)}$ algorithm is not possible for any $\epsilon > 0$ under the Strong Exponential Time Hypothesis (SETH) and for EDS no $2^{o(k)} n^{\OO(1)}$ algorithm is possible unless the Exponential Time Hypothesis (ETH) is false.
It also follows from our reductions that dominating set, dominating clique, total dominating set, and IDS do not have polynomial-sized kernels unless {\nka}.
These results are discussed in Section~\ref{sec:clusterlowerbounds}.
\end{itemize}
The standard dominating set and most of its variants are {\NP}-hard or \W[2]-hard in split graphs~\cite{raman2008short}. For the two variants IDS and EDS that are polynomial-time solvable in split graphs, we show that when parameterized by the deletion distance $k$ to split graphs,

\begin{itemize}
\item
IDS can be solved in $2^k n^{\OO(1)}$ time and provide an $\Omega(2^k)$ lower bound for any $\epsilon > 0$ assuming SETH.
We also show that IDS parameterized by $k$ has no polynomial kernel unless {\nka}.
\item
The $2^k$ barrier can be broken for EDS by designing an $3^{k/2}n^{\OO(1)}$ algorithm. This is one of the very few problems with a runtime better than $2^k n^{\OO(1)}$ in the realm of structural parameterization. We also show that no $2^{o(k)}n^{\OO(1)}$ algorithm is possible unless the ETH is false.
These results are discussed in Section~\ref{sec:eds-ids-svd}.
\end{itemize}

\begin{table}[t]
\begin{center}
\begin{tabular}{|c|c|c|c|c|}
\hline
    ~ & \multicolumn{2}{c|}{Cluster Deletion Set} & \multicolumn{2}{c|}{Split Deletion Set} \\
    \hline
    ~ & ~Algorithms~~ & ~~Lower Bounds~~ & ~~ Algorithms~ & ~~Lower Bounds\\
    \hline
    DS, DC, & $3^k n^{\OO(1)}$ $\star$ & $\Omega(2^k)$ and & -- & para-\NP-hard~\cite{raman2008short} \\
    TDS & ~ & No polynomial kernel $\star$ &  &~\\
    \hline
    IDS & $3^k n^{\OO(1)}$ $\star$ & $\Omega(2^k)$ and & $2^k n^{\OO(1)}$ $\star$ & $\Omega(2^k)$ and \\
     &  & No polynomial kernel $\star$ & ~ & No polynomial kernel $\star$\\
    \hline
    EDS & $3^k n^{\OO(1)}$ $\star$ & $2^{o(k)} n^{\OO(1)}$ $\star$  & $3^{k/2} n^{\OO(1)}$ $\star$ & $2^{o(k)} n^{\OO(1)}$ $\star$ \\
    \hline
    {\sc ThDS} & $(r+2)^k n^{\OO(1)}$ $\star$ & No polynomial kernel $\star$ & -- & para-{\NP}-hard~\cite{raman2008short} \\
\hline
\end{tabular}
\end{center}
\caption{Summary of results. See \ref{sec:problem-definitions} for problem definitions. Results marked $\star$ indicate our results.}
\label{table:nonlin}
\end{table}

\subsection{Related Work}
\label{sec:relatedwork}
Clique-width~\cite{courcelle2000upper} of a graph is a parameter that measures how close to a clique the graph is.
Courcelle et. al.~\cite{CMR2000} showed that for a graph with clique-width at most $k$, any problem expressible in $MSO_1$ (monadic second order logic of the first kind) has an FPT algorithm with $k$ as the parameter if a $k$-expression for the graph (a certificate showing that the clique-width of the graph is at most $k$) is also given as input. 
The clique-width of a graph that is $k$ away from a cluster graph can be shown to be $k+1$ (with a $k$-expression) and
all the dominating set variants discussed in the paper can be expressed in $MSO_1$ and hence can be solved in FPT time in such graphs. But the running time function $f(k)$ in Courcelle's theorem is huge (more than doubly exponential). 
Oum et al.~\cite{OSV2013} gave an $k^{O(k)} n^{\OO(1)}$ algorithm to solve the minimum dominating set for clique-width $k$ graphs without assuming that the $k$-expression is given.
There is a $4^{k} n^{\OO(1)}$ algorithm by Bodlaender et. al.~\cite{BLJRJV2010} for finding minimum dominating set in graphs with clique-width $k$ when the $k$-expression is given as input. It is easy to construct the $k$-expression for graphs $k$ away from a cluster graph and hence we have a $4^{k} n^{\OO(1)}$ algorithm. 
The algorithms we give in Section~\ref{sec:eds-ids-cvd}, not only improve the running time but also apply to other variants of dominating set.

\subsection{Techniques}
\label{sec:techniques}
The algorithms for {\sc Dominating Set} and its variants parameterized by deletion distance to cluster graphs involve reducing the problem to a variant of {\SetCover} where there is a partition of the family of sets. More specifically, in {\SCWP}, there is a universe $U$, a family $\FF$ of subsets of $U$, and a partition $\BB$ of $\FF$. The goal is to find a subfamily $\FF'$ such that it covers $U$, and from each part of the partition $\BB$, at least one set is picked. In the reduction, the universe corresponds to the modulator $S$ of the {\sc Dominating Set} instance, the family corresponds to the neighborhood of vertices in $V \setminus S$ in $S$, and the partition corresponds to the cliques in $G \setminus S$. Picking at least one set from each part implies that at least one vertex is picked in each clique, which would dominate the rest of the clique. The {\SCWP} problem is modified accordingly to solve other variants of dominating set.
%The {\SCWP} problem and its variants 
The problems are solved via dynamic programming similar to the $2^{|U|} |\FF|^{\OO(1)}$ algorithm for {\SetCover} \cite{FJR13} with appropriate modifications to satisfy the additional partition requirement.

The $2^{k} n^{\OO(1)}$ algorithms for EDS and IDS parameterized by deletion distance to split graphs are obtained from simple observations. The subsequent faster algorithm for EDS with $1.732^{k} n^{\OO(1)}$ running time is obtained by carefully tailored reduction and branching rules for the problem. Finally, all the lower bounds are obtained by giving parameterized reductions from problems with pre-existing lower bounds, assuming well-known conjectures. 

%\todo[inline]{Add later results, The Parameterized Complexity of Dominating Set and Friends Revisited for Structured Graphs, Neeldhara Misra Piyush Rathi}

\section{Preliminaries and Notations}
\label{sec:prelims}
%By $[n]$ we denote the set $\{1,2,\ldots,n\}$.
We use $[n]$ to denote the set $\{1,\dotsc,n\}$. We use standard terminologies of graph theory book by Diestel~\cite{Diestel}.
For a graph $G = (V, E)$ we use $n$ to denote the number of vertices and $m$ to denote the number of edges.
For a vertex $v \in V(G)$, we denote by $N_G(v) = \{(u \in V(G) | (u,v) \in E(G)\}$ the
open neighborhood of $v$.
When the graph is clear from the context, we drop the subscript $G$.
We denote by $N[v]$ the close neighborhood of $v$, i.e. $N[v] = N(v) \cup \{v\}$.
For $S \subseteq V(G)$, we define $N(S) = \{v \in V(G) | \exists u \in S$ such that $(u,v) \in E(G)\} \setminus S$ and $N[S] = N(S) \cup S$.

For vertices $u, v \in V(G)$, we denote by $dist_G(u,v)$, the {\em distance} between $u$ and $v$ in $G$, which is the length of the shortest path between $u$ and $v$ in $G$.
By $N^{=2}(v)$ we denote the set of vertices $u$ such that $dist_G(u,v) = 2$. For $S \subseteq V(G)$, we define $N^{=2}(S)$ as the set of vertices $u$  such that there exists a vertex $v \in S$ with $dist_G(u,v) = 2$, and for all $v' \in S, v' \neq v$, $dist_G(u,v') \geq 2$. 
For $S \subseteq V(G)$, we use $G[S]$ to denote the subgraph induced by $S$.
We say that for vertices $u,v \in V$, $u$ \textit{dominates} $v$ if $v \in N(u)$.

\subsection*{Problem Definitions}
\label{sec:problem-definitions}
\defparproblem{{\SetCover}}{A universe $U$ and a family $\FF \subseteq 2^{U}$ and an integer $k$}{$|U|$}{Are there at most $k$ sets $A_1,\ldots,A_k \in \FF$ such that $\bigcup\limits_{i=1}^k A_i = U$?}

We say that a subfamily $\FF' \subseteq \FF$ {\em covers} a subset $W \subseteq U$ if for every element $w \in W$, there exists some set in $\FF'$ containing $w$. 

\defparproblem{{\sc CNF-SAT}}{A boolean formula $\phi$ in conjunctive normal form with $n$ variables and $m$ clauses}{$n$}{Is there an assignment which evaluates $\phi$ to true?}

\defparproblem{{\sc $3$-CNF-SAT}}{A boolean formula $\phi$ in conjunctive normal form with $n$ variables and $m$ clauses such that every clause has at most three literals.}{$n$}{Is there an assignment which evaluates $\phi$ to true?} 

We formally define the dominating set problem variants in graphs that are $k$ away from a cluster/split/empty graph.

\defparproblem{{\sc (DS/EDS/IDS/DC/TDS/ThDS)-CVD}}{An undirected graph $G = (V, E), S \subseteq V(G)$ such that every component of $G \setminus S$ is a clique and an integer $\ell$.}{$|S|$}{Is there a dominating set/efficient dominating set/independent dominating set/dominating clique/total dominating set/threshold dominating set in $G$ of size at most $\ell$?}

\iffalse
\defparproblem{{\sc EDS-CVD}}{An undirected graph $G = (V, E), S \subseteq V(G)$ such that every component of $G \setminus S$ is a clique and an integer $\ell$.}{$|S|$}{Is there an efficient dominating set in $G$ of size at most $\ell$?}

\defparproblem{{\sc IDS-CVD}}{An undirected graph $G = (V, E), S \subseteq V(G)$ such that every component of $G \setminus S$ is a clique and an integer $\ell$.}{$|S|$}{Is there an independent dominating set in $G$ of size at most $\ell$?}

\defparproblem{{\sc TDS-CVD}}{An undirected graph $G = (V, E), S \subseteq V(G)$ such that every component of $G \setminus S$ is a clique and an integer $\ell$.}{$|S|$}{Is there a total dominating set in $G$ of size at most $\ell$?}

\defparproblem{{\sc ThDS-CVD}}{An undirected graph $G = (V, E), S \subseteq V(G)$ such that every component of $G \setminus S$ is a clique and integers $r$ and $\ell$.}{$|S|$}{Is there a threshold dominating set with threshold $r$ in $G$ of size at most $\ell$?}
\fi

\defparproblem{{\sc (EDS/IDS)-SVD}}{An undirected graph $G = (V, E), S \subseteq V(G)$ such that $G \setminus S$ is a split graph and an integer $\ell$.}{$|S|$}{Does $G$ have an efficient dominating set/independent dominating set/dominating clique of size at most $\ell$?}

%\defparproblem{{\sc IDS-SVD}}{An undirected graph $G = (V, E), S \subseteq V(G)$ such that $G \setminus S$ is a split graph and an integer $\ell$.}{$|S|$}{Does $G$ have an independent dominating set of size at most $\ell$?}

\defparproblem{{\sc (EDS/IDS)-VC}}{An undirected graph $G = (V, E), S \subseteq V(G)$ such that $S$ is a vertex cover and an integer $\ell$.}{$|S|$}
{Does $G$ have an efficient dominating set/independent dominating set of size at most $\ell$?}

%\defparproblem{{\sc IDS-VC}}{An undirected graph $G = (V, E), S \subseteq V(G)$ such that $S$ is a vertex cover and an integer $\ell$.}{$|S|$}{Does $G$ have an independent dominating set of size at most $\ell$?}

We also define a problem called Maximum Minimal Vertex Cover (MMVC) parameterized by deletion distance to cluster/empty graph. 

\defparproblem{{\sc MMVC-(VC/CVD)}}{An undirected graph $G = (V, E), S \subseteq V(G)$ such that $S$ is a vertex cover/cluster vertex deletion set and an integer $\ell$.}{$|S|$}
{Does $G$ have a minimal vertex cover with at least $\ell$ vertices?}

%\defparproblem{{\sc MMVC-CVD}}{An undirected graph $G = (V, E), S \subseteq V(G)$ such that $S$ is a cluster vertex deletion set of $G$ and an integer $\ell$.}{$|S|$}{Does $G$ have a minimal vertex cover with at least $\ell$ vertices?}

\subsection*{Parameterized Complexity Notions}
\label{sec:fpt-definitions}
\begin{definition}[Fixed Parameter Tractability]
Let $L \subseteq \Sigma^* \times \mathbb{N}$ be a parameterized language. $L$ is said to be {\em fixed parameter tractable} (or {\FPT}) if there exists an algorithm $\BB$, a constant $c$ and a computable function $f$ such that for all $x$, for all $k$,  $\BB$ on input $(x,k)$ runs in at most $f(k)\cdot|x|^c$ time and outputs $(x,k) \in L$ if and only if $\BB([x,k]) = 1$.
We call the algorithm $\BB$ as {\em fixed parameter algorithm} (or {\FPT} algorithm).
\end{definition}

\begin{definition}[Parameterized Reduction]
\label{defn:parameterized-reduction}
Let $P_1, P_2 \in \Sigma^* \times \mathbb{N}$ be two parameterized languages.
Suppose there exists an algorithm $\BB$ that takes input $(x,k)$ (an instance of $P_1$) and constructs an instance $(x',k')$ of $P_2$ such that the following conditions are satisfied.
\begin{itemize}
\item $(x,k)$ is a {\sc Yes-Instance} if and only if $(x',k')$ is a {\sc Yes-Instance}.
\item $k' \in f(k)$ for some function depending only on $k$.
\item Algorithm $\BB$ must run in $g(k) |x|^c$ time.
\end{itemize}
Then we say that there exists a {\em parameterized reduction} from $P_1$ to $P_2$.
\end{definition}

\noindent
{\bf \W-hierarchy:} In order to capture the parameterized languages being {\FPT} or not, the \W-hierarchy is defined as \FPT $\subseteq$ \W[1] $\subseteq \cdots \subseteq$ {\XP}.
It is believed that this subset relation is strict.
Hence a parameterized language that is hard for some complexity class above {\FPT} is unlikely to be {\FPT}.
The following Theorem \ref{thm:relation-hardness} gives the use of {\em parameterized reduction}.
If a parameterized language $L \subseteq \Sigma^* \times \mathbb{N}$ can be solved by an algorithm running in time $\OO(n^{f(k)})$, then we say $L \in \XP$.
In such a situation we also say that $L$ admits an {\XP} algorithm.

\begin{definition}[para-\NP-hardness]
\label{defn:para-np-hardness}
A parameterized language $L \subseteq \Sigma^* \times \mathbb{N}$ is called para-{\NP}-hard if it is \NP-hard for some constant value of the parameter.
\end{definition}
It is believed that a para-{\NP}-hard problem is not expected to admit an {\XP} algorithm as otherwise it will imply {\sf P} $=$ {\NP}.

%Definition of $\XP$ is given after that.

\begin{theorem}
\label{thm:relation-hardness}
Let there be a parameterized reduction from parameterized problem $P_1$ to parameterized problem $P_2$.
Then if $P_2$ is fixed-parameter tractable so is $P_1$.
Equivalently if $P_1$ is \W[i]-hard for some $i \geq 1$, then so is $P_2$.
\end{theorem}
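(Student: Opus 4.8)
The plan is to prove the theorem by \emph{composition}: the first assertion follows by running a hypothetical \FPT\ algorithm for $P_2$ on top of the reduction, and the second follows once we observe that parameterized reductions compose.

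For the forward direction, suppose $P_2$ is fixed-parameter tractable, witnessed by an algorithm running in time $f_2(k')\cdot|x'|^{c_2}$ on an instance $(x',k')$, and let $\BB$ be the algorithm realizing the parameterized reduction from $P_1$ to $P_2$ from Definition~\ref{defn:parameterized-reduction}: on input $(x,k)$ it outputs an equivalent instance $(x',k')$ with $k'\le f(k)$ in time $g(k)\cdot|x|^{c}$. The one point to be careful about is that, since an algorithm cannot write more output than its running time, the produced instance satisfies $|x'|\le g(k)\cdot|x|^{c}$. I would then define the composed algorithm for $P_1$: on $(x,k)$, first run $\BB$ to get $(x',k')$, then run the \FPT\ algorithm for $P_2$ on $(x',k')$ and return its verdict. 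Correctness is immediate from the ``{\sc Yes}-instance iff {\sc Yes}-instance'' property of the reduction. Its running time is at most $g(k)\cdot|x|^{c}+f_2(f(k))\cdot\bigl(g(k)\cdot|x|^{c}\bigr)^{c_2}$; collecting the parameter-dependent quantities into a single computable function $h(k):=g(k)+f_2(f(k))\cdot g(k)^{c_2}$ and setting $c':=c\,c_2$, this is bounded by $h(k)\cdot|x|^{c'}$, so $P_1\in\FPT$.

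For the hardness transfer, I would first record the routine fact that the composition of two parameterized reductions is again a parameterized reduction: chaining a reduction from a problem $Q$ to $P_1$ (parameter bound $f_1$, time $g_1(k)|x|^{c_1}$) with the given reduction from $P_1$ to $P_2$ yields a reduction from $Q$ to $P_2$ whose new parameter is bounded by a function of $k$ alone (a composition of $f_1$ with the parameter bound of the $P_1\to P_2$ reduction) and whose total running time is again of the form $g''(k)|x|^{c''}$, by the same size bookkeeping used above; equivalence of {\sc Yes}-instances is transitive. Since $P_1$ being $\W[i]$-hard means that every problem in $\W[i]$ parameterized-reduces to $P_1$, composing each such reduction with the $P_1\to P_2$ reduction shows that every problem in $\W[i]$ reduces to $P_2$, i.e.\ $P_2$ is $\W[i]$-hard.

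The ``equivalently'' phrasing then merely repackages these two facts: were $P_2$ fixed-parameter tractable, the forward direction would force $P_1\in\FPT$, which, together with the belief $\FPT\subsetneq\W[i]$, contradicts $\W[i]$-hardness of $P_1$. The main obstacle is essentially bookkeeping --- making sure the composed time bound is honestly of \FPT\ form --- and the only substantive ingredient is the bound $|x'|\le g(k)\cdot|x|^{c}$ on the size of the instance produced by the reduction, which is exactly what makes that bookkeeping go through.
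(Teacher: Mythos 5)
Your proof is correct and is the canonical textbook argument. The paper itself states Theorem~\ref{thm:relation-hardness} without proof, treating it as a known preliminary fact and deferring to the standard references, so there is no in-paper proof to compare against; your write-up supplies exactly the argument those references give, including the one point that actually matters --- the bound $|x'|\le g(k)\cdot|x|^{c}$ on the output size of the reduction, which is what keeps the composed running time of \FPT\ form. (The only cosmetic gap is that one should take $f_2$ to be nondecreasing, or replace $f_2(f(k))$ by $\max_{k''\le f(k)}f_2(k'')$, which is standard and harmless; also note that the two assertions of the theorem are not literal contrapositives of one another --- the hardness transfer genuinely needs the composition-of-reductions observation you give, so it is right that you prove it separately rather than only via the ``equivalently'' remark.)
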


A closely related notion to fixed-parameter tractability is the notion of kernelization defined below.

\begin{definition}[Kernelization]
Let $L \subseteq \sum^* \times \mathbb{N}$ be a parameterized language. Kernelization is a procedure that replaces the input instance $(x,k)$ by a reduced instance $(x^\prime, k^\prime)$ such that
\begin{itemize}
\item $k^\prime \leq f(k)$, $\vert x^\prime \vert \leq g(k)$ for some functions $f, g$ depending only on $k$.
\item $(x, k) \in L$ if and only if $(x^\prime, k^\prime) \in L$.
\end{itemize} 
The reduction from $(x, k)$ to $(x^\prime, k^\prime)$ must be computable in $poly(\vert x \vert + k)$ time. If $g(k) = k^{\OO(1)}$ then we say that $L$ admits a {\em polynomial kernel}.
\end{definition}

It is well-known that a decidable parameterized problem is fixed-parameter tractable if and only if it has a kernel.
However, the kernel size could be exponential (or worse) in the parameter.
There is a hardness theory for problems having polynomial sized kernel.
Towards that, we define the notion of polynomial parameter transformation.

\begin{definition}[Polynomial parameter transformation (PPT)]
\label{pptdefinition}
Let $P_1$ and $P_2$ be two parameterized languages. We say that $P_1$ is polynomial parameter reducible to $P_2$ if there exists a polynomial time computable function (or algorithm) $f:\sum^* \times \mathbb{N} \rightarrow \sum^* \times \mathbb{N}$, a polynomial $p:\mathbb{N} \rightarrow \mathbb{N}$ such that
$(x,k) \in P_1$ if and only if $f(x, k) \in P_2$ and $k' \leq p(k)$ where $f((x,k)) = (x', k')$. We
call $f$ to be a polynomial parameter transformation from $P_1$ to $P_2$.
\end{definition}

The following proposition gives the use of the polynomial parameter transformation for obtaining kernels for one problem from another.

\begin{prop}[\cite{bodlaender2011kernel}]
\label{pptProperty}
Let $P, Q \subseteq \Sigma^* \times {\mathbb N}$ be two parameterized problems and assume that there exists a PPT  from $P$ to $Q$. Furthermore, assume that the classical version of $P$ is {\NP}-hard and $Q$ is in {\NP}. Then if $Q$ has a polynomial kernel then $P$ has a polynomial kernel.
\end{prop}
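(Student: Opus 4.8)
The plan is to chain together three polynomial-time maps: the given PPT from $P$ to $Q$, the assumed polynomial kernelization of $Q$, and a classical many-one (Karp) reduction from $Q$ back to $P$, the last of which exists because $Q \in \NP$ and the classical version of $P$ is $\NP$-hard. With a suitable choice of parameter on the final output, this composition will be the desired polynomial kernelization for $P$.

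Concretely, I would start from an instance $(x,k)$ of $P$ and apply the PPT $f$ to obtain an instance $(x',k')$ of $Q$ with $k' \le p(k)$, in time polynomial in $|x|+k$ and with $(x,k) \in P$ iff $(x',k') \in Q$. Then I would run the polynomial kernelization of $Q$ on $(x',k')$ to get an equivalent $Q$-instance $(x'',k'')$ with $|x''| + k'' \le q(k')$ for some polynomial $q$; since $k' \le p(k)$, this gives $|x''| + k'' \le q(p(k))$, a bound depending on $k$ only. The essential gain of this step is that the size of the instance is now independent of $|x|$.

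It remains to convert the $Q$-instance $(x'',k'')$ back into an instance of $P$ without blowing up the size, and this is where the complexity hypotheses enter. Since $Q \in \NP$ and the classical version of $P$ is $\NP$-hard, there is a polynomial-time Karp reduction $g$ from the classical language underlying $Q$ to the one underlying $P$; applying $g$ to an encoding of $(x'',k'')$ produces a string $y$ with $|y| \le r(|x''| + k'')$ for a polynomial $r$, satisfying $y \in P$ (as a classical instance) iff $(x'',k'') \in Q$. The kernelization for $P$ then outputs the parameterized instance $(y, |y|)$. Tracing the chain of equivalences yields $(x,k) \in P \iff (y,|y|) \in P$; the whole computation runs in time polynomial in $|x| + k$; and both $|y|$ and the new parameter $|y|$ are bounded by $r(q(p(k)))$, which is polynomial in $k$ — precisely a polynomial kernel for $P$.

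The argument carries no genuine obstacle; the only point that needs care is recognizing that kernelizing $Q$ alone does not suffice, since it leaves an instance of $Q$ rather than of $P$, and that one must pass back to $P$ through the Karp reduction guaranteed by the hypotheses, taking the parameter of the resulting $P$-instance to be its own length so that the polynomial parameter bound follows immediately from the polynomial size bound.
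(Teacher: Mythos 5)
The paper states this proposition as a cited result and gives no proof of its own; your argument is the standard proof from the cited reference (compose the PPT, the assumed polynomial kernel for $Q$, and a Karp reduction from $Q$ back to $P$, which exists because $Q\in\NP$ and $P$ is \NP-hard), and it is correct. The only cosmetic point is that the Karp reduction lands in the \emph{classical} version of $P$, so the final output should be the instance--parameter pair decoded from $y$ (whose total size is at most $r(q(p(k)))$) rather than literally $(y,|y|)$; this does not affect the argument.
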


We use the following conjectures and theorems to prove some of our lower bounds.

\begin{conjecture}[Strong Exponential Time Hypothesis (SETH)~\cite{IPZ01}]
\label{thm:seth}
There is no $\epsilon > 0$ such that $\forall q \geq 3$, {\sc $q$-CNFSAT} can be solved in $(2-\epsilon)^{n} n^{\OO(1)}$ time where $n$ is the number of variables in input formula.
\end{conjecture}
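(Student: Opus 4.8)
The final displayed statement is the Strong Exponential Time Hypothesis, which is a \emph{conjecture}, not a theorem, so there is nothing to prove here in the usual sense: SETH is an unproven complexity-theoretic assumption, and deciding it in either direction is a major open problem. What I can offer is a description of the surrounding landscape and of what a resolution would require, together with the role the statement plays in this paper.

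To \emph{refute} SETH one would have to produce a single constant $\epsilon > 0$ together with an algorithm solving $q$-CNF-SAT in time $(2-\epsilon)^{n} n^{\OO(1)}$ for \emph{every} $q \ge 3$. The natural attack is to push the known exponential-time SAT solvers further: PPSZ-style algorithms already solve $3$-CNF-SAT in roughly $1.308^{n}$ time, and local-search and branching (measure-and-conquer) methods give comparable savings, but in every known approach the multiplicative saving over $2^{n}$ shrinks to zero as the clause width $q$ grows. The crux --- and the reason SETH has resisted refutation --- is exactly this non-uniformity in $q$: one would need a technique whose savings do \emph{not} degrade with $q$, which no current method (random restrictions, the polynomial method, width-independent local search) provides. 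A \emph{proof} of SETH, conversely, would entail an unconditional exponential lower bound against arbitrary algorithms for CNF-SAT, well beyond the reach of present circuit- and proof-complexity lower-bound techniques; so I would not expect either direction to be settled by elementary means.

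For the purposes of the present paper the situation is simpler: SETH is used only as a hypothesis from which conditional lower bounds are derived, so the ``justification'' of this statement is the citation to \cite{IPZ01}. The plan in the later sections is not to argue for SETH but to give parameterized reductions (in the sense of Definition~\ref{defn:parameterized-reduction}) from $q$-CNF-SAT --- or from an intermediate SETH-hard problem such as {\sc Dominating Set} parameterized by the number of vertices --- to the dominating-set variants studied here, in such a way that an improved, say $(2-\epsilon)^{k} n^{\OO(1)}$-time, algorithm for a variant like IDS-SVD would yield a $(2-\epsilon')^{n} n^{\OO(1)}$-time algorithm for CNF-SAT and hence contradict the conjecture. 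In other words, the real work lies in the reduction chains of Sections~\ref{sec:clusterlowerbounds} and~\ref{sec:eds-ids-svd}, not in any argument about the hypothesis itself; the main obstacle to a genuine proof of the conjecture is precisely the one flagged above, and it is not one this paper attempts to overcome.
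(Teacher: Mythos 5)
You are right that this statement is a conjecture with no proof in the paper — it is simply stated with a citation to \cite{IPZ01} and used purely as a hypothesis for the conditional lower bounds in Sections~\ref{sec:clusterlowerbounds} and~\ref{sec:eds-ids-svd}. Your assessment matches the paper's treatment exactly, so there is nothing further to compare.
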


\begin{conjecture}[Exponential Time Hypothesis (ETH)~\cite{IPZ01,IP01}]\label{thm:eth}
{\sc $3$-CNF-SAT} cannot be solved in $2^{o(n)} n^{\OO(1)}$ time where the input formula has $n$ variables and $m$ clauses.
\end{conjecture}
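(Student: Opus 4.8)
The plan is to be candid about the logical status of this statement: it is the Exponential Time Hypothesis, a \emph{conjecture} adopted as a hardness assumption, not a theorem with an unconditional proof. Any genuine proof of the displayed statement would have to exhibit a lower bound holding against \emph{every} algorithm for {\sc $3$-CNF-SAT}, and in particular would yield $\fP \neq \NP$ (since a polynomial-time algorithm is trivially $2^{o(n)}$). No such argument is within reach of present techniques, so I would not attempt to derive the conjecture; instead I would treat ``establishing'' it as the two tasks one can actually carry out --- pinning down the precise content of the phrasing, and assembling the evidence that warrants assuming it --- while locating exactly where an honest proof breaks down.

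First I would verify that the statement is well posed, i.e.\ that it does not depend on whether one measures size by the number of variables $n$ or by $n+m$. The tool is the Sparsification Lemma of Impagliazzo, Paturi and Zane~\cite{IPZ01}: for every $\epsilon>0$ a $3$-CNF formula on $n$ variables can be rewritten, in $2^{\epsilon n} n^{\OO(1)}$ time, as a disjunction of at most $2^{\epsilon n}$ formulas each having $\OO(n)$ clauses. Using this I would show that a hypothetical $2^{o(n)}$ algorithm is equivalent to a $2^{o(n+m)}$ one, so the clause count $m$ in the statement is harmless and the sparse regime $m=\OO(n)$ is the hardest. The same lemma also lets me record that the conjecture is implied by the stronger SETH of Conjecture~\ref{thm:seth}: if $3$-SAT were subexponential then, reducing sparse $q$-SAT to sparse $3$-SAT with only a constant-factor increase in the number of variables, every $q$-SAT would be subexponential as well, contradicting SETH.

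The main obstacle --- and the reason this cannot be promoted to a theorem --- is precisely the step one would most want: eliminating \emph{all} conceivable subexponential algorithms for $3$-SAT. That is an unconditional lower bound of exactly the sort that the known meta-barriers (relativization, the natural-proofs barrier, and algebrization) are understood to obstruct, so I expect no frontal attack to succeed. I would therefore close by recording the positive evidence rather than a derivation: despite decades of effort the best algorithms for $q$-SAT run in time $2^{(1-c/q)n}$ with the savings vanishing as $q$ grows, none of them subexponential, and ETH follows from the equally well-tested SETH. For the purposes of this paper the statement functions only as a premise for conditional lower bounds --- for instance the ruling out of a $2^{o(k)}n^{\OO(1)}$ algorithm for EDS --- so what the downstream arguments require is not a proof of ETH but parameterized reductions from $3$-SAT, which is where the real work will lie.
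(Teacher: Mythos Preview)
Your assessment is correct and matches the paper's treatment: the statement is a \emph{conjecture}, not a theorem, and the paper offers no proof --- it simply states ETH as a hardness assumption to be used downstream. Your additional discussion of the Sparsification Lemma, the relation to SETH, and the barriers to an unconditional proof goes well beyond what the paper does (which is nothing at all here), but the essential point --- that no proof is expected or possible with current techniques --- is exactly right.
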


We have the following theorem giving an algorithm for {\SetCover} parameterized by the size of the universe.

\begin{theorem}[Theorem 3.10. in \cite{FK2010}]
\label{thm:set-cover-para-universe}
The {\SetCover} problem can be solved in $2^{n} (m+n)^{\OO(1)}$ time where $n$ is the size of the universe and $m$ is the size of the family of subsets of the universe.
\end{theorem}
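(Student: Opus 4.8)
The plan is to prove the cited {\SetCover} bound by a straightforward dynamic programming over all subsets of the universe $U$, in the spirit of the $2^{|U|}|\FF|^{\OO(1)}$ algorithm alluded to in Section~\ref{sec:techniques}. Write $n = |U|$ and $m = |\FF|$.

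First I would define, for every subset $W \subseteq U$, the quantity $\mathrm{opt}(W)$ equal to the minimum number of sets of $\FF$ whose union contains $W$, with $\mathrm{opt}(W) = +\infty$ if no subfamily of $\FF$ covers $W$. The answer to the instance is then ``yes'' iff $\mathrm{opt}(U) \le k$. The base case is $\mathrm{opt}(\emptyset) = 0$, and for nonempty $W$ I would use the recurrence
\[
\mathrm{opt}(W) \;=\; 1 + \min_{\substack{A \in \FF \\ A \cap W \neq \emptyset}} \mathrm{opt}(W \setminus A),
\]
with the convention that the minimum over the empty index set is $+\infty$ (which occurs exactly when no set of $\FF$ meets $W$, i.e.\ $W$ is uncoverable). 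Correctness of the recurrence is immediate: in any optimal cover of a nonempty $W$, no set is disjoint from $W$ (it could be discarded), so pick a fixed $w \in W$ and the set $A$ of the cover containing it; the remaining sets then cover $W \setminus A$ using one fewer set, giving $\mathrm{opt}(W) \ge 1 + \mathrm{opt}(W\setminus A)$; conversely, for any $A$ with $A \cap W \neq \emptyset$, adjoining $A$ to an optimal cover of $W \setminus A$ covers $W$, giving the reverse inequality. A short induction on $|W|$ then establishes that the table is filled correctly, provided the subsets are processed in order of nondecreasing cardinality — the crucial point being that $A \cap W \neq \emptyset$ forces $|W \setminus A| < |W|$, so every entry referenced on the right-hand side is already computed.

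For the running time: there are $2^{n}$ subsets $W$; for each we scan the $m$ sets of $\FF$, and each step (forming $W \setminus A$, a table lookup, a comparison) costs $(m+n)^{\OO(1)}$ using bitmask representations of subsets. Hence the total time and space are $2^{n}(m+n)^{\OO(1)}$. If an explicit minimum cover is wanted rather than just its size, one stores a back-pointer to the minimizing $A$ at each state and unwinds it, all within the same bounds.

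I do not expect a genuine obstacle here; the only points needing a little care are (i) handling uncoverable subsets cleanly via the $+\infty$ convention so the recurrence never dereferences an undefined entry, and (ii) ensuring the evaluation order (increasing $|W|$) is compatible with the recurrence, which follows from the strict cardinality drop noted above.
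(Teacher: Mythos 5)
Your proof is correct. Note that the paper does not prove this statement at all --- it is imported verbatim as Theorem 3.10 of Fomin and Kratsch's book --- so the only comparison available is with that source and with the paper's own generalization in Lemma~\ref{lemma:set-cover-partiton-fpt}: both of those use the subset dynamic programming with states indexed additionally by a prefix $\{S_1,\dotsc,S_j\}$ of the family (an include/exclude recurrence on $S_{j+1}$), whereas you index by $W$ alone and minimize over all sets meeting $W$; this is an inessential variation that yields the same $2^{n}(m+n)^{\OO(1)}$ bound by the same counting.
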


We have the following conjecture corresponding to the above theorem.

\begin{conjecture}[Set Cover Conjecture (SCC)~\cite{cygan2016problems}]
\label{thm:scc}
There is no $\epsilon > 0$ such that {\sc SET COVER} can be solved in $(2-\epsilon)^{n} (m+n)^{\OO(1)}$ time where $n$ is the size of the universe and $m$ is the size of the family of subsets of the universe.
\end{conjecture}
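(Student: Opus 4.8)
The statement above is a \emph{conjecture} --- a fine-grained hardness hypothesis in the spirit of SETH and ETH --- so there is nothing to prove in the usual sense; a proof would resolve a long-standing barrier in exponential-time algorithms, and this paper does not attempt it. What I would do here instead is record why this is the natural assumption to state at this point and, more importantly, how it is going to be used later, since that is the role it plays in the rest of the paper.

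On the algorithmic side, the conjecture is the exact converse of the textbook algorithm recalled in Theorem~\ref{thm:set-cover-para-universe}: with a universe $U$ of size $n$ and a family $\FF$ of size $m$, fill a table $T[X]$ for $X\subseteq U$, where $T[X]$ is the least number of members of $\FF$ whose union contains $X$, using $T[\emptyset]=0$ and $T[X]=1+\min_{A\in\FF}T[X\setminus A]$ in order of increasing $|X|$, and answer ``yes'' iff $T[U]\le \ell$; this runs in $2^{n}(m+n)^{\OO(1)}$ time. The conjecture asserts that the base $2$ here is optimal, exactly as SETH asserts for exhaustive search on {\sc CNF-SAT}. I would present it in that light, and stress that, unlike the ETH/SETH picture, there is no known reduction deriving this lower bound from SETH, which is precisely why a separate hypothesis is needed.

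The substantive use comes in the lower-bound sections. To obtain the claimed $(2-\epsilon)^{k}n^{\OO(1)}$ lower bounds for DS, DC and TDS parameterized by the size $k$ of a cluster vertex deletion set, the plan is to give a polynomial-time, polynomial-parameter reduction from \SetCover{} on universe $[n]$ to the corresponding dominating-set variant on a graph $G$ whose CVD-modulator $S$ has size $n+\OO(1)$: identify the universe with $S$, turn each set $A\in\FF$ into a vertex of $V(G)\setminus S$ whose neighbourhood inside $S$ is exactly $A$, and place the vertices coming from one ``selection gadget'' into a common clique of $G\setminus S$, so that covering $U$ and dominating $S$ become the same condition (the correspondence from Section~\ref{sec:techniques}). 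Composing a hypothetical $(2-\epsilon)^{k}n^{\OO(1)}$ algorithm for the variant with this reduction would solve \SetCover{} in $(2-\epsilon')^{n}(m+n)^{\OO(1)}$ time, contradicting the conjecture; and since the reduction is a polynomial-parameter transformation whose source problem is {\NP}-hard, the same construction rules out polynomial kernels for these variants under \nka{} via Proposition~\ref{pptProperty}.

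\medskip
\noindent\textbf{Main obstacle.} Since the conjecture itself carries no proof obligation, the delicate point is entirely in that later application: the reduction must keep the modulator size equal to the universe size up to an additive constant, because a multiplicative blow-up $k=c\cdot n$ with $c>1$ would weaken the conclusion from the sharp $(2-\epsilon)^{k}$ to the vacuous $(2-\epsilon)^{cn}$. Designing the selection and domination gadgets for DC (where the solution must induce a clique) and for TDS (where solution vertices themselves need external dominators) so that $k$ is not inflated is the step I expect to be hardest.
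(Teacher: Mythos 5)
You are right that this is a conjecture imported from~\cite{cygan2016problems} as a hardness hypothesis, and the paper accordingly offers no proof of it; your account of its role (complementing Theorem~\ref{thm:set-cover-para-universe} and feeding the reduction of Lemma~\ref{lemma:general-about-ds-and-tds-redn} to obtain Theorem~\ref{theorem:ds-cvd-lb}) matches what the paper actually does. Nothing further is needed here.
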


\begin{theorem}[\cite{dom2014kernelization}]
\label{theorem:no-poly-kernel-set-cover}
{\SetCover} parameterized by the universe size does not admit any polynomial kernel unless {\nka}.
\end{theorem}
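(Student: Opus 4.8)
This statement is quoted from \cite{dom2014kernelization}; here is how I would go about establishing it. The plan is to rule out a polynomial kernel by exhibiting an \emph{OR-cross-composition} from an \NP-hard problem into {\SetCover} parameterized by the universe size, and then invoke the standard cross-composition machinery: if an \NP-hard language OR-cross-composes into a parameterized problem, that problem has no polynomial kernel unless {\nka}. Concretely, I need a polynomial-time transformation that takes $t$ instances of the source problem, all of comparable size, and outputs a single {\SetCover} instance whose universe has size bounded by a polynomial in (the largest input size) $+\log t$, and which is a {\sc Yes-Instance} if and only if at least one of the $t$ source instances is. (This is the composition-based counterpart of the PPT-based transfer in Proposition~\ref{pptProperty}.)

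For the source I would take {\SetCover} itself --- or, more conveniently, a colored variant in which the family is partitioned into color classes and a feasible solution must contain exactly one set per class --- which is \NP-hard. First I would normalize the input: using a polynomial equivalence relation that buckets instances according to universe size and target value, and relabelling the universe, I may assume the $t$ inputs are $(U,\FF_1,k),\dots,(U,\FF_t,k)$ over a common universe $U=[n]$ with a common target $k$; padding with trivial {\sc No-Instances}, I may also assume $t=2^{s}$, and I assign each instance $j$ a distinct identifier $\mathrm{id}(j)\in\{0,1\}^{s}$.

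The core of the construction is an \emph{identifier gadget}. I would build the composed universe $U^{\star}=U\cup Z$ with $|Z|=\OO(s)=\OO(\log t)$ new elements; place each set $A\in\FF_j$ into the composed family in a form \emph{tagged} by $\mathrm{id}(j)$, i.e.\ extended by a prescribed subset of $Z$ determined by the bits of $\mathrm{id}(j)$; add a small number of auxiliary sets; and set the target to $k^{\star}=k+\OO(s)$ so that: (i) any cover of $U$ of size $k$ drawn entirely from a single $\FF_j$ extends, using the auxiliary sets, to a cover of $U^{\star}$ of size $k^{\star}$; and (ii) conversely, in any cover of $U^{\star}$ of size $k^{\star}$, the requirement of covering all of $Z$ within budget forces the tagged sets used to carry a common identifier, hence to come from a single $\FF_j$, and those sets already cover $U$ with at most $k$ of them. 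Since $|U^{\star}|=n+\OO(\log t)$ is polynomially bounded, the cross-composition is valid and the lower bound follows.

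The main obstacle is precisely property (ii): designing $Z$, the tags, the auxiliary sets, and the budget $k^{\star}$ so that covering $Z$ cheaply \emph{forces consistency} of the chosen identifiers --- without the budget slack, or some ``universal'' auxiliary set, letting the solver combine tagged sets from several instances. This is exactly what the colors-and-IDs technique of \cite{dom2014kernelization} is built to do: the colored structure pins down the shape of a solution (so one can predict how many sets of each type it must use and reserve budget accordingly), and the identifier elements of $Z$, through their interaction with the colors, act as consistency checks that get violated --- leaving some element of $Z$ uncovered under the tight budget --- as soon as two distinct identifiers appear among the selected sets. Verifying these two implications, together with the polynomial running time and the parameter bound of the transformation, completes the argument.
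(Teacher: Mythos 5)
First, a point of reference: the paper does not prove this statement at all --- it is imported verbatim from \cite{dom2014kernelization} with a citation, so there is no in-paper proof to compare against, and any assessment has to be against the known argument in that reference. Your high-level plan (a composition/cross-composition into {\SetCover} with universe size as the parameter, routed through a colored variant and an identifier gadget, then transferred via the Fortnow--Santhanam machinery) is indeed the right skeleton, and matches the colors-and-IDs strategy of the cited paper.

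However, there is a genuine gap, and it sits exactly where you park it: property (ii), the consistency-forcing gadget, is not constructed, and the quantitative shape you commit to --- $|Z|=\OO(\log t)$ new elements, budget $k^{\star}=k+\OO(\log t)$, sets tagged by their own identifier --- does not work. If each set of $\FF_j$ is simply extended by elements encoding $\mathrm{id}(j)$ and slack budget plus auxiliary sets are used to mop up the remaining identifier elements, then nothing prevents a cover of $U^{\star}$ from mixing sets of several instances: the identifier elements only get \emph{easier} to cover when more distinct identifiers appear, so no element of $Z$ is ever left uncovered by an inconsistent selection. The known construction avoids this by first passing to a colored problem in which a solution must contain exactly one set of each of $k$ colors (so the ``slots'' of the solution are known in advance), and then placing, for every \emph{ordered pair} of colors and every bit position, an identifier element coverable only by sets of those two colors and only under complementary bit values; agreement of all $k$ identifiers is then forced because any disagreement leaves one such element uncovered, with the budget held at exactly $k$ by the colors rather than by slack. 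This uses on the order of $k^{2}\log t$ identifier elements --- still polynomial in the maximum instance size plus $\log t$, so the lower bound goes through, but it is a materially different (and larger) gadget than the $\OO(\log t)$-element, budget-slack design you describe. As written, your sketch defers precisely the step that makes the theorem true.
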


\begin{theorem}[\cite{FortnowS11}]
\label{theorem:no-poly-kernel-cnf-sat}
{\sc CNF-SAT} parameterized by the number of variables admits no polynomial kernel unless {\nka}.
\end{theorem}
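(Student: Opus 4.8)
The plan is to invoke the standard machinery for ruling out polynomial kernels: if a parameterized problem admits an \emph{or-composition} algorithm and its classical (unparameterized) version is {\NP}-hard, then it has no polynomial kernel unless {\nka} (this combines the framework of~\cite{bodlaender2011kernel} with the distillation lower bound of~\cite{FortnowS11}). Since the classical version of {\sc CNF-SAT} is the {\NP}-complete problem {\sc SAT}, it suffices to exhibit an or-composition algorithm for {\sc CNF-SAT} parameterized by the number of variables $n$; that is, an algorithm that, given $t$ instances $\phi_1,\dots,\phi_t$ all on the same number $n$ of variables, runs in time polynomial in $\sum_i|\phi_i|$ and outputs a single CNF formula $\Phi$ on $p(n)$ variables (for a fixed polynomial $p$, \emph{independent of} $t$) that is satisfiable if and only if some $\phi_j$ is satisfiable.

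The construction I would use proceeds by a case split on $t$. If $t > 2^n$, then $n \le \log_2 t$, so each $\phi_j$ can be decided by brute force over its $2^n$ assignments in time polynomial in the total input size; in that case output a fixed satisfiable or unsatisfiable formula according to the answer. Otherwise $t \le 2^n$. Pad the list with copies of the unsatisfiable formula $(x_1)\wedge(\neg x_1)$ so that $t = 2^s$ with $s = \lceil\log_2 t\rceil \le n$ (the number of padding formulas is less than $t$, so this stays polynomial, and it uses no new variables). Introduce $s$ fresh \emph{selector} variables $y_1,\dots,y_s$; their $2^s$ truth assignments index the formulas. For each $j\in[t]$ with binary encoding $b_1\cdots b_s$ and each clause $C$ of $\phi_j$, put into $\Phi$ the clause $C \vee z_1 \vee \cdots \vee z_s$, where $z_i = \neg y_i$ if $b_i=1$ and $z_i = y_i$ if $b_i=0$. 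Under a selector assignment encoding $j$, all $z_i$ evaluate to false, so $\Phi$ forces every clause of $\phi_j$ to be satisfied by the shared assignment to $x_1,\dots,x_n$; under a selector assignment encoding $j'\ne j$, some $z_i$ is true, so every clause originating from $\phi_j$ is vacuously satisfied. Hence $\Phi$ is satisfiable iff some $\phi_j$ is, $\Phi$ has $n+s \le 2n$ variables, and it has $\sum_j|\phi_j|$ clauses, so this is a valid or-composition. Feeding this into the meta-theorem above yields the claim.

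The step needing the most care is the large-$t$ case: the or-composition framework requires the output parameter to be bounded by a fixed polynomial in the common input parameter $n$, independent of the number $t$ of composed instances, whereas the naive selector gadget uses $n+\lceil\log_2 t\rceil$ variables — unbounded in terms of $n$ alone. The escape hatch of solving everything by brute force when $t>2^n$ is precisely what pins the variable count to $O(n)$; once that is in place, verifying equisatisfiability of $\Phi$ and the polynomial running time is routine. (An alternative route would be a polynomial-parameter transformation, via Proposition~\ref{pptProperty}, from a problem already known to have no polynomial kernel; but the direct or-composition is cleaner here since the obvious candidates, such as {\SetCover} parameterized by universe size in Theorem~\ref{theorem:no-poly-kernel-set-cover}, do not transform into {\sc CNF-SAT} with only $O(n)$ variables without essentially re-proving the composition.)
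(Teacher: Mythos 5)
Your argument is correct, but there is nothing in the paper to compare it against: Theorem~\ref{theorem:no-poly-kernel-cnf-sat} is imported as a black box from Fortnow and Santhanam~\cite{FortnowS11}, and the paper offers no proof of it. What you have written is a sound, self-contained reconstruction of the standard derivation: the classical version of {\sc CNF-SAT} is \NP-complete, and your selector-variable gadget is a valid or-composition --- the equisatisfiability check goes through (a satisfying assignment of $\Phi$ must select some index, padding indices force a contradiction via $(x_1)\wedge(\neg x_1)$, and the selected real formula is then satisfied by the shared $x$-assignment), and the case split at $t > 2^n$ correctly caps the output parameter at $n + \lceil \log_2 t\rceil \le 2n$ independently of $t$. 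Combined with the composition-to-distillation meta-theorem (Bodlaender--Downey--Fellows--Hermelin together with the distillation lower bound of~\cite{FortnowS11}), this yields the claim. Two small remarks: the framework reference you want for the composition meta-theorem is the Bodlaender et al.\ composition paper rather than~\cite{bodlaender2011kernel}, which the present paper cites only for the polynomial parameter transformation of Proposition~\ref{pptProperty}; and you correctly identified the one step that is easy to get wrong, namely that the naive $n+\lceil\log_2 t\rceil$ variable count is not by itself bounded in $n$ without the brute-force escape for large $t$.
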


We refer to \cite{CFKLMPPS15,DF95} for more details related to parameterized complexity, W-hardness, and kernelization.

\section{Dominating Set Variants parameterized by CVD Size}
\label{sec:eds-ids-cvd}
\subsection{Upper Bounds}
\label{sec:cluster-upper-bounds}

In cluster graphs, a dominating set simply picks an arbitrary vertex from each clique. This dominating set is also efficient and independent. If there is only one clique in the cluster graph, the dominating set is also a dominating clique. Otherwise, we can conclude that the graph has no dominating clique. For threshold dominating set with threshold $r$, if there is a clique with at most $r$ vertices, we can conclude there is no threshold dominating set with threshold $r$ as even after picking all the vertices of the clique, a vertex in the clique is has at most $r-1$ neighbors. If every clique is of size at least $r+1$, we arbitrarily pick $r+1$ vertices from every clique so that every vertex has $r$ neighbors excluding itself. When $r=1$, such a set is a total dominating set.

We can assume that a cluster vertex deletion set $S$ of size $k$ is given with the input graph $G$. If not, we can use the algorithm by Boral et al. \cite{BCKP16} that runs in $1.92^{k} n^{\OO(1)}$ time that either outputs a CVD set of size at most $k$ or says that no such set exists.

\subsubsection{Dominating Set}
\label{sec:dom-set-cluster-upper-bounds}

In this section, we give an FPT algorithm for {\sc DS-CVD}.
%there was a linespace
%We formally define the dominating set problem in graphs that are $k$ away from a cluster graph.

%It is easy to see that the {\sc DS-CVD} problem is \NP-hard as any graph having an edge has a CVD set of at most $n-2$. 

Our FPT algorithm for {\sc DS-CVD}  starts with making a guess $S'$ for the solution's intersection with $S$. We delete vertices in $N[S'] \cap S$ from $G$ as these vertices have already been dominated and are not part of the solution. We keep the vertices of $N[S'] \cap (V \setminus S)$ even though they are already dominated by $S'$, as they can potentially be used to cover the remaining vertices of $S$.

Let us denote the cliques in $G'= G \setminus S$ as $C_1, C_2, \dotsc , C_q$ where $q \le n-k$. We label the vertices of $G'$ as $v_1, v_2, \dotsc, v_{|V \setminus S|}$ such that the first $l_1$ of them belong to the clique $C_1$, the next $l_2$ of them belong to clique $C_2$ and so on for integers $l_1, l_2, \dotsc, l_q$.
Note that it could be that some cliques have all their vertices dominated by $S'$. We have the following observation.

\begin{observation}\label{observation:ds-partially-dominated-cique}
Let $C_i, i \in [q]$ be a clique component in $G'$ such that there exists at least one vertex in $V(C_i)$ that is not dominated by $S$. Then any dominating set $X$ of $G$ extending $S$ must contain at least one vertex from $V(C_i)$.
\end{observation}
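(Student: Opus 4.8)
The plan is to argue directly from the structure of $G'$. Fix a clique component $C_i$ of $G' = G \setminus S$ and let $v \in V(C_i)$ be a vertex that is not dominated by $S$. Since $v$ lies outside $S$, ``not dominated by $S$'' means precisely that $v$ has no neighbour in $S$, i.e. $N_G(v) \cap S = \emptyset$. The whole proof will hinge on computing $N_G[v]$ and observing that it is contained in $V(C_i)$.

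For that, I would use that $C_i$ is a connected component of $G'$ that happens to be a clique: inside $G'$ the neighbourhood of $v$ is exactly $V(C_i) \setminus \{v\}$, and $v$ has no neighbour in any other component of $G'$. Passing from $G'$ back to $G$ can only add potential neighbours of $v$ that lie in $S$, and those have just been excluded. Hence $N_G(v) = V(C_i) \setminus \{v\}$ and therefore $N_G[v] = V(C_i)$.

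It then remains to invoke the defining property of a dominating set: any dominating set $X$ of $G$ satisfies $N_G[v] \cap X \neq \emptyset$ (either $v \in X$, or $v$ has a neighbour in $X$). Combining this with $N_G[v] = V(C_i)$ yields $V(C_i) \cap X \neq \emptyset$, which is the claim. Note that this does not even use the hypothesis that $X$ extends $S$, so the same argument works for every dominating set of $G$.

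I do not expect a genuine obstacle here; this is an immediate structural observation rather than a deep statement. The only point worth care in the writeup is the reading of ``dominated by $S$'': it must be interpreted with respect to the (current) modulator $S$ as a whole, since a vertex of $C_i$ that does have a neighbour in $S$ could in principle be dominated from outside $C_i$, in which case no vertex of $C_i$ would be forced into the solution — so the hypothesis of the observation is exactly what rules this out.
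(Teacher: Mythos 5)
Your proof is correct for the statement as literally written, and the heart of it is the same as the paper's: the only vertices of $G\setminus S$ adjacent to a vertex of $C_i$ are the other vertices of $C_i$. The difference lies entirely in how the two arguments dispose of potential dominators inside $S$, and this traces back to a notational slip in the statement itself. The surrounding text and the paper's own proof make clear that the intended hypothesis is ``not dominated by $S'$'' (the guessed intersection of the solution with the modulator) and that ``extending'' means $X\cap S=S'$; indeed the flag $f_i$ is later set according to whether $C_i$ is dominated by $S'$, not by $S$. Under that reading your key identity $N_G[v]=V(C_i)$ can fail --- $v$ may have neighbours in $S\setminus S'$ --- and the condition $X\cap S=S'$ is genuinely needed: it is what guarantees that none of those neighbours belong to $X$, so that $v$ must be dominated from $V\setminus S$ and hence from $V(C_i)$. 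You read the hypothesis literally as $N_G(v)\cap S=\emptyset$, which makes the conclusion hold for \emph{every} dominating set and renders ``extending'' superfluous, as you note; that is a valid proof of a true statement, but it is a strictly stronger hypothesis than the algorithm supplies, so your version of the observation would not by itself justify setting $f_i=1$ for a clique whose undominated vertices happen to have neighbours in $S\setminus S'$. In a final writeup you should either adopt the paper's reading and use the $X\cap S=S'$ condition explicitly, or note that your statement is the special case the algorithm does not actually rely on.
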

\begin{proof}
Let $u \in V(C_i)$ be a vertex in clique component $C_i$ that is not dominated by $S'$. Since $X \cap S = S'$, $u$ has to be dominated by a vertex in $G \setminus S$. Since vertices in $G \setminus S$ outside $C_i$ do not have edges to $u$, it follows that $X$ should contain one vertex from $V(C_i)$ to dominate $u$.
%\todo[inline]{Incomplete}
\end{proof}

Thus, the clique components $C_i$ are in one of two cases. The first case is when $V(C_i)$ is dominated by $S'$, in which case, the solution dominating set need not contain any vertex from $C_i$. The other case is when there is at least one vertex in $V(C_i)$ is not dominated by $S'$, in which case, by Observation \ref{observation:ds-partially-dominated-cique}, the solution should contain one vertex from $V(C_i)$.
 
We are left with the problem of picking the minimum number of vertices from the cliques to dominate the vertices of the cliques
that are not yet dominated (by $S'$), and $S \setminus N[S']$, the vertices in $S$ that are not yet dominated. We abstract out the problem below.

\defparproblem{{\sc DS-disjointcluster}}{An undirected graph $G=(V,E)$, $S \subseteq V$ such that every connected component of $G \setminus S$ is a clique, a $(0,1)$ flag vector $\textbf{f} = (f_1,f_2, \dotsc, f_q)$ corresponding for the cliques $(C_1,\ldots,C_q)$ and an integer $\ell$.}{$|S|$}{Does there exist a subset $T \subseteq V \setminus S$ of size at most $\ell$, that dominates all vertices of $S$ and the vertices of all cliques $C_i$ with flags $f_i=1$?}

For an instance $(G, S, \ell)$, of {\sc DS-CVD} with cliques $C_1, \dotsc, C_q$ and guess $S'$, we construct an instance $(\hat{G}, \hat{S}, \textbf{f}, \hat{\ell})$ of {\sc DS-disjointcluster} where 
\begin{itemize}
\item $\hat{G}= G \setminus (N[S']\cap S)$,
\item  $\hat{S} = S  \setminus (N[S']\cap S)$,  
\item $\hat{\ell} = \ell - |S'|$, and
\item for all $i \in [q]$, $f_i=1$ if the clique $C_i$ is not dominated by $S'$ and $0$ otherwise.
\end{itemize}

%two S, have to rewrite
Thus, solving {\sc DS-CVD} boils down to solving {\sc DS-disjointcluster}.

%\begin{lemma} {\sc DS-disjointcluster} can be solved in $\OO^*(2^{|S|})$ time.
%\end{lemma}
%\begin{proof}
\paragraph*{\bf Formulation as a {\SetCover} variant:}

We now formulate the {\sc DS-disjointcluster} problem as a variant of {\SetCover}. Given an instance of $(G, S, \textbf{f},  \ell)$  {\sc DS-disjointcluster}, we construct an instance $(U, \FF)$ of {\SetCover}.
We define the universe $U$ as the set $S$.
For each vertex $v \in V \setminus S$, let $S_{v} = N(v) \cap S$.
We define the family of sets $\FF = \{S_{v} \mid v \in V \setminus S\}$. 
A {\SetCover} solution $\FF' \subseteq \FF$ for $(U,\FF)$ will cover all the elements of $S$.
In the graph $G$ of {\sc DS-disjointcluster} , the vertices corresponding to the sets in $\FF'$ will dominate all the vertices in $S$. 
However, {\sc DS-disjointcluster} has the additional requirement of dominating the vertices of every clique $C_i$ with $f_i=1$. 
To do so, at least one vertex has to be picked from every such clique. 

To capture this, we need to modify the {\SetCover} problem. We define for each clique $C_i, i \in [q],$, the collection of sets corresponding to the clique vertices $\BB_i = \{S_v : v \in C_i \}$. We call such collections as \textit{blocks}.
Note that the number of blocks and the number of cliques in $G \setminus S$ are the same.
We order the sets in each block in the order of the vertices $v_1,\ldots,v_{|V \setminus S|}$.
We have the following problem, which is a slight generalization of {\SetCover}. 

\defparproblem{{\SCWP}}{A universe $U$, a family of sets $\FF = \{S_{1}, \dotsc, S_{m}\}$, a partition $\BB=(\BB_1,\BB_2,\dotsc, \BB_q)$ of $\FF$, a $(0,1)$ vector $\hat{\textbf{f}} = (f_1,f_2, \dotsc, f_q)$ corresponding to each block in the partition $(\BB_1,\BB_2,\dotsc,\BB_q)$ and an integer $\ell$.}{$|U|= k$}{Does there exist a subset $\FF' \subseteq \FF$ of size $\ell$ covering $U$ and from each block $\BB_i$ with flags $f_i=1$, at least one set is in $\FF'$?}

Given an instance of $(G, S, \textbf{f}, \ell)$  {\sc DS-disjointcluster}, we construct an instance $(U, \FF, \BB, \hat{\textbf{f}}, \hat{\ell} )$ of {\SCWP} where
\begin{itemize}
\item $U = S$,
\item $\FF = \{S_{v} \mid v \in V \setminus S\}$, with $S_{v} = N(v) \cap S$,
\item $\BB_i = \{S_v \mid v \in C_i \}$, $i \in [q]$,
\item $\hat{\textbf{f}}=\textbf{f}$, and
\item $\hat{\ell} = \ell$.
\end{itemize}

We have the following lemma, which proves that {\sc DS-disjointcluster} and {\SCWP} are equivalent problems.

\begin{lemma}\label{lemma-ds-disjoint-eq-scwp}
$(G, S, \textbf{f}, \ell)$ is a YES-instance of {\sc DS-disjointcluster}, if and only if $(U, \FF, \BB, \hat{\textbf{f}}, \hat{\ell} )$ is a YES-instance of {\SCWP}. 
\end{lemma}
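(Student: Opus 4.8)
The plan is to establish the equivalence by exhibiting a size-preserving correspondence between the feasible solutions of the two instances. First I would fix a bookkeeping convention: I treat $\FF$ as the family \emph{indexed} by the vertices of $V\setminus S$, so that each $v\in V\setminus S$ contributes the labelled member $S_v=N(v)\cap S$ even when two vertices have the same neighbourhood in $S$. This is the natural reading here, and it is in fact forced: only under it is $\BB=(\BB_1,\dots,\BB_q)$ genuinely a \emph{partition} of $\FF$, since the clique components $C_1,\dots,C_q$ partition $V\setminus S$ and $\BB_i$ is exactly the block coming from $C_i$. Under this convention a subfamily $\FF'\subseteq\FF$ is literally the same object as the vertex set $T:=\{v\in V\setminus S : S_v\in\FF'\}$, with $|\FF'|=|T|$, and ``$\BB_i$ contributes a set to $\FF'$'' is equivalent to $T\cap V(C_i)\neq\emptyset$. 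Also recall $\hat{\textbf{f}}=\textbf{f}$ and $\hat\ell=\ell$, so the flag vector and the budget are copied verbatim (the size bound in {\SCWP} is read as ``at most $\ell$'', matching {\sc DS-disjointcluster}).

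Next I would record the single structural fact that powers the proof: for every $T\subseteq V\setminus S$ and every clique component $C_i$ of $G\setminus S$, the set $T$ dominates all vertices of $C_i$ if and only if $T\cap V(C_i)\neq\emptyset$. For the ``if'' direction, a vertex $w\in T\cap V(C_i)$ satisfies $N[w]\supseteq V(C_i)$ because $C_i$ is a clique (and $w$, being in the solution, is itself considered dominated). For the ``only if'' direction, a vertex of $C_i$ has neighbours only inside $C_i$ — distinct clique components of $G\setminus S$ have no edges between them — and possibly in $S$; since $T\subseteq V\setminus S$, a vertex of $C_i$ can be dominated by $T$ only via a vertex of $T\cap V(C_i)$, so if $T\cap V(C_i)=\emptyset$ then no vertex of $C_i$ is dominated. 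The companion statement for the universe is easier: a vertex $s\in S=U$ is dominated by $T$ iff $s$ has a neighbour in $T$ iff $s\in S_v$ for some $v\in T$ iff the corresponding $\FF'$ covers $s$ (here open and closed domination agree, as $s\notin T$).

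With these two observations, both directions of the iff are a mechanical check. Forward: given a feasible $T$ for {\sc DS-disjointcluster}, put $\FF'=\{S_v:v\in T\}$; then $|\FF'|\le|T|\le\ell$, the fact that $T$ dominates $S$ gives that $\FF'$ covers $U$, and for each $i$ with $f_i=1$ feasibility of $T$ forces $T\cap V(C_i)\neq\emptyset$ by the clique fact, hence $\BB_i$ contributes a set to $\FF'$. Backward: given a feasible $\FF'$ for {\SCWP}, take the associated $T$; then $|T|=|\FF'|\le\ell$, the covering condition says every $s\in S$ has a neighbour in $T$, i.e.\ $T$ dominates $S$, and for each $i$ with $f_i=1$ the block condition gives $T\cap V(C_i)\neq\emptyset$, so by the clique fact $T$ dominates every vertex of $C_i$; thus $T$ is feasible. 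I expect the only genuine subtlety — the ``hard part'' such as it is — to be precisely the indexing convention (keeping $v\mapsto S_v$ injective as a labelled family, so that $\BB$ is a partition and the correspondence is size-exact) together with the ``only if'' half of the clique fact, which is where the hypothesis that $G\setminus S$ is a disjoint union of cliques is actually used.
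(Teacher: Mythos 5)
Your proposal is correct and follows essentially the same route as the paper: translate $T\leftrightarrow\FF'=\{S_v:v\in T\}$ and check covering, the block condition, and the size bound in both directions. The only additions are your explicit statement of the clique fact ($T$ dominates $C_i$ iff $T\cap V(C_i)\neq\emptyset$) and the labelled-family indexing convention, both of which the paper uses implicitly; these are worthwhile clarifications but not a different argument.
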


\begin{proof}
Suppose $(G, S, \textbf{f}, \ell)$ is a YES-instance of {\sc DS-disjointcluster}. Thus, there exists $T \subseteq V \setminus S$ of size $\ell$, that dominates all vertices of $S$ and all vertices of all cliques $C_i$ with flags $f_i=1$. Let $\FF' = \{S_v| v \in T\}$.  The subfamily $\FF'$ covers $U$ as $T$ dominates $S$.  Since $T$ dominates all vertices of all cliques $C_i$ with flags $f_i=1$, there is at least one vertex from each such clique in $T$. Thus, from each block $\BB_i$ with flags $f_i=1$, at least one set is contained in $\FF'$. Therefore, $(U, \FF, \BB, \hat{\textbf{f}}, \hat{\ell} )$ is a YES-instance of {\SCWP}.

Conversely, suppose that $(U, \FF, \BB, \hat{\textbf{f}}, \hat{\ell} )$ is a YES-instance of {\SCWP}. Thus, there exists a subfamily $\FF' \subseteq \FF$ of size $\ell$ that covers $U$ and from each block $\BB_i$ with flags $f_i=1$, at least one set is in $\FF'$. We define $T = \{v | S_v \in \FF'  \}$. Clearly, $T \subseteq V \setminus S$. Since, $\FF'$ covers $U$, $T$ dominates $S$. Since each block $\BB_i$ with flags $f_i=1$ has at least one set in $\FF'$, $T$ contains at least one vertex from each clique $C_i$ with flags $f_i=1$. Thus, $T$ dominates all vertices of all cliques $C_i$ with flags $f_i=1$. Therefore, $(G, S, \textbf{f}, \ell)$ is a YES-instance of {\sc DS-disjointcluster}.
\end{proof}

Thus, we now focus on solving {\SCWP}.

\begin{lemma}
\label{lemma:set-cover-partiton-fpt}
{\SCWP} can be solved in $2^{|U|} (m+|U|)^{\OO(1)}$ time.
\end{lemma}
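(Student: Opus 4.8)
The plan is to mimic the classical $2^{|U|}(m+n)^{\OO(1)}$ dynamic programming algorithm for \SetCover{} (Theorem~\ref{thm:set-cover-para-universe}), while threading through the extra bookkeeping needed to respect the block/partition constraint. The state of the DP will be indexed by a subset $W \subseteq U$ of the universe \emph{and} a prefix of the family. Concretely, order the sets of $\FF$ as $S_1,\ldots,S_m$ respecting the block order $\BB_1,\ldots,\BB_q$, so that each block occupies a contiguous range of indices. For each $j \in \{0,1,\ldots,m\}$ and each $W \subseteq U$, define $D[j,W]$ to be the minimum number of sets chosen from $\{S_1,\ldots,S_j\}$ that together cover exactly the elements of $W$ \emph{and} that pick at least one set from every block $\BB_i$ with $f_i=1$ whose index range is entirely contained in $\{1,\ldots,j\}$; set $D[j,W]=+\infty$ if no such selection exists. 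The answer to \SCWP{} is then ``yes'' iff $D[m,U] \le \ell$.

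**The recurrence.**

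First I would handle the base case $D[0,\emptyset]=0$ and $D[0,W]=+\infty$ for $W\neq\emptyset$. For the transition from $j-1$ to $j$, let $S_j$ lie in block $\BB_i$. There are two choices: either do not take $S_j$, contributing $D[j-1,W]$; or take $S_j$, contributing $1 + \min_{W' :\, W'\cup S_j = W} D[j-1,W']$ — and it suffices to let $W'$ range over $W\setminus S_j$ and $W$ (equivalently $W' = W \setminus (S_j \cap X)$ for the relevant $X$), so this is $1 + D[j-1, W\setminus S_j]$ after a standard simplification, or more safely $1+\min\{D[j-1,W], D[j-1,W\setminus S_j]\}$ restricted to $W'$ with $W'\supseteq W\setminus S_j$. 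Thus $D[j,W] = \min\bigl(D[j-1,W],\; 1 + D[j-1, W\setminus S_j]\bigr)$. The only subtlety is the block constraint: when $j$ is the \emph{last} index of a block $\BB_i$ with $f_i=1$, we must forbid the ``took nothing from $\BB_i$'' states. I would enforce this by, at each block boundary, carrying a second coordinate — a single bit recording whether the current (open) block has been ``hit'' — or, more cleanly, by observing that a selection restricted to the full prefix ending a block with $f_i=1$ and having taken no set of $\BB_i$ can be detected and overwritten with $+\infty$. Formally: augment the state to $D[j,W,b]$ where $b\in\{0,1\}$ indicates whether at least one set of the block currently containing index $j$ has been selected; reset $b$ to $0$ at the start of each block; and when closing a block $\BB_i$ with $f_i=1$, discard all states with $b=0$.

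**Complexity and correctness.**

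The table has $O(m)$ prefix values times $2^{|U|}$ subsets times $O(1)$ for the hit-bit, i.e. $2^{|U|} \cdot m$ entries, each computed in time polynomial in $m$ and $|U|$ (computing $W\setminus S_j$ is $\OO(|U|)$), giving the claimed $2^{|U|}(m+|U|)^{\OO(1)}$ bound; this matches Theorem~\ref{thm:set-cover-para-universe} up to the extra polynomial factor from the block bookkeeping. Correctness follows by induction on $j$: the invariant is that $D[j,W,b]$ equals the minimum size of $\FF' \subseteq \{S_1,\ldots,S_j\}$ with $\bigcup \FF' = W$, satisfying the block constraint for every completed block, and with $b = 1$ iff $\FF'$ meets the block straddling $j$. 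The inductive step is exactly the case analysis on whether $S_j \in \FF'$, and the block-closing cleanup step precisely removes the selections that violate a now-completed $f_i=1$ constraint.

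**Anticipated obstacle.**

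I expect the only real care-point to be the block constraint rather than the covering DP itself, which is textbook. In particular one must be sure that when a block $\BB_i$ has $f_i = 0$ the hit-bit is simply ignored (no cleanup), and that blocks are processed in a fixed contiguous order so that ``completed block'' is well defined from the prefix index $j$ alone; this is why the paper fixes the ordering of sets within and across blocks beforehand. A secondary, purely cosmetic, point is to confirm that the simplification $\min_{W'\cup S_j = W} D[j-1,W'] = D[j-1, W\setminus S_j]$ is valid — it is, because $D[j-1,\cdot]$ is monotone in the sense that removing coverable elements never increases the cost, but if one is uneasy one can keep the unsimplified minimum over $W' \in \{W, W\setminus S_j\}$ at no asymptotic cost.
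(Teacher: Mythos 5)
Your proposal is correct and follows essentially the same route as the paper: a subset dynamic program over prefixes of the block-ordered family, augmented with a single bit recording whether the currently open block has been hit, which is just the forward-phrased dual of the paper's requirement flag $b$ in its table $OPT[W,j,b]$, and it yields the same $2^{|U|}\cdot m$ table and the same running time. The one point to clean up is your stated invariant: define $D[j,W,b]$ via ``the chosen sets cover at least $W$'' (the paper's notion of covering) rather than ``$\bigcup \FF' = W$'', since under the exact-union semantics $D[j-1,\cdot]$ is \emph{not} monotone (a smaller target may be impossible to hit exactly even when a larger one is cheap), so the simplification to $1+D[j-1,W\setminus S_j]$ would not be justified --- whereas with the ``at least'' semantics both the monotonicity claim and the recurrence are exactly as in the paper.
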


\begin{proof}
We give a dynamic programming algorithm to solve {\SCWP}.
The algorithm is similar to the algorithm to solve {\SetCover} parameterized by the number of elements~\cite{FK2010} but with some modifications to handle the blocks $\BB_i$ with flag $f_i=1$.
For every subset $W \subseteq U$, for every $j \in [m]$ and boolean flag $b \in \{0,1\}$, we define $OPT[W,j,b]$ as the size of the minimum sized subset $X$ of $\{S_{1}, \dotsc, S_{j}\}$ covering $W$ such that from each block $\BB_i$ with $f_i=1$, there is at least one set in $X$, except the block $\BB_x$ containing the set $S_{j}$ where we reset the flag to $b$ to indicate that at least $b$ sets are required to be in $X$ in that block.

Note that all the sets of the block $\BB_x$ may not be present in the instance corresponding to $OPT[W,j,b]$. Thus a solution for {\SCWP}  might have a set in $\BB_x$, which may not be in $\{S_{1}, \dotsc, S_{j}\}$ and therefore not present in the instance corresponding to $OPT[W,j,b]$. We define the flag $b \in \{0,1\}$ resetting the flag corresponding to $\BB_x$ to handle this case.

We have $OPT[W,1,b] = 1$ if $W \subseteq S_{1}$, else $OPT[W,1,b] = \infty$. To compute all the values of $OPT[W,j,b]$, we initially set all the remaining values to $\infty$ and give the following recursive formulation for $OPT[W,j+1,b]$ with $j \ge 1$.
 
\begin{itemize}
\item 
\underline{Case 1} : $S_{j+1}$ is not the first set in its block $\BB_x$.
\begin{equation*}
OPT[W, j+1, b] = \min \Big\{ OPT[W,j,b], 1 + OPT[W \setminus S_{j+1}, j,0] \Big\}
\end{equation*}

In computing $OPT[W,j+1,b]$ recursively, either $S_{j+1}$ is in the solution or not. If $S_{j+1}$ is not in the solution, then we are left to cover $W$ using the sets in $\{S_{1}, \dotsc, S_{j} \}$. Thus, the solution is $OPT[W,j,b]$, which is already computed. If $S_{j+1}$ is in the solution, then it is true that from block $\BB_x$, at least one set is in the solution. Since, $S_{j+1}$ is picked, we no longer care about the requirement of picking at least $b \in \{0,1\}$ sets in block $\BB_x$. We are left to cover $W \setminus S_{j+1}$ using the sets in $\{S_{1}, \dotsc, S_{j} \}$. The corresponding solution is stored in  $OPT[W \setminus S_{j+1}, j,0]$.

\item 
\underline{Case 2} : $S_{j+1}$ is the first set in its block $\BB_x$.
\begin{equation*}
OPT[W,j+1,b] = 
\begin{cases}
1 + OPT[W \setminus S_{j+1}, j,f_{x-1}] \text{ if } b=1 \\
\min \Big\{ OPT[W,j,f_{x-1}], 1 + OPT[W \setminus S_{j+1},j,f_{x-1}] \Big\}  \\ \hspace*{42mm} \text{ if } b=0 
\end{cases}
\end{equation*}

%When S_{j+1} is the first element of the block and f (for that block) is 1, then we are forced to pick that set in the solution as think of it as the first and ONLY vertex of the clique, and we need to pick that to dominate that clique.

%When $S_{j+1}$ is the first element of the block, we ensure that if no set has been picked from the block yet and $f$ was $1$, then we forcefully pick $S_{j+1}$.
When $S_{j+1}$ is the first element of the block, and $b$ is $1$, we are forced to pick $S_{j+1}$ in the solution as $S_{j+1}$ is the only set in the block $\BB_{x}$ for the instance corresponding to $OPT[W, j+1, b]$. Since $S_j$ is in the previous block $\BB_{x-1}$, we recursively look at the optimal solution with the flag set to $f_{x-1}$, the flag corresponding to $\BB_{x-1}$ in $\hat{\textbf{f}}$. If $b=0$, the set $S_{j+1}$ could be in the solution or not. We take the minimum of the solutions corresponding to both such cases.
\end{itemize}
 
We compute the subproblems in increasing order of subsets $W \subseteq U$ and for each $W$, in increasing order of $j$ and $b$. The solution to the problem is computed at $OPT[U,m,f_x]$ where $f_x$ is the flag value for the final block $\BB_x$ containing $S_{m}$. 
The number of problems is $2^{|U|+1} \cdot m$, and for each subproblem, we take $\OO(|U|)$ time to compute the set difference. Hence the total running time is $\OO(2^{|U|} \cdot m \cdot |U|)$ which is $2^{|U|} (m+|U|)^{\OO(1)}$. 
\end{proof}

From Lemmas \ref{lemma-ds-disjoint-eq-scwp} and \ref{lemma:set-cover-partiton-fpt} , we have the following corollary.

\begin{corollary} 
{\sc DS-disjointcluster} can be solved in $2^{|S|} n^{\OO(1)}$ time.
\end{corollary}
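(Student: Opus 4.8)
The plan is to chain together the two lemmas just established. Given an instance $(G, S, \textbf{f}, \ell)$ of {\sc DS-disjointcluster}, I would first invoke the explicit reduction described just before Lemma~\ref{lemma-ds-disjoint-eq-scwp}: set $U = S$, form the family $\FF = \{S_v \mid v \in V \setminus S\}$ where $S_v = N(v) \cap S$, let the blocks be $\BB_i = \{S_v \mid v \in C_i\}$ for the cliques $C_1,\dots,C_q$ of $G \setminus S$, keep the flag vector $\hat{\textbf{f}} = \textbf{f}$, and set $\hat{\ell} = \ell$. This construction is clearly computable in polynomial time, and since $(C_1,\dots,C_q)$ partitions $V \setminus S$, the collection $(\BB_1,\dots,\BB_q)$ is a genuine partition of $\FF$, so the output is a legitimate {\SCWP} instance.

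Next I would apply Lemma~\ref{lemma-ds-disjoint-eq-scwp}, which certifies that $(G,S,\textbf{f},\ell)$ is a YES-instance of {\sc DS-disjointcluster} if and only if $(U,\FF,\BB,\hat{\textbf{f}},\hat{\ell})$ is a YES-instance of {\SCWP}. Hence it suffices to decide the latter, and for that I would run the dynamic programming algorithm of Lemma~\ref{lemma:set-cover-partiton-fpt}, which solves {\SCWP} in $2^{|U|}(m + |U|)^{\OO(1)}$ time.

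Finally I would bound the parameters in terms of the original instance: $|U| = |S|$ by construction, and $m = |\FF| = |V \setminus S| \le n$, so the running time $2^{|U|}(m+|U|)^{\OO(1)}$ is at most $2^{|S|} n^{\OO(1)}$, as claimed; the polynomial-time reduction adds only a lower-order term.

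There is essentially no obstacle here — all the real work was done in proving the equivalence (Lemma~\ref{lemma-ds-disjoint-eq-scwp}) and the dynamic programming bound (Lemma~\ref{lemma:set-cover-partiton-fpt}). The only point requiring a moment's care is the bookkeeping that $m \le n$ so that the $(m+|U|)^{\OO(1)}$ factor collapses into $n^{\OO(1)}$, and that the reduction itself is polynomial in $n$ and does not blow up the parameter beyond $|S|$.
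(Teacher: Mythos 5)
Your proposal is correct and matches the paper exactly: the paper derives this corollary directly from Lemma~\ref{lemma-ds-disjoint-eq-scwp} (the equivalence with {\SCWP}) and Lemma~\ref{lemma:set-cover-partiton-fpt} (the $2^{|U|}(m+|U|)^{\OO(1)}$ dynamic program), with $|U|=|S|$ and $m\le n$. Your added bookkeeping about the reduction being polynomial and the parameter not growing is just the detail the paper leaves implicit.
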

%We construct the {\SCWP} instance from the {\sc DS-disjointcluster} instance as discussed above. It can be easily seen that there exists a solution of size $\ell$ in {\sc DS-disjointcluster} instance if and only if there exists a solution of size $\ell$ in {\SCWP} instance. %In Lemma~\ref{lemma:set-cover-partiton-fpt}, we solve {\SCWP} via dynamic programming. And $|S| = |U|$. This completes the proof.
%proof inside proof, need rewrite

Since solving {\sc DS-disjointcluster} leads to solving {\sc DS-CVD}, we have the following theorem.

\begin{theorem}
{\sc DS-CVD} can be solved in $3^{k} n^{\OO(1)}$ time.
\end{theorem}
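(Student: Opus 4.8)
The plan is to reduce {\sc DS-CVD} to $2^{k}$ instances of {\sc DS-disjointcluster} by guessing the intersection of an optimal dominating set with the modulator $S$. For every subset $S' \subseteq S$ with $|S'| \le \ell$, I would form the instance $(\hat G, \hat S, \textbf{f}, \hat\ell)$ of {\sc DS-disjointcluster} exactly as set up before the statement --- $\hat G = G \setminus (N[S'] \cap S)$, $\hat S = S \setminus (N[S'] \cap S)$, $f_i = 1$ iff the clique $C_i$ is \emph{not} dominated by $S'$, and $\hat\ell = \ell - |S'|$ --- and solve it. By Lemma~\ref{lemma-ds-disjoint-eq-scwp} it is equivalent to an instance of {\SCWP} with universe $\hat S$, and by Lemma~\ref{lemma:set-cover-partiton-fpt} the latter is solvable in $2^{|\hat S|}(m + |\hat S|)^{\OO(1)} = 2^{|\hat S|} n^{\OO(1)}$ time. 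The algorithm answers ``yes'' iff some guess succeeds, and an explicit solution is recovered as $S' \cup T$, where $T$ is the vertex set returned for the successful guess.

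For correctness I would verify that the guessing is both exhaustive and sound. Given a dominating set $X$ of $G$ with $|X| \le \ell$, put $S' := X \cap S$ and $T := X \setminus S$; then every vertex of $\hat S = S \setminus N[S']$ is dominated in $X$ by a vertex outside $S$ (a dominator lying in $S$ would lie in $S'$, a contradiction), so $T$ dominates $\hat S$, and for each clique $C_i$ with $f_i = 1$ Observation~\ref{observation:ds-partially-dominated-cique} forces $X \cap V(C_i) \ne \emptyset$, hence $T$ hits $C_i$ and dominates all of it; since $T \subseteq V \setminus S \subseteq V(\hat G)$ and $|T| \le \hat\ell$, this is a feasible solution for the guess $S'$. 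Conversely, if a guess $S'$ returns $T$ with $|T| \le \hat\ell$, then $X := S' \cup T$ has size at most $|S'| + (\ell - |S'|) = \ell$ and dominates $G$: vertices in $N[S']$ are dominated by $S'$; vertices of $\hat S$ are covered by $T$; and a vertex of a clique $C_i$ is dominated by $S'$ when $f_i = 0$ and by the vertex of $T \cap V(C_i)$ guaranteed when $f_i = 1$ (the already-dominated vertices of $N[S'] \cap (V \setminus S)$ kept in $\hat G$ are harmless, and are available to be used by $T$).

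For the running time, since $S' \subseteq N[S'] \cap S$ we have $|\hat S| \le k - |S'|$, so the total time is $\sum_{S' \subseteq S} 2^{\,k - |S'|} n^{\OO(1)} = \bigl(\sum_{j=0}^{k}\binom{k}{j} 2^{\,k-j}\bigr) n^{\OO(1)} = 3^{k} n^{\OO(1)}$ by the binomial theorem, with only polynomial overhead for building instances and reassembling solutions; if the CVD set $S$ were not given, it can first be computed in $1.92^{k} n^{\OO(1)}$ time via Boral et al.~\cite{BCKP16}, which is absorbed into $3^{k} n^{\OO(1)}$. The argument is essentially bookkeeping once the {\SCWP} abstraction is in hand; the one place needing care --- and the only mild obstacle --- is checking that the flag vector $\textbf{f}$ pinpoints exactly the cliques forced to contribute a vertex and that gluing $S' \cup T$ back together yields a dominating set of the whole of $G$, which the two directions above establish.
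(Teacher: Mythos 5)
Your proposal is correct and follows essentially the same route as the paper: guess $S' \subseteq S$, reduce to {\sc DS-disjointcluster} (hence to {\SCWP}), and bound the total cost by $\sum_{j}\binom{k}{j}2^{k-j} = 3^k$. The extra care you take in verifying exhaustiveness/soundness of the guessing and in noting $|\hat S| \le k - |S'|$ (rather than equality) is a slight tightening of details the paper delegates to the surrounding discussion, but the argument is the same.
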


\begin{proof}
Given an instance $(G, S, k)$ of {\sc DS-CVD}, for each guess $S' \subseteq S$ with $|S'| = i$, we construct a {\sc DS-disjointcluster} instance $(\hat{G}, \hat{S}, f, \hat{\ell})$ with $|\hat{S}| = k-i$ and solve it with running time $2^{k-i} n^{\OO(1)}$. Hence the total running time is $\sum\limits_{i=1}^{k} {k \choose i} 2^{k-i} n^{\OO(1)}$ which is $3^k n^{\OO(1)}$.
\end{proof}

We now show that, with some careful modifications to the above dynamic programming algorithm, we can obtain improved FPT algorithms for minimum EDS, IDS, DC, TDS, and ThDS when parameterized by the size of the cluster deletion set.

%\begin{theorem}
%\label{lemma:eds-ids-tds-thds-cvd-fpt}
%{\sc EDS-CVD}, {\sc TDS-CVD} and {\sc IDS-CVD}\footnote{All the problem definitions presented in the form of an acronym will appear in full version.} can be solved in $\OO^*(3^k)$ time.
%{\sc ThDS-CVD} can be solved in $\OO^*((r+2)^k)$ time.
%\end{theorem}
%\begin{proof}

\subsubsection{Efficient Dominating Set}
\label{sec:eds-cluster-upper-bounds}

Recall that for a graph $G$, a set $S \subseteq V(G)$ is called an efficient dominating set if for every vertex $v \in V(G)$, $|N[v] \cap X| = 1$. 

Like in {\sc DS-CVD}, the algorithm for {\sc EDS-CVD} starts by making a guess $S'$ from the modulator $S$. Note that for $S'$ to be part of an EDS in $G$, the sets $N[v]$ for $v \in S'$ has to be disjoint as otherwise, some vertex in $N[S']$ is dominated twice.
Observe that the vertices in $N[S']$ are dominated exactly once. Also,  no vertex in $N(S')$ can be added to the solution as that would mean a vertex in $S'$ is dominated twice. Thus, it is safe to delete the vertices in $N[S']$ and we do so. 

Recall that for a subset $S \subseteq V(G)$, we define $N^{=2}(S)$ as the set of vertices $u$ such that there exists a vertex $v \in S$ with $dist_G(u,v) = 2$, and for all $v' \in S, v' \neq v$, $dist_G(u,v') \geq 2$. We have the following lemma for the set of vertices $N^{=2}(S')$.

\begin{lemma}\label{lemma:eds-distance-2-vertices}
For an efficient dominating set $X$ in $G$, let $S' = X \cap S$. Then $X \cap N^{=2}(S') = \emptyset$.
\end{lemma}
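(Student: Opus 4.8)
The plan is to argue by contradiction using only the definition of $N^{=2}(S')$ and the defining property of an efficient dominating set. Suppose some vertex $u$ lies in $X \cap N^{=2}(S')$. First I would unpack what $u \in N^{=2}(S')$ means: there is a vertex $v \in S'$ with $dist_G(u,v) = 2$, and for every other $v' \in S'$ we have $dist_G(u,v') \geq 2$. In particular $u \notin S'$ and $u$ has no neighbour in $S'$ at all (its distance to $v$ is $2$ and to all other members of $S'$ is at least $2$). Since $dist_G(u,v) = 2$, there is a common neighbour $w$ of $u$ and $v$; note that $w \neq u$ and $w \neq v$ because $w$ is adjacent to both.

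The key step is then immediate: $v \in S' \subseteq X$, and by our assumption $u \in X$, so both $u$ and $v$ belong to $X$; moreover $u, v \in N(w) \subseteq N[w]$, and $u \neq v$ since $dist_G(u,v) = 2 \neq 0$. Hence $|N[w] \cap X| \geq 2$, which contradicts the fact that $X$ is an efficient dominating set of $G$ (so $|N[w] \cap X| = 1$). Therefore no such $u$ exists and $X \cap N^{=2}(S') = \emptyset$.

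I do not expect any real obstacle here — the statement follows directly from the definitions. The only point requiring a little care is reading off from the definition of $N^{=2}(S')$ that the witnessing vertex $v \in S'$ at distance exactly $2$, together with a common neighbour $w$, genuinely exists and that $w$ is distinct from both $u$ and $v$; once that is set up, the efficiency of $X$ forces the contradiction in one line.
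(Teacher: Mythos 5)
Your proof is correct and follows essentially the same route as the paper: both arguments take the witnessing vertex of $S'$ at distance exactly $2$ from $u$, pass to the midpoint of the length-$2$ path, and observe that this midpoint would have two members of $X$ in its closed neighbourhood, contradicting efficiency. The extra care you take in checking that the common neighbour is distinct from $u$ and $v$ is fine but not needed beyond what the paper already implicitly uses.
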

\begin{proof}
For vertices $u \in N^{=2}(S')$, $dist_G(u,s) =2$ for some vertex $s \in S'$. This implies that there is a path $P = s - v - u$ in $G$ for some vertex $v \in N(S')$. If $u \in X$, then $|N(v) \cap X| =2$, contradicting that $X$ is an efficient dominating set.
\end{proof}

%We color the vertices in $N^{=2}(S')$ {\em red} to indicate that these vertices are to be dominated but cannot be added to the solution. 

%\todo[inline]{Could remove colors altogether as it is not really used. Or could add a lemma regarding how there is no red colored vertices at the end.}
%\todo[inline]{Hope this line is okay}
We have the following lemma.
\begin{lemma}\label{lemma:eds-distance-2-clique}
For an efficient dominating set $X$ in $G$, let $S' = X \cap S$. There does not exist a clique component $C$ in $G \setminus (S \cup N(S'))$ such that all the vertices of $V(C) \subseteq N^{=2}(S')$.
\end{lemma}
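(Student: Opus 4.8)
The plan is to argue by contradiction: assume such a clique component $C$ exists, fix an arbitrary vertex $u\in V(C)$, and show that $u$ cannot be dominated by $X$, contradicting that $X$ is a dominating set. The whole argument reduces to pinning down where a potential dominator $w\in N_G(u)\cap X$ could lie and then ruling out every possibility, using only Lemma~\ref{lemma:eds-distance-2-vertices}, the defining property $S'=X\cap S$, and the efficiency of $X$.

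First I would record the structural consequences of the hypotheses. Since $V(C)\subseteq N^{=2}(S')$, Lemma~\ref{lemma:eds-distance-2-vertices} gives $V(C)\cap X=\emptyset$; in particular $u\notin X$, so $X$ being a dominating set forces some $w\in N_G(u)\cap X$. Next, $u\in N^{=2}(S')$ means $dist_G(u,s)\ge 2$ for every $s\in S'$, so $u$ has no neighbour in $S'$. Finally, because $C$ is a (full) connected component of $G\setminus(S\cup N(S'))$ and is a clique, every neighbour of $u$ in $G$ lies in $(V(C)\setminus\{u\})\cup S\cup N(S')$.

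Then I would eliminate each location for $w$. It is not in $V(C)\setminus\{u\}$ since $V(C)\cap X=\emptyset$; it is not in $S'$ since $u$ has no neighbour there; it is not in $S\setminus S'$ because $X\cap S=S'$ by the choice of $S'$; and it is not in $N(S')$ because any vertex of $N(S')$ is adjacent to some $s\in S'\subseteq X$, so placing it in $X$ would give $|N[s]\cap X|\ge 2$, contradicting efficiency (this is exactly the observation made just before Lemma~\ref{lemma:eds-distance-2-vertices} that no vertex of $N(S')$ can belong to the solution). Having exhausted all cases, no such $w$ exists, so $u$ is undominated, the desired contradiction.

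The one step I would be most careful about is the claim $N_G(u)\subseteq(V(C)\setminus\{u\})\cup S\cup N(S')$, i.e. that $u$ has no neighbour in any \emph{other} component of $G\setminus(S\cup N(S'))$; this is precisely where we use that $C$ is a full connected component rather than just some subset of a clique. I also want to flag that the full-strength hypothesis ``$V(C)\subseteq N^{=2}(S')$'' (as opposed to merely $V(C)\cap N^{=2}(S')\neq\emptyset$) is genuinely needed: it is what guarantees, via Lemma~\ref{lemma:eds-distance-2-vertices}, that no vertex of $C$ itself can play the role of the dominator $w$. Everything else is routine bookkeeping of which vertices an efficient dominating set extending $S'$ may contain, so I expect no further obstacles.
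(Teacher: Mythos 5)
Your proof is correct and takes essentially the same route as the paper's: use Lemma~\ref{lemma:eds-distance-2-vertices} to rule out dominators inside $C$, and then use the component structure of $G \setminus (S \cup N(S'))$ together with $X \cap S = S'$ and efficiency to rule out all external dominators. You are in fact slightly more careful than the paper, whose proof asserts that all neighbours of $V(C)$ lie in $S$ and thereby glosses over possible neighbours in $N(S') \setminus S$ --- the case you handle explicitly via the double-domination argument.
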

\begin{proof}
Suppose that there is a clique component $C$ in $G \setminus (S \cup N(S'))$ such that $V(C) \subseteq N^{=2}(S')$. Then from Lemma \ref{lemma:eds-distance-2-vertices}, we have $V(C) \cap X = \emptyset$. Note that all the neighbors of the vertices in $C$ in $G$ are in $S$ and $X \cap S = S'$.  Since no vertex in $S'$ is adjacent to vertices in $C$, we can conclude that there is no vertex in $X$ that is neighbor to vertices in $C$. This contradicts that $X$ is an efficient dominating set in $G$.
\end{proof}
%the vertices in $C$ cannot be dominated by a solution extending $S'$. 
Whenever we encounter a guess $S'$ with a component in $G \setminus (S \cup N(S'))$ whose vertices are all in the set $N^{=2}(S')$, we move on to the next guess of the intersection of the solution with $S$. The correctness follows from Lemma \ref{lemma:eds-distance-2-clique}.

%So, we move on to make another guess $S' \subseteq S$ as these vertices in the clique cannot be dominated.
%\todo[inline]{It should not be $G - S$, it should be some clique in $G - (N[S'] \cup N^{=2}(S'))$ has all red vertices, then no solution exists that intersection with $S$ is $S'$.} 

We now have the following lemma for the case where there is a component of $G \setminus (S \cup N(S'))$ with at least one vertex that is not in $N^{=2}(S')$.

\begin{lemma}\label{lemma:eds-distance-clique-solution}
For an efficient dominating set $X$ in $G$, let $S' = X \cap S$. Let $C$ be a clique component in $G \setminus (S \cup N(S'))$ such that  $V(C) \nsubseteq N^{=2}(S')$. Then $|X \cap V(C)| =1$ with the vertex $x \in X \cap V(C)$ such that $x \notin N^{=2}(S')$.
\end{lemma}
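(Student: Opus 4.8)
The plan is to fix an arbitrary vertex $u \in V(C)$ and use the fact that $X$ is an efficient dominating set to pin down $N_G[u] \cap X$ exactly. First I would record the two structural facts about $X$ that were already observed right before the lemma: since $X \cap S = S'$, no vertex of $S \setminus S'$ belongs to $X$; and no vertex of $N(S')$ belongs to $X$, because any $w \in X \cap N(S')$ is adjacent to some $s \in S' \subseteq X$, so $\{s,w\} \subseteq N_G[s] \cap X$, contradicting efficiency of $X$.

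Next I would analyze $N_G[u]$. Since $C$ is a connected component of $G \setminus (S \cup N(S'))$, every neighbour of $u$ in $G$ lies in $V(C) \cup S \cup N(S')$; and since $C$ is a clique we also have $V(C) \subseteq N_G[u]$. Moreover $u \notin N(S')$ (it survived the deletion of $S \cup N(S')$), so $u$ has no neighbour in $S'$, whence $N_G[u] \cap S \subseteq S \setminus S'$. Combining these with the structural facts above, $N_G[u] \cap X = V(C) \cap X$. Efficiency of $X$ gives $|N_G[u] \cap X| = 1$, so $|V(C) \cap X| = 1$; let $x$ denote this unique vertex. Intuitively, $u$ must be dominated, which forces at least one vertex of the clique $C$ into $X$, while two vertices of a clique in $X$ would doubly dominate one of them.

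Finally, to obtain $x \notin N^{=2}(S')$, I would simply invoke Lemma~\ref{lemma:eds-distance-2-vertices}: since $x \in X$ and that lemma asserts $X \cap N^{=2}(S') = \emptyset$, we get $x \notin N^{=2}(S')$. The hypothesis $V(C) \nsubseteq N^{=2}(S')$ is exactly the case left open by Lemma~\ref{lemma:eds-distance-2-clique} (all-vertices-in-$N^{=2}(S')$ being impossible) and is not otherwise needed for the argument.

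I do not expect a real obstacle here; the only care required is the bookkeeping of the decomposition $N_G[u] \subseteq V(C) \cup S \cup N(S')$ and the justification, from efficiency, that neither vertices of $N(S')$ nor vertices of $S \setminus S'$ can be in the solution. Once that is in place, the conclusion is immediate from $|N_G[u] \cap X| = 1$ together with Lemma~\ref{lemma:eds-distance-2-vertices}.
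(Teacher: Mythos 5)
Your proof is correct and follows essentially the same route as the paper: a vertex of $C$ can only be dominated by a vertex of $C$ itself (since it has no neighbour in $S'$ and vertices of $N(S')$ are excluded from $X$ by efficiency), and the excluded-vertex claim $x \notin N^{=2}(S')$ is imported from Lemma~\ref{lemma:eds-distance-2-vertices}. Your version is somewhat more careful --- the identity $N_G[u] \cap X = V(C) \cap X$ combined with $|N_G[u] \cap X| = 1$ cleanly yields both existence and uniqueness, whereas the paper leaves the uniqueness step ($|X \cap V(C)| \le 1$, from $C$ being a clique) implicit.
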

\begin{proof}
Note that none of the vertices in $C$ have a neighbor in $S$ as $C$ is a component in $G \setminus (S \cup N(S'))$. Thus, to dominate the vertices of $C$, $X$ should contain a vertex in $C$. Also, note that such a vertex cannot be in $N^{=2}(S')$ from Lemma \ref{lemma:eds-distance-2-vertices}. The lemma follows.
\end{proof}

Note that in the above lemma, it could be that $V(C) \cap N^{=2}(S') = \emptyset$ as well.
From Lemma \ref{lemma:eds-distance-2-vertices}, the vertices in $N^{=2}(S')$ are never part of an efficient dominating set.
From Lemma \ref{lemma:eds-distance-2-clique}, the vertices in $N^{=2}(S')$ are present in components $C$ of $G \setminus (S \cup N(S'))$ with $V(C) \setminus N^{=2}(S')$ is non-empty. Also, by Lemma \ref{lemma:eds-distance-clique-solution}, any efficient dominating set contains exactly a vertex in $V(C) \setminus N^{=2}(S')$, which would also dominate all the vertices in  $N^{=2}(S')$ exactly once. Thus, we can safely delete all the vertices in $N^{=2}(S')$.

%If a clique component $C$ in $G \setminus (S \cup N(S'))$ contains both red and non-red vertices, we can conclude that the solution contains exactly one of the non-vertices in $C$, as only the non-red vertices in $C$ can dominate the red vertices in $C$, and the vertices in $C$ are to be dominated exactly once. Thus, we can safely delete the red vertices in $C$. Thus, note that all the vertices that have been colored red are deleted from the graph.

%If a clique component $C$ in $G \setminus (S \cup N(S'))$ contains no red vertices, we can conclude that the solution contains exactly one vertex in $C$ to dominate the vertices in $C$.

The problem now boils down to finding a subset of vertices $D$ such that
\begin{itemize} 
	\item for each clique component $C$ in $G \setminus (S \cup N[S'])$, $|D \cap V(C)|=1$. %exactly one vertex has to be picked into the potential efficient dominating set $D'$ (since two vertices from a clique cannot be picked in an efficient dominating set), 
	\item $D \cap S = \emptyset$, and
	\item the vertices of $D \cup S$ form an efficient dominating set in the graph $G$.
\end{itemize}

We formalize this as the following problem.

%\todo[inline]{Write in terms of clique components}
\defparproblem{{\sc EDS-disjointcluster}}{An undirected graph $G = (V, E), S \subseteq V(G)$ such that $G \setminus S$ is a cluster graph.}{$|S|$}{Is there an efficient dominating set in $G$ disjoint from $S$?}

If the clique components of the cluster graph $G \setminus S$ are $C_1, C_2, \dotsc, C_q$, observe that the efficient dominating set corresponding to the {\sc EDS-disjointcluster} $T \subseteq V(G) \setminus S$  is such that $|T \cap V(C_i)| =1$ for all cliques $C_i, i  \in [q]$.
%For the problem we started with, the set $S$ in this new formulation is the remaining vertices of $S$ after deleting $S'$ and $N[S']$, and the cluster graph is the union of original cliques after deleting the red vertices of the cliques. 
%We now give a $\OO^*(2^{|S|})$ algorithm to solve {\sc EDS-disjointcluster}.
%We must pick exactly one vertex from each clique $C$ as one vertex is sufficient and necessary to dominate all vertices in the clique $C$, and we expect them to dominate $S$ as well. 
Like in {\sc DS-CVD}, we formulate the {\sc EDS-disjointcluster} problem as a variant of {\sc Set cover with Partition} problem. Since the remaining vertices in $S$ are to be covered exactly once by the vertices in $V(G) \setminus S$,  we show that the problem is equivalent to a variant of {\sc Exact Set Cover} problem defined as follows.
%As in {\sc DS-CVD}, we construct instance $(U,\FF,\BB)$ that is a variant of {\sc Set cover with Partition} problem.
%Here all the vertices of $S$ must be dominated by exactly once by using remaining vertices in $V \setminus S$ suggesting we are to reduce to an instance of {\sc Exact Set Cover with Partition} instance that is defined as follows.
%The additional requirement of picking exactly one set from each block leads to the following problem:

\defparproblem{{\ESCWP}}{A universe $U$, a family of sets $\FF = \{S_{1}, \dotsc, S_{m} \}$ , and a partition $\BB=(\BB_1,\BB_2,\dotsc, \BB_q)$ of $\FF$.}{$|U|$}{Is there a subset $\FF' \subseteq \FF$ such that every $u \in U$ is covered by exactly one set in $\FF'$ and from each block $\BB_i$ exactly one set is picked?}

%It can be easily seen that there exists a solution of size $\ell$ for {\sc EDS-disjointcluster} instance if and only if there exists a solution of size $\ell$ for {\ESCWP} instance. 

Given an instance $(G,S)$ of {\sc EDS-disjointcluster}, we can construct an instance $(U, \FF, \BB)$ of {\ESCWP} where.
\begin{itemize}
\item $U = S$, 
\item $\FF = \{S_v \mid v \in V(G) \setminus S\}$ where $S_v = N(v) \cap S$, and
\item $\BB = \{ \BB_i \mid i \in [q]\}$ where $\BB_i = \{S_v \mid v \in V(C_i)\}$ for cliques $C_i$.
\end{itemize}
 
%\todo[inline]{My edit has stopped here.}

We have the following lemma.

\begin{lemma}\label{lemma-eds-disjoint-eq-escwp}
$(G, S)$ is a YES-instance of {\sc EDS-disjointcluster} if and only if $(U, \FF, \BB)$ is a YES-instance of {\ESCWP}.
\end{lemma}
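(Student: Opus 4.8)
The plan is to prove the equivalence by translating a solution of one problem directly into a solution of the other, exactly as in the proof of Lemma~\ref{lemma-ds-disjoint-eq-scwp}, but now tracking the \emph{exactness} condition ``$|N[v]\cap X|=1$ for all $v$'' instead of mere domination. First I would set up the bijective correspondence between vertices of $V(G)\setminus S$ and sets of $\FF$, namely $v\leftrightarrow S_v=N(v)\cap S$, and between cliques $C_i$ and blocks $\BB_i$. Recall that by the reductions preceding the statement, we have already deleted $N[S']$ and $N^{=2}(S')$, so the instance $(G,S)$ of {\sc EDS-disjointcluster} we are handed really does ask for an efficient dominating set $T$ with $T\cap S=\emptyset$, and such a $T$ necessarily satisfies $|T\cap V(C_i)|=1$ for every clique $C_i$ (this is observed right before the statement).

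For the forward direction, suppose $T\subseteq V(G)\setminus S$ is an efficient dominating set of $G$ disjoint from $S$. Set $\FF'=\{S_v : v\in T\}$. I would argue three things. (i) Each $u\in U=S$ is covered by exactly one set of $\FF'$: since $T$ is efficient, $|N[u]\cap T|=1$; as $u\notin T$ this means $u$ has exactly one neighbour $v$ in $T$, i.e. $u\in S_v$ for exactly one $v\in T$, hence $u$ lies in exactly one member of $\FF'$. (ii) From each block $\BB_i$ exactly one set is picked: because $|T\cap V(C_i)|=1$, exactly one $v\in V(C_i)$ has $S_v\in\FF'$; one should also note that distinct vertices of a clique can give the same set $S_v$, so strictly we need $\FF'$ to be treated as picking \emph{indexed} members — I would phrase the block as a multiset / indexed family so that ``exactly one set is picked from $\BB_i$'' is literally ``exactly one index of $\BB_i$ is chosen'', matching the convention already used (``We order the sets in each block\ldots''). (iii) Conversely for the backward direction, given $\FF'$ witnessing {\ESCWP}, define $T=\{v : S_v\in\FF'\}$ (one vertex per chosen index). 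Then $T\cap S=\emptyset$, $|T\cap V(C_i)|=1$ for all $i$ since exactly one index per block is chosen, and I must check $T$ is an efficient dominating set of all of $G$: for $u\in S$, $u$ is covered by exactly one set of $\FF'$, so $|N[u]\cap T|=1$; for $u\in V(C_i)$, if $u\in T$ then $u$ is dominated by itself, and no other vertex of $C_i$ is in $T$, and no vertex outside $C_i\cup S$ is adjacent to $u$, and no vertex of $S$ is in $T$, so $|N[u]\cap T|=1$; if $u\notin T$, then the unique vertex $w\in T\cap V(C_i)$ dominates $u$, and again nothing else in $T$ is adjacent to $u$, giving $|N[u]\cap T|=1$.

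The one subtlety I expect to be the main obstacle is the handling of vertices $u\in V(C_i)$ with $u\notin T$: I need that $u$ is dominated \emph{only} by the clique representative and by nothing in $S$. This is exactly why the preprocessing steps (Lemmas~\ref{lemma:eds-distance-2-vertices}--\ref{lemma:eds-distance-clique-solution} and the deletion of $N[S']$ and $N^{=2}(S')$) were carried out: after them, every remaining vertex of a clique $C_i$ has all its neighbours either inside $C_i$ or in $S$, and the chosen part of $S$ (which is $S'$, already removed) is disjoint from these clique neighbourhoods. I would make this explicit by recalling that in the reduced graph, a clique vertex $u$ has $N(u)\subseteq V(C_i)\cup S$ and that no vertex of $S$ belongs to the sought solution $T$; hence the only possible double-domination of $u$ would come from two clique vertices being in $T$, which the block constraint forbids. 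A second, more cosmetic point is the multiplicity issue (several clique vertices yielding identical sets $S_v$, or a clique vertex with empty $S_v$): I would dispose of it by committing once and for all to viewing $\FF$ and each $\BB_i$ as indexed families, so the maps $T\mapsto\FF'$ and $\FF'\mapsto T$ are genuine bijections between ``indexed subfamilies picking one index per block'' and ``vertex subsets picking one vertex per clique,'' after which the covering/exact-covering conditions match verbatim. With these conventions fixed, the two directions are then routine verifications of the three bullet conditions, and the proof closes.
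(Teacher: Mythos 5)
Your proposal is correct and follows essentially the same route as the paper's proof: translate $T$ to $\FF'=\{S_v: v\in T\}$ and back, and match the exact-covering condition on $U=S$ with $|N(v)\cap T|=1$ and the one-set-per-block condition with $|T\cap V(C_i)|=1$. The only difference is that you spell out two details the paper leaves implicit — treating $\FF$ and the blocks as indexed families to avoid collisions among identical sets $S_v$, and verifying that clique vertices themselves are dominated exactly once in the backward direction — both of which are sound refinements rather than a different argument.
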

\begin{proof}
Suppose $(G, S)$ is a YES-instance of {\sc EDS-disjointcluster}.
Thus, there exists an efficient dominating set $T \subseteq V \setminus S$ such that for each $v \in S, |N(v) \cap T| =1$ and for each clique $C_i$, $|V(C_i) \cap T| =1$.
Let $\FF' = \{S_v \mid v \in T\}$.
Since for every $v \in S$, $|N(v) \cap T| =1$ and $U=S$, every element $u \in U$ is covered by exactly one set from $\FF'$.
Also, since, for each clique $C_i$, $|V(C_i) \cap T | =1$,  each block $\BB_i$ contains  exactly one set in $\FF'$.
Therefore, $(U, \FF, \BB)$ is a YES-instance of {\ESCWP}.
Conversely, suppose that $(U, \FF, \BB)$ is a YES-instance of {\ESCWP}.
Then, there exists a subfamily $\FF' \subseteq \FF$ such that every element in $U$ is covered exactly once by $\FF'$ and each block $\BB_i$ contains exactly one set in $\FF'$.
We define $T = \{v \mid S_v \in \FF'  \}$. Clearly, $T \subseteq V \setminus S$.
Since, $\FF'$ covers elements in $U$ exactly once, for each $v \in S, |N(v) \cap T| =1$.
Also, since  each block $\BB_i$ contains exactly one set is in $\FF'$, for each clique $C_i$, $|V(C_i) \cap T| =1$.
Thus, $T$ is an efficient dominating set, and therefore, $(G, S)$ is a YES-instance of {\sc EDS-disjointcluster}.
\end{proof}
%incomplete

%The subfamily $\FF'$ covers $U$ as $T$ dominates $S$.  Since $T$ dominates all vertices of all cliques $C_i$ with flags $f_i=1$, from each block $\BB_i$ with flags $f_i=1$, at least one set is contained in $\FF'$. Thus, $(U, \FF, \BB, \hat{f}, \hat{\ell} )$ is a YES-instance of {\SCWP}.

%Conversely, suppose that $(U, \FF, \BB, \hat{f}, \hat{\ell} )$ is a YES-instance of {\SCWP}. Thus, there exists a subfamily $\FF' \subseteq \FF$ of size $\ell$ that covers $U$ and from each block $\BB_i$ with flags $f_i=1$ at least one set is in $\FF'$. We define $T = \{v | S_v \in \FF'  \}$. Clearly, $T \subseteq V \setminus S$. Since, $\FF'$ covers $U$, $T$ dominates $S$. Since from each block $\BB_i$ with flags $f_i=1$ at least one set is in $\FF'$, $T$ dominates all vertices of all cliques $C_i$ with flags $f_i=1$. Thus, $(G, S, f, \ell)$ is a YES-instance of {\sc DS-disjointcluster}.

We now give an algorithm to solve {\ESCWP}.
\begin{lemma}\label{lemma:exact-set-cover-partiton-fpt}
{\ESCWP} can be solved in $2^{|U|}\cdot (m+|U|)^{\OO(1)}$ time. 
\end{lemma}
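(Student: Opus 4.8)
The plan is to adapt the dynamic programming algorithm of Lemma~\ref{lemma:set-cover-partiton-fpt} so that it enforces the two ``exactness'' requirements of {\ESCWP}: every element of $U$ must be covered \emph{exactly} once, so that the chosen subfamily is a partition of $U$; and from \emph{every} block exactly one set is taken. As in that lemma, I first fix an ordering $S_1,\dots,S_m$ of $\FF$ in which the members of each block $\BB_1,\dots,\BB_q$ appear consecutively, and I scan the sets in this order.

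For a subset $W\subseteq U$, an index $j\in\{1,\dots,m\}$, and a flag $b\in\{0,1\}$, I define $T[W,j,b]$ to be \textbf{true} precisely when there is a subcollection $X\subseteq\{S_1,\dots,S_j\}$ whose members are pairwise disjoint with $\bigcup_{S\in X}S=W$, such that every block all of whose sets appear among $S_1,\dots,S_j$ contributes exactly one set to $X$, while the (possibly only partially scanned) block containing $S_j$ contributes exactly $b$ sets to $X$. The pairwise-disjointness bookkeeping is exactly what captures ``covered at most once'': a set $S_{j+1}$ may be added only when $S_{j+1}\subseteq W$, and the recursion then continues on $T[W\setminus S_{j+1},j,\cdot]$; together with the requirement $W=U$ read off at the end, this yields a genuine partition of $U$.

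The recurrence splits, as in Lemma~\ref{lemma:set-cover-partiton-fpt}, according to whether $S_{j+1}$ is the first set of its block $\BB_x$. If it is not, then $S_{j+1}$ can either be left out --- keeping the current within-block count, so $T[W,j+1,b]$ draws on $T[W,j,b]$ --- or taken as the unique representative of $\BB_x$, which is allowed only when the running count is still $0$ and $S_{j+1}\subseteq W$, giving a transition from $T[W\setminus S_{j+1},j,0]$ into $T[W,j+1,1]$ (so $T[W,j+1,0]$ has no ``take'' option). If $S_{j+1}$ \emph{is} the first set of $\BB_x$, then the previous block $\BB_{x-1}$ is now closed, so we recurse only into states $T[\cdot,j,1]$: we either skip $S_{j+1}$ (reaching $T[W,j+1,0]$, the new block still owing a set) or take it (reaching $T[W,j+1,1]$ when $S_{j+1}\subseteq W$ and $T[W\setminus S_{j+1},j,1]$ holds). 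The base cases, mirroring Lemma~\ref{lemma:set-cover-partiton-fpt}, are $T[\emptyset,1,0]=\textbf{true}$, $T[S_1,1,1]=\textbf{true}$, and all other $j=1$ entries false; the answer to {\ESCWP} is $T[U,m,1]$, which simultaneously certifies a partition of $U$ and one chosen set in the last block $\BB_q$, and hence, inductively, one in every block.

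For the running time there are $2^{|U|}\cdot m\cdot 2=\OO(2^{|U|}m)$ states, and each transition performs a constant number of subset-containment and set-difference operations on subsets of $U$, each costing $\OO(|U|)$ time, so the total is $\OO(2^{|U|}\cdot m\cdot|U|)=2^{|U|}(m+|U|)^{\OO(1)}$. I expect the bulk of the verification to be the correctness proof: checking that the flag $b$ together with the disjointness discipline simultaneously encodes ``$X$ partitions $W$'' and ``every completed block has contributed exactly one set'' --- in particular that no block is ever charged two sets (the ``take'' transition is disabled once $b=1$) and that every block is eventually charged one (forced at each block boundary by recursing only into $b=1$ states, and at the very end by reading $T[U,m,1]$). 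This is a small but slightly more delicate variant of the ``at least one'' invariant used in Lemma~\ref{lemma:set-cover-partiton-fpt}.
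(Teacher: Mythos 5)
Your proposal is correct and follows essentially the same route as the paper: a dynamic program indexed by a subset $W\subseteq U$, the number $j$ of scanned sets (with blocks listed consecutively), and a $\{0,1\}$ flag recording how many sets the current block has contributed, with a set admissible only when it is contained in the remaining $W$ so that exact coverage is enforced. The only cosmetic difference is that you keep a Boolean table while the paper stores minimum sizes, which is immaterial here since any feasible solution picks exactly one set per block.
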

\begin{proof}
We give a dynamic programming algorithm to solve {\ESCWP} in the same vein as {\SCWP}. For every non-empty subset $W \subseteq U$, integer $j \in [m]$ and flag $b = \{0,1\}$, we define $OPT[W, j, b]$ as the size of the minimum cardinality subset $X$ of $\{S_{1}, \dotsc, S_{j} \}$ such that 
\begin{itemize}
\item each element of $W$ is covered exactly once by $X$, and
\item from each block $\BB_i$, there is exactly one set in $X$, except the block $\BB_x$ containing the set $S_{j}$ where we reset the flag to $b$ to indicate that exactly $b$ sets are required to be in $X$ from $\BB_x$.
\end{itemize}

%where there is exactly $f$ set from $X$. 

We have $OPT[W, 1, b] = 1$ if $W = S_{1}$ and $b=1$, else $OPT[W, 1, b] = \infty$. To compute all the values of $OPT[W, j, b]$,  we initially set all the remaining values to $\infty$ and give the following recursive formulation for $OPT[W, j+1, b]$ with $j \ge 1$. 
\begin{itemize}
\item 
\underline{Case 1} : $S_{j+1}$ is not the first set in its block $\BB_x$  
\begin{equation*}
OPT[W, j+1, b]=
\begin{cases}
OPT[W, j, b] \textrm{  if }S_{j+1} \nsubseteq W \textrm{  or } b=0 \\
min \Big\{ OPT[W, j, b], 1 + OPT[W \setminus S_{j+1}, j, 0] \Big\} \textrm{  otherwise}
\end{cases}
\end{equation*}
\item 
\underline{Case 2} : $S_{j+1}$ is the first set in its block $\BB_x$.
\begin{equation*}
OPT[W, j+1,  b] = 
\begin{cases}
\infty \textrm{  if }S_{j+1} \nsubseteq W  \textrm{  and } b=1 \\
1 + OPT[W \setminus S_{j+1}, j, 1] \textrm{  if }S_{j+1} \subseteq W  \textrm{  and } b=1 \\
OPT[W, j, 1] \textrm{  if } b=0 
\end{cases}
\end{equation*}
\end{itemize}
%\todo[inline]{From here}
The idea is similar to the earlier recursive formula for {\SCWP}. Note how we add $S_{j+1}$ to the solution only if is a subset of $W$. Since in the right-hand side of the recurrence, we have a state corresponding to $W \setminus S_{j+1}$, this ensures that every element in the universe is covered exactly once.
We compute the subproblems in increasing order subsets $W \subseteq U$ and for each $W$, in increasing order of $j$ and $b$. The solution to the problem is computed at $OPT[U,m,1]$. The running time of {\ESCWP} is same as that of {\SCWP} which is $2^{|U|}\cdot (m+|U|)^{\OO(1)}$. 
\end{proof}
%\todo[inline]{How to add details here?}

From Lemmas \ref{lemma-eds-disjoint-eq-escwp} and \ref{lemma:exact-set-cover-partiton-fpt} , we have the following corollary.

\begin{corollary} 
{\sc EDS-disjointcluster} can be solved in $2^{|S|} n^{\OO(1)}$ time.
\end{corollary}
%We construct the {\SCWP} instance from the {\sc DS-disjointcluster} instance as discussed above. It can be easily seen that there exists a solution of size $\ell$ in {\sc DS-disjointcluster} instance if and only if there exists a solution of size $\ell$ in {\SCWP} instance. %In Lemma~\ref{lemma:set-cover-partiton-fpt}, we solve {\SCWP} via dynamic programming. And $|S| = |U|$. This completes the proof.
%proof inside proof, need rewrite

Since solving {\sc EDS-disjointcluster} leads to solving {\sc EDS-CVD}, we have the following theorem.

\begin{theorem}
\label{lemma:eds-cvd-fpt}
{\sc EDS-CVD} can be solved in $3^{k} n^{\OO(1)}$ time.
\end{theorem}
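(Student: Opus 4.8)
The plan is to reuse verbatim the template that produced the $3^k n^{\OO(1)}$ bound for {\sc DS-CVD}: enumerate the intersection of the solution with the modulator, clean up the graph using the three structural lemmas established above, and reduce what remains to {\sc EDS-disjointcluster}, which the preceding corollary solves in $2^{|S|} n^{\OO(1)}$ time.

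First I would take an instance $(G,S,\ell)$ of {\sc EDS-CVD} with $|S| = k$ and loop over all guesses $S' \subseteq S$ for $X \cap S$, where $X$ is a putative efficient dominating set. For each fixed $S'$ I would discard it immediately unless the closed neighbourhoods $\{N[v] : v \in S'\}$ are pairwise disjoint, since otherwise a vertex of $N[S']$ is dominated twice and $S'$ cannot be extended. I would then delete $N[S']$: each such vertex is already dominated exactly once, and no vertex of $N(S')$ may join $X$ as it would doubly dominate a vertex of $S'$. Next, by Lemma~\ref{lemma:eds-distance-2-clique}, I would reject the guess if some clique component of $G \setminus (S \cup N(S'))$ lies entirely inside $N^{=2}(S')$; otherwise, invoking Lemmas~\ref{lemma:eds-distance-2-vertices}, \ref{lemma:eds-distance-2-clique} and \ref{lemma:eds-distance-clique-solution}, I would delete all vertices of $N^{=2}(S')$, since in every surviving clique component the unique solution vertex lies outside $N^{=2}(S')$ and already dominates the deleted vertices exactly once.

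After this cleanup I am left with $\hat G = G \setminus (N[S'] \cup N^{=2}(S'))$ and the reduced modulator $\hat S = S \setminus N[S']$, which has size $k - |S'|$ and every component of $\hat G \setminus \hat S$ is a clique; the remaining task is exactly to decide whether $\hat G$ has an efficient dominating set disjoint from $\hat S$, i.e.\ an instance of {\sc EDS-disjointcluster}. By the corollary above, this instance is solved in $2^{|\hat S|} n^{\OO(1)} = 2^{k - |S'|} n^{\OO(1)}$ time; a ``yes'' answer with solution $D$ yields the efficient dominating set $D \cup S'$ of $G$, whose (fixed) size is then compared against $\ell$, and conversely any efficient dominating set $X$ of $G$ with $X \cap S = S'$ survives the cleanup and restricts to a solution of the {\sc EDS-disjointcluster} instance — this equivalence, which is precisely what the three structural lemmas are designed to supply, is the only delicate point; everything else is bookkeeping inherited from the {\sc DS-CVD} argument. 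Summing over all guesses, the total running time is
\[
\sum_{i=0}^{k} {k \choose i}\, 2^{k-i}\, n^{\OO(1)} \;=\; 3^{k} n^{\OO(1)}
\]
by the binomial theorem, which proves the theorem. The main obstacle, as indicated, is confirming that the successive deletions of $N[S']$ and $N^{=2}(S')$ are safe, so that no efficient dominating set extending $S'$ is destroyed and none is spuriously created.
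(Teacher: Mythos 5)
Your proposal is correct and follows essentially the same route as the paper: guess $S' = X \cap S$, verify pairwise disjointness of the closed neighbourhoods, delete $N[S']$ and (after rejecting guesses with a clique fully inside $N^{=2}(S')$) delete $N^{=2}(S')$ using Lemmas~\ref{lemma:eds-distance-2-vertices}--\ref{lemma:eds-distance-clique-solution}, reduce to {\sc EDS-disjointcluster}, and sum $\binom{k}{i}2^{k-i}$ over all guesses. No substantive differences from the paper's argument.
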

\begin{proof}
Given an instance $(G, S, k)$ of {\sc EDS-CVD}, for each guess $S' \subseteq S$ with $|S'| = i$, we construct a {\sc EDS-disjointcluster} instance $(\hat{G}, \hat{S}, f, \hat{\ell})$ with $|\hat{S}| = k-i$ and solve it with running time $2^{k-i} n^{\OO(1)}$. If  {\sc EDS-disjointcluster} returns $\infty$ for all choices of $S'$, return NO as there is no EDS in the graph.  The total running time is $\sum\limits_{i=1}^{k} {k \choose i} 2^{k-i} n^{\OO(1)}$ which is $3^k n^{\OO(1)}$.
\end{proof}
%Now we are left with a new graph with $S'' \subseteq S$ connected with a disjoint set of cliques $B$ = \{$C_{1}$, $C_{2}$, ..., $C_{t}$\}.

\subsubsection{Independent Dominating Set}
\label{sec:ids-cluster-upper-bounds}
%\defparproblem{{\sc IDS-CVD}}{An undirected graph $G = (V, E), S \subseteq V(G)$ such that every component of $G \setminus S$ is a clique and an integer $\ell$.}{$|S|$}{Is there an independent dominating set in $G$ of size $\ell$?}

\begin{theorem}
\label{lemma:ids-cvd-fpt}
There is a $3^{k} n^{\OO(1)}$ algorithm to solve {\sc IDS-CVD}.
\end{theorem}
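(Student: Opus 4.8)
The plan is to follow the same guess-and-reduce strategy used for {\sc DS-CVD} and {\sc EDS-CVD}. Let $X$ be a hypothetical independent dominating set of $G$. I would iterate over all $2^{|S|}$ guesses $S' \subseteq S$ for $X \cap S$, discarding immediately any $S'$ that is not independent in $G$ (such an $S'$ cannot be a subset of any independent dominating set). Fix such an $S'$. Since $X$ is independent and $S' \subseteq X$, no vertex of $N(S')$ can lie in $X$; moreover every vertex of $N[S']$ is already dominated by $S'$. Hence -- unlike in {\sc DS-CVD}, where the vertices of $N[S'] \cap (V \setminus S)$ are retained because they can still enter the solution -- here it is safe to delete all of $N[S']$. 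Write $\hat{G} = G \setminus N[S']$ and $\hat{S} = S \setminus N(S')$; the components of $\hat{G} \setminus \hat{S}$ are sub-cliques $C_1,\dots,C_q$ of the original cliques, and since $X \cap N[S'] = S'$, the remaining part $X \setminus S'$ lives entirely inside $\hat{G} \setminus \hat{S}$.

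Next I would show that the residual task is exactly: find an independent dominating set $T$ of $\hat{G}$ with $T \cap \hat{S} = \emptyset$ (then $X = S' \cup T$, and conversely every independent dominating set of $G$ decomposes this way). The key point is that no vertex of $\hat{S}$ is adjacent to $S'$, so the vertices of $\hat{S}$ and of every $C_i$ must be dominated from within $\hat{G} \setminus \hat{S}$. Because each $C_i$ is a clique and a connected component of $\hat{G} \setminus \hat{S}$, independence forces $|T \cap V(C_i)| \le 1$ while domination of $V(C_i)$ forces $|T \cap V(C_i)| \ge 1$; thus $T$ picks exactly one vertex per clique, and representatives of distinct cliques are automatically non-adjacent, so one-per-clique already guarantees independence of $T$. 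This is precisely an instance of a ``Set Cover with Partition'' variant with universe $U = \hat{S}$, sets $S_v = N(v) \cap \hat{S}$ for $v \in V(\hat{G}) \setminus \hat{S}$, and blocks $\BB_i = \{S_v : v \in V(C_i)\}$, where we must pick exactly one set from each block and cover $U$ -- the same as {\ESCWP} except that coverage need not be exact.

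For this variant I would reuse, essentially verbatim, the dynamic program behind Lemma~\ref{lemma:exact-set-cover-partiton-fpt}: define $OPT[W,j,b]$ as the minimum number of chosen sets among $S_1,\dots,S_j$ that cover $W \subseteq U$ and take exactly one set from each completed block, with the block currently containing $S_j$ reset to require exactly $b \in \{0,1\}$ sets; the recurrences are those of {\ESCWP} with the ``$S_{j+1} \subseteq W$'' side-conditions dropped (exact coverage is no longer needed), and the answer is $OPT[U,m,1]$. This runs in $2^{|U|}(m+|U|)^{\OO(1)}$ time, giving a $2^{|S|} n^{\OO(1)}$ algorithm for the disjoint version. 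Summing over all guesses, $\sum_{i=0}^{k} \binom{k}{i} 2^{k-i} n^{\OO(1)} = 3^{k} n^{\OO(1)}$; among the valid guesses we return the smallest $|S'| + q$ (where $q$ depends on $S'$), and NO if no guess is valid.

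The step I expect to require the most care is justifying the reduction to the ``exactly one per block'' problem rather than to {\SCWP} (which only demands at least one set per flagged block): one must verify that the independence of $X$ genuinely decomposes into ``$G[S']$ edgeless'' plus ``at most one vertex per residual clique'', and note that a feasible {\SCWP}-style solution using two sets from a block cannot in general be trimmed down, so {\SCWP} with all flags $1$ would be strictly too weak here. Beyond that, the remaining subtleties are routine: handling singleton residual cliques (the lone vertex is forced into $T$), confirming that deleting all of $N[S']$ rather than only $N[S'] \cap S$ is sound, and observing that -- in contrast to {\sc EDS-CVD} -- there is no $N^{=2}$-type obstruction, since independent domination imposes no exactness on how vertices are dominated.
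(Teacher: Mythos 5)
Your proposal is correct and follows essentially the same route as the paper: guess the independent intersection $S'\subseteq S$, observe that independence forces exactly one solution vertex per residual clique, reduce to the ``exactly one set per block'' variant of {\SCWP}, and solve it by the same $2^{|U|}$ dynamic program, for $3^k n^{\OO(1)}$ overall. The paper's own argument is only a brief sketch; your write-up fills in the same steps (in particular the soundness of deleting all of $N[S']$ and the one-per-clique decomposition) more explicitly, with only a cosmetic slip in writing $\hat S = S\setminus N(S')$ where $S\setminus N[S']$ is meant.
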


\begin{proof}
The idea remains almost the same as in {\sc DS-CVD}.
For the guess in the modulator $S' \subseteq S$, the graph $G[S']$ has to be independent. 
If $S'$ dominates all the vertices of a clique, we can delete the clique as one cannot pick any vertex from this clique while preserving independence.
 In the graph obtained after deleting $N[S'] \cap S$, we must pick exactly one vertex from each clique to dominate vertices in $S \setminus N[S']$. Hence the  {\sc IDS-CVD} instance can be reduced to the {\SCWP} problem instance with a slight modification where instead of \textit{at least} picking one set from each block, we pick \textit{exactly} one set. A similar dynamic programming algorithm can solve this problem giving us an overall running time of $3^k n^{\OO(1)}$. 
\end{proof}

\subsubsection{Dominating Clique}
\label{sec:dc-cluster-upper-bounds}

\begin{theorem}
\label{lemma:dom-clique-cvd-fpt}
There is a $3^{k} n^{\OO(1)}$ algorithm to solve {\sc DC-CVD}.
\end{theorem}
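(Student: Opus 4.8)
The plan is to follow the same modulator-guessing strategy as in {\sc DS-CVD}, but to exploit the extra rigidity forced by the clique requirement. Fix an optimal dominating clique $X$ in $G$ and write $S' = X \cap S$ and $T = X \setminus S' = X \cap (V \setminus S)$. Since $X$ is a clique, $S'$ must induce a clique in $G$; moreover any two vertices lying in distinct clique components of $G \setminus S$ are non-adjacent, so $T$ is entirely contained in a single clique component $C_i$ of $G \setminus S$ (or $T = \emptyset$), and every vertex of $T$ is adjacent to every vertex of $S'$, i.e. $T \subseteq C_i^{*} := \{ v \in V(C_i) : S' \subseteq N_G(v)\}$. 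So after guessing $S'$ (at most $2^{k}$ choices) it suffices to additionally guess the at most $n$ candidate components $C_i$.

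Next I would analyse the domination conditions once $S'$ and $C_i$ are fixed. A vertex $u$ in a clique component $C_j$ with $j \neq i$ has $N_G[u] \cap X = N_G[u] \cap S'$ (as $X \cap V(C_j) = \emptyset$), so such a $u$ is dominated if and only if it has a neighbour in $S'$; hence the guess can only succeed if \emph{every} clique component other than $C_i$ is completely dominated by $S'$, a condition checkable in polynomial time. If $T \neq \emptyset$, then picking any $t \in T$ dominates all of $C_i$, so $C_i$ imposes no further constraint, and the only remaining requirement is that $T$ dominate $S \setminus N[S']$. Thus, for a fixed pair $(S', C_i)$ passing the checks, the smallest admissible $T$ is obtained by solving the {\SetCover} instance with universe $U = S \setminus N[S']$ (of size at most $k - |S'|$) and family $\{ N_G(v) \cap U : v \in C_i^{*} \}$, and the optimal dominating clique with this signature has size $|S'|$ plus the optimum of this set cover. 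The degenerate case $T = \emptyset$, i.e. $X = S'$, is handled directly by testing whether the clique $S'$ dominates all of $V(G)$.

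The algorithm therefore enumerates all $S' \subseteq S$ and all clique components $C_i$ of $G \setminus S$, performs the polynomial-time feasibility checks, solves the resulting {\SetCover} instance on a universe of size at most $k - |S'|$ using Theorem~\ref{thm:set-cover-para-universe} in time $2^{k-|S'|} n^{\OO(1)}$, and returns the minimum of $|S'| + (\text{set-cover optimum})$ over all guesses, comparing it against $\ell$. The total running time is $\sum_{i=0}^{k} \binom{k}{i}\, n \cdot 2^{k-i} n^{\OO(1)} = 3^{k} n^{\OO(1)}$ by the binomial theorem, which is the claimed bound. (Equivalently, one can phrase the inner problem as the variant of {\SCWP} in which the blocks are the collections $\{S_v : v \in V(C_i)\}$ and exactly one block is permitted to contribute sets; the dynamic program of Lemma~\ref{lemma:set-cover-partiton-fpt} then adapts essentially verbatim.)

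The main obstacle will be establishing the structural dichotomy of the second paragraph cleanly: that once $X \cap S$ is fixed, the part of the dominating clique outside the modulator is confined to a single clique component, that all \emph{other} components must already be dominated by $S'$, and that the chosen component is dominated for free as soon as $T$ is nonempty. Handling the $T = \emptyset$ corner case and the common-neighbourhood restriction $T \subseteq C_i^{*}$ correctly is exactly what makes the reduction to plain {\SetCover} go through; everything after that point is routine.
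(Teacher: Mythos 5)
Your proposal is correct and follows essentially the same route as the paper: guess $S'=X\cap S$ (which must be a clique), observe that the rest of the dominating clique lives in a single clique component whose identity is forced unless $S'$ already dominates all components, and reduce the residual task of dominating $S\setminus N[S']$ to a {\SetCover} instance on a universe of size $k-|S'|$, giving $\sum_i\binom{k}{i}2^{k-i}n^{\OO(1)}=3^kn^{\OO(1)}$. Your restriction to the common neighbourhood $C_i^{*}=\{v\in V(C_i): S'\subseteq N_G(v)\}$ is in fact slightly more careful than the paper's phrasing ``$C_i\cap N(S')$'', which is what is actually needed for the output to be a clique.
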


\begin{proof}
Again, the idea remains almost the same as in {\sc DS-CVD}.
 The guess in the modulator $S' \subseteq S$ has to be a clique. 
 In the graph obtained after deleting $N[S'] \cap S$, suppose there are two clique components, without loss of generality say $C_1$ and $C_2$ in $G-S$ with vertices remaining to be dominated. We claim that it is a NO-instance of {\sc DC-CVD}. Note that we have to pick at least one vertex from each of the clique components $C_1$ and $C_2$ to dominate the remaining vertices in the components.  But there is no edge between the picked vertices as they are in different components of $G-S$. 
 
 Hence, there is at most one component in $G-S$ with vertices remaining to be dominated. Suppose there is exactly one such component $C_i$ for $i \in m$. If $S \setminus N[S']$ (undominated vertices in $S$) is empty, then we arbitrarily pick any vertex in $C_i \cap N(S')$. Note that we can only pick vertices in $C_i \cap N(S')$ so that the solution is a clique. Otherwise, the problem boils down to picking vertices in $C_i \cap N(S')$ so that the undominated vertices of $C_i$ and $S \setminus N[S']$ are dominated.
 
 Suppose there is no component in $G-S$ with vertices remaining to be dominated. The problem boils down to guessing a clique component $C_i$ for $i \in m$ and picking vertices $C_i \cap N(S')$ so that the vertices in $S \setminus N[S']$ are dominated.
 
 In both cases, we can define a {\SetCover} instance with the universe as $S \setminus N[S']$ and sets and neighborhood of vertices in $C_i \cap N(S')$ as the sets in the family. We then use Theorem \ref{thm:set-cover-para-universe} to solve {\SetCover} in $2^{|S \setminus N[S']|}$ time, leading to overall time of $3^{k} n^{\OO(1)}$.
\end{proof}
\subsubsection{Total and Threshold Dominating Set}
\label{sec:tds-thds-cluster-upper-bounds}

Again, our FPT algorithm starts with making a guess $S'$ for the solution's intersection with $S$. We delete vertices in $N[S']$ that have $r$ neighbors in $G[S']$.
For the rest of the vertices $v$ in $G$, we associate a weight $w(v)$ denoting the remaining number of times the vertex must be dominated after $S'$ is added to the dominating set (to be more precise, the integer $r-|N(v) \cap S'|$).
Now we are left to solve the following problem (for each guess $S'$).

\defparproblem{{\sc ThDS-disjointcluster}}{An undirected graph $G = (V, E), S \subseteq V(G)$ such that $G \setminus S$ is a cluster graph, weight function $w :V \rightarrow [r] \cup \{0\}$ and an integer $\ell$.}{$|S|$}{Is there a subset $D \in V \setminus S$ of size $\ell$ in $G$ such that every vertex $v \in V$ has at least $w(v)$ neighbours in $D$?}

We now give an $(r+1)^{|S|} n^{\OO(1)}$ algorithm to solve {\sc ThDS-disjointcluster}.
For each clique component $C_i$, we have to pick at least $\max\limits_{v_j \in C_i}^{} w(v_j)+1$ vertices to dominate all the vertices in $C_i$ at least $r$ times. This can be viewed as the weight corresponding to the clique component $C_i$. 

We construct a variant of the {\SetCover} instance as done in {\sc DS-CVD}.
Since the elements of $U$ have to be covered multiple times, the problem that we reduce to is a weighted multicover problem. We have the following problem that also handles the additional covering requirement in each of the cliques of $G$. 

\defparproblem{{\WSMCWP}}{A universe $U$, a family of sets $\FF = \{S_{1}, \dotsc, S_{m} \}$, a partition $\BB=(\BB_1,\BB_2,\dotsc, \BB_q)$ of $\FF$, weight functions $w_U : U \rightarrow [r]$, $w_{\BB}: \BB \rightarrow [r] \cup \{0\}$ and an integer $\ell$.}{$|U|$}{Does there exist a subset $\FF' \subseteq \FF$ of size $\ell$ such that every $ u \in U$ is covered at least $w_U(u)$ times and from each block $\BB_i$ at least $w_{\BB}(\BB_i)$ sets are picked?}

%It can be easily seen that there exists a solution of size $\ell$ in {\sc ThDS-disjointcluster} instance if and only if there exists a solution of size $\ell$ in {\WSMCWP} instance. 

%\iffalse
\sloppy Let $(G,S, w, \ell)$ be an instance of {\sc ThDS-disjointcluster}.
We define an instance $(U, \FF, \BB, w_U, w_{\BB}, \ell)$ of {\WSMCWP} where
\begin{itemize}
\item $U = S$, 
\item $\FF = \{S_v | v \in V \setminus S\}$ where $S_v = N(v) \cap S$,  
\item $\BB = \{ \BB_i | i \in [q]\}$ for cliques $C_i$, 
\item  $w_U(u) = w(u)$ for $u \in U$, and
\item $w_\BB(\BB_i) = \max_{v \in V(C_i)} w(v)$ for  block $\BB_i$.
\end{itemize}
 %Let  $\BB_i = \{S_v | v \in V(C_i)\}$ . For $u \in U$, we define $w_U(u) = w(u)$. For block $\BB_i$, we define $w_\BB(\BB_i) = \max_{v \in V(C_i)} w(v)$. 
 
\begin{lemma}
$(G,S, w, \ell)$ is a YES-instance of {\sc ThDS-disjointcluster} if and only if $(U, \FF, \BB, w_U, w_\BB, \ell)$ is a YES-instance of {\WSMCWP}.
\end{lemma}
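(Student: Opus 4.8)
The statement is a standard equivalence-of-formulations lemma, entirely parallel to Lemmas~\ref{lemma-ds-disjoint-eq-scwp} and~\ref{lemma-eds-disjoint-eq-escwp}, so I would prove it by establishing the two directions of the bi-implication via the natural correspondence between vertices of $V \setminus S$ and sets of $\FF$. The plan is to carry the correspondence $v \leftrightarrow S_v = N(v) \cap S$ back and forth, checking that the covering requirement on $U = S$ translates to the domination requirement on $S$, and that the per-block lower bound $w_\BB(\BB_i)$ translates to the per-clique lower bound on how many vertices of $C_i$ must be selected.

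\textbf{Forward direction.} Suppose $(G,S,w,\ell)$ is a YES-instance of {\sc ThDS-disjointcluster}, witnessed by $D \subseteq V \setminus S$ with $|D| = \ell$ such that every $v \in V$ has at least $w(v)$ neighbours in $D$. Set $\FF' = \{S_v \mid v \in D\}$; since $v \mapsto S_v$ is defined on all of $V \setminus S$ and $D \subseteq V \setminus S$, we have $\FF' \subseteq \FF$ and $|\FF'| \le \ell$ (and $= \ell$ after noting distinctness can be assumed, or simply padding). For each $u \in U = S$, the number of sets of $\FF'$ containing $u$ is exactly $|\{v \in D : u \in N(v)\}| = |N(u) \cap D| \ge w(u) = w_U(u)$, so $u$ is covered at least $w_U(u)$ times. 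For each block $\BB_i$ corresponding to clique $C_i$, the vertices of $C_i$ are pairwise adjacent, so if $v^* \in V(C_i)$ realizes $w_\BB(\BB_i) = \max_{v \in V(C_i)} w(v)$, then $v^*$ needs $w_\BB(\BB_i)$ neighbours in $D$; the only candidates outside $V(C_i)$ lie in $S$, which is disjoint from $D$, so all of $v^*$'s neighbours in $D$ other than possibly $v^*$ itself lie in $V(C_i)$, and counting $v^*$ if it is in $D$, we get $|V(C_i) \cap D| \ge w_\BB(\BB_i)$, i.e. block $\BB_i$ contributes at least $w_\BB(\BB_i)$ sets to $\FF'$. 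Hence $\FF'$ witnesses a YES-instance of {\WSMCWP}.

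\textbf{Reverse direction.} Conversely, given $\FF' \subseteq \FF$ of size $\ell$ covering each $u \in U$ at least $w_U(u)$ times and picking at least $w_\BB(\BB_i)$ sets from each block, define $D = \{v \mid S_v \in \FF'\} \subseteq V \setminus S$. Each $u \in S$ has $|N(u) \cap D| = |\{v : u \in S_v, S_v \in \FF'\}| \ge w_U(u) = w(u)$. Each vertex $v \in V(C_i)$ has all its neighbours inside $V(C_i) \cup S$; the neighbours in $S$ need not help, but the neighbours inside $V(C_i)$ that lie in $D$ number at least $w_\BB(\BB_i) - 1 \ge w(v) - 1$ after excluding $v$ itself, and since $C_i$ is a clique, adding back $v$'s own membership gives $v$ at least $w(v)$ neighbours in $D$ when $v \in D$, and exactly $|V(C_i)\cap D| \ge w_\BB(\BB_i) \ge w(v)$ neighbours among the other clique vertices when $v \notin D$; either way the threshold is met. (Here I should be slightly careful: the cleanest argument is that $|V(C_i) \cap D| \ge w_\BB(\BB_i) \ge w(v)$, and for $v \notin D$ all these are genuine neighbours, while for $v \in D$ we have $|V(C_i) \cap D| - 1 \ge w_\BB(\BB_i) - 1$; since the flag construction in the preceding reduction deletes exactly the already-satisfied vertices, one checks the bookkeeping matches $r - |N(v) \cap S'|$.) Thus $D$ witnesses a YES-instance of {\sc ThDS-disjointcluster}.

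\textbf{Main obstacle.} The only subtle point — and the step I would write most carefully — is the off-by-one in the reverse direction for vertices $v \in V(C_i)$ that are themselves selected into $D$: a selected clique vertex is \emph{dominated} by the other selected vertices of its clique but not by itself, so one must verify $w_\BB(\BB_i)$ is set large enough (namely to $\max_{v \in V(C_i)} w(v)$, not that plus one) given that the preprocessing step already deleted vertices with enough neighbours in $G[S']$ and folded the rest into the weight function. Everything else is the routine translation already used twice in the paper, so I would keep those parts terse and refer back to the proofs of Lemmas~\ref{lemma-ds-disjoint-eq-scwp} and~\ref{lemma-eds-disjoint-eq-escwp}.
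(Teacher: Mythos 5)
Your forward direction is correct, and in fact more careful than the paper's own write-up of it: the key observation that every neighbour in $D$ of a clique vertex lies inside its own clique (because $D \cap S = \emptyset$ and the cliques are the components of $G \setminus S$) cleanly yields $|V(C_i) \cap D| \ge w_\BB(\BB_i)$. The paper's proof follows the same translation $v \leftrightarrow S_v$ in both directions but merely asserts the counts $|N(v) \cap T| = w(v)$ and $|V(C_i) \cap T| = \max_{v} w(v)$ without the verification you supply.

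The genuine gap is in your reverse direction, and it is exactly the off-by-one that you flag but do not close. If $v \in V(C_i) \cap D$ and $w(v) = w_\BB(\BB_i) = \max_{u \in V(C_i)} w(u)$, then the neighbours of $v$ in $D$ are the vertices of $V(C_i) \cap D$ \emph{other than} $v$ ($N(v)$ is the open neighbourhood, so ``adding back $v$'s own membership'' is not available), and the block constraint only guarantees $|V(C_i) \cap D| - 1 \ge w_\BB(\BB_i) - 1 = w(v) - 1$. This is not a presentational issue: take $S = \{s\}$, one clique $C_1 = \{a,b\}$ with both vertices adjacent to $s$, and $w \equiv 1$; then $\FF' = \{S_a\}$ satisfies {\WSMCWP} with $\ell = 1$, yet neither $\{a\}$ nor $\{b\}$ gives its own vertex a neighbour in $D$, so {\sc ThDS-disjointcluster} has no solution of size $1$. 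Your parenthetical appeal to ``the bookkeeping matches $r - |N(v) \cap S'|$'' does not repair this, since the preprocessing only removes vertices already satisfied by $S'$ and cannot change the fact that a picked clique vertex does not dominate itself. Closing the gap requires changing the formulation, e.g.\ raising the block demand when the maximum-weight vertex of $C_i$ must be picked (note the paper's own prose says one must pick $\max_{v_j \in C_i} w(v_j) + 1$ vertices from $C_i$, which is in tension with the definition $w_\BB(\BB_i) = \max_v w(v)$). Since the paper's proof silently asserts precisely this step, you have put your finger on a real weak point of the argument rather than missed an idea the paper contains --- but as written your proof does not establish the reverse direction.
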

\begin{proof}
Suppose  $(G,S, w, \ell)$ is a YES-instance of {\sc ThDS-disjointcluster}.
Thus, there exist $T \subseteq V \setminus S$ such that for each $v \in S, |N(v) \cap T| =  w(v)$ and for each clique $C_i$, $|V(C_i) \cap T | = \max_{v \in V(C_i)} w(v)$. 
Let $\FF' = \{S_v| v \in T\}$.
Since for $v \in S, |N(v) \cap T| =  w(v)$ and $U=S$, every element $u \in U$ is covered exactly $w_U(u)$ times by $\FF'$. 
Also, since, for each clique $C_i$, $|V(C_i) \cap T | = \max_{v \in V(C_i)} w(v) = w_\BB(\BB_i)$,  each block $\BB_i$ has at least $w_{\BB}(\BB_i)$ sets in $\FF'$. Therefore, $(U, \FF, \BB, w_U, w_\BB, \ell)$ is a YES-instance of {\WSMCWP}. 
Conversely, suppose that $(U, \FF, \BB, w_U, w_\BB, \ell)$ is a YES-instance of {\WSMCWP}. 
Thus, there exists a subfamily $\FF' \subseteq \FF$ such that every element $u \in U$ is covered exactly $w_U(u)$ times by $\FF'$ and each block $\BB_i$ contains $w_\BB(\BB_i)$ sets in $\FF'$. We define $T = \{v | S_v \in \FF'  \}$. 
Clearly, $T \subseteq V \setminus S$. Since, $\FF'$ covers elements in $U$ $w_U(u)=w(v)$ times, for each $v \in S, |N(v) \cap T| =w(v)$. Also, since  each block $\BB_i$ contains  $w_\BB(\BB_i)$ sets  in $\FF'$, for each clique $C_i$, $|V(C_i) \cap T | =  \max_{v \in V(C_i)} w(v)$. 
Thus, $(G,S, w, \ell)$ is a YES-instance of {\sc ThDS-disjointcluster}.
\end{proof}
 
We now give a dynamic programming algorithm to solve {\WSMCWP}. 

\begin{lemma}
There is an algorithm to solve {\WSMCWP} with running time $(r+1)^{|U|} (m+|U|)^{\OO(1)}$.
\end{lemma}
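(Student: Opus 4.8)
The plan is to reuse, almost verbatim, the dynamic programming scheme behind Lemma~\ref{lemma:set-cover-partiton-fpt} for \SCWP, making two changes that account for multiplicities. First, rather than recording which elements of $U$ are already covered via a subset $W\subseteq U$, I will record how many more times each element still needs to be covered; that is, the table will be indexed by a \emph{residual-demand vector} $\mathbf{d}\in\{0,1,\dots,r\}^{|U|}$, which is exactly what produces the $(r+1)^{|U|}$ factor. Second, since each block $\BB_i$ must now contribute at least $w_{\BB}(\BB_i)$ sets instead of merely one, the boolean flag used in the \SCWP{} recurrence is replaced by an integer flag $b\in\{0,\dots,r\}$ that records how many more sets are still required from the (possibly only partially revealed) block containing the current set.

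Concretely, fix the ordering $S_1,\dots,S_m$ of \FF{} in which each block forms a contiguous run, and let $\BB_{x(j)}$ be the block containing $S_j$. For $\mathbf{d}\in\{0,\dots,r\}^{|U|}$, $j\in[m]$ and $b\in\{0,\dots,r\}$, define $OPT[\mathbf{d},j,b]$ to be the minimum size of a subfamily $X\subseteq\{S_1,\dots,S_j\}$ such that (i) every $u\in U$ is covered at least $d_u$ times by $X$, (ii) $|X\cap\BB_i|\ge w_{\BB}(\BB_i)$ for every block $\BB_i$ with $i<x(j)$, and (iii) $|X\cap\BB_{x(j)}|\ge b$; set the value to $\infty$ if no such $X$ exists. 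Write $\mathbf{d}\ominus S$ for the vector obtained from $\mathbf{d}$ by decreasing every coordinate indexed by $S$ by one and flooring at $0$ (flooring is sound since covering an element beyond its demand is harmless). When $S_{j+1}$ lies in the same block as $S_j$ the recurrence is
\[
OPT[\mathbf{d},j+1,b]=\min\bigl\{\, OPT[\mathbf{d},j,b],\; 1+OPT[\mathbf{d}\ominus S_{j+1},\,j,\,\max(b-1,0)]\,\bigr\},
\]
split according to whether $S_{j+1}$ is taken. When $S_{j+1}$ is the first set of a new block $\BB_x$ (so $S_j\in\BB_{x-1}$), we first ``close off'' $\BB_{x-1}$ by recursing with its true requirement $w_{\BB}(\BB_{x-1})$ in the third slot, and then note that only the single set $S_{j+1}$ of $\BB_x$ is available so far: if $b\ge 2$ the state is infeasible ($\infty$); if $b=1$ we are forced to take $S_{j+1}$, giving $1+OPT[\mathbf{d}\ominus S_{j+1},\,j,\,w_{\BB}(\BB_{x-1})]$; and if $b=0$ we take the minimum over taking $S_{j+1}$ or not, exactly as in Case~2 of the \SCWP{} recurrence. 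For the base case $j=1$ only $S_1$ is available, so $OPT[\mathbf{d},1,b]=\infty$ whenever $b\ge 2$, and otherwise the value is $0$ if $\mathbf{d}=\mathbf{0}$, is $1$ if $\mathbf{d}\neq\mathbf{0}$ and $d_u\le 1$ for $u\in S_1$ and $d_u=0$ for $u\notin S_1$, and is $\infty$ in all remaining cases. The instance is a YES-instance iff $OPT[\mathbf{d}^*,m,w_{\BB}(\BB_q)]\le\ell$, where $d^*_u=w_U(u)$ and $\BB_q$ is the last block.

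Correctness is proved by the same induction on $j$ as in Lemma~\ref{lemma:set-cover-partiton-fpt}: at each step one branches on whether the optimal $X$ contains $S_{j+1}$, and imposing requirement $w_{\BB}(\BB_{x-1})$ precisely when $S_{j+1}$ opens a new block is correct because that is exactly the moment when all sets of $\BB_{x-1}$ have been revealed. For the running time, the flag $b$ only needs to range over at most $\min(r,|\BB_{x(j)}|)+1=\OO(m)$ relevant values, so there are $(r+1)^{|U|}\cdot m\cdot\OO(m)$ table entries, each computed in $\OO(|U|)$ time for the vector subtraction and comparison, giving a total of $(r+1)^{|U|}(m+|U|)^{\OO(1)}$.

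The step I expect to be the main obstacle is getting the new-block transition exactly right: unlike in the boolean setting, the flag may demand several sets from a block of which only one has yet been revealed, so the recurrence must explicitly detect and discard the infeasible subcases, and one must argue that carrying a single integer flag for the current block — rather than any richer information about earlier partial blocks — is sufficient. This is where the contiguity of the blocks in the ordering is essential, since it guarantees that each block is ``closed off'' at a single, well-defined transition and never reopened.
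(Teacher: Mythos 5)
Your proposal is essentially the paper's own proof: the same dynamic program over residual-demand vectors in $\{0,\dots,r\}^{|U|}$ with an integer flag $b\in\{0,\dots,r\}$ tracking the outstanding requirement of the current (contiguous) block, the same two-case recurrence distinguishing whether $S_{j+1}$ opens a new block (with the flag reset to $w_{\BB}(\BB_{x-1})$ when the previous block is closed off), and the same $(r+1)^{|U|}\cdot m\cdot(r+1)\cdot\OO(|U|)$ accounting. The only slip is in your base case: for $b=1$ and $\mathbf{d}=\mathbf{0}$ you return $0$, but your own condition (iii) forces $S_1\in X$, so the value must be $1$; this is a one-line fix and does not affect the approach.
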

\begin{proof}

For every $j \in [m]$, for every weight vector $w=(w_{u_1},w_{u_2}, \dotsc, w_{u_{|U|}})$ (where $u_i \in U$, $w_{u_i} \in [r] \cup {0}$) and flag $b = \{0,1, \dotsc ,r\}$, we define $OPT[j, w, b]$ as the cardinality of the minimum subset $X$ of $\{S_{1}, \dotsc, S_{j} \}$ such that each element $u_i \in U$ is covered at least $w_{u_i}$ times (note that $w_{u_i}=0$ indicates that there is no need to cover $u_i$) and from each block $\BB_i$, there is at least $w_{\BB}(\BB_i)$ sets in $X$ except the block $\BB_x$ containing the set $S_{j}$ where we reset the weight to $b$ to indicate that at least $f$ sets are required in that block.

%\todo[inline]{Diptapriyo: Is it the case that $w_{u_i} \leq 1$ here? I don't see it as a reviewer claims.}
%\todo[inline]{Ashwin: It is not.}

%where there is at least $f$ sets in $X$. 

We have $OPT[1, w, b] = 1$ if $S_{1}$ covers every element in $u_i \in U$ at least $w_{u_i}$ times and $b \le 1$, else $OPT[1, w, b] = \infty$. 
To compute all the values of $OPT[j, w, b]$, we initially set all the remaining values to $\infty$ and give the following recursive formulation for $OPT[j+1, w, b]$ with $j \ge 1$. 
\begin{itemize}
\item 
\underline{Case 1} : $S_{j+1}$ is not the first set in its block $\BB_x$. 
\begin{equation*}
\begin{split}
OPT[j+1, w, b] = \min \Big\{ OPT[j, w, b], 
 1 + OPT[j, w', \max\{b-1,0\}] \Big\}
\end{split}
\end{equation*}
where $w'$ is the weight-vector after subtracting $1$ from $w_{u_i}$ for each of the elements $u_i \in S_{j+1}$ where $w_{u_i} > 0$.
\item 
\underline{Case 2} :  $S_{j+1}$ is the first set in its block $\BB_x$.
\begin{equation*}
OPT[j+1, w, b] = 
\begin{cases}
1 + OPT[j, w', w_{\BB}(\BB_{x-1})] \text{ if } b=1 \\
min \Big\{ OPT[j, w, w_{\BB}(\BB_{x-1})],  
1 + OPT[j, w', w_{\BB}(\BB_{x-1})] \Big\}  \text{ if } b=0 \\
\infty \textrm{ otherwise }
\end{cases}
\end{equation*}

where $w'$ is the weight-vector after subtracting $1$ from $w_{u_i}$ for each of the elements $u_i \in S_{j+1}$ where $w_{u_i} > 0$.
\end{itemize}

Again the idea is similar to the recursive formulation of {\sc DS-CVD} with the choice of whether $S_{j+1}$ is in the optimal solution or not.
When it is picked, we decrease the weight requirements of the elements in the set by one and get the new weight vector $w'$.
When $S_{j+1}$ is the first set in the block, in the recursive subproblem, we set the flag to $w_{\BB}(\BB_{x-1})$ corresponding to the block $\BB_{x-1}$ that contains the set $S_j$.

We compute the subproblems in increasing order of $j$, $w$, and $b$. The solution to the problem is computed at $OPT[m,w_1,w_{\BB}(\BB_{x})]$ where $w_1$ corresponds to the weight-vector from the input function $w_U$ and $w_{\BB}(\BB_{x})$ is the weight of the block $\BB_x$ containing $S_{m}$.
The number of problems is $ m \cdot (r+1)^{|U|} \cdot (r+1)$ and for each subproblem, we take $\OO(|U|)$ time to update the weight vector.
%We also know that $|S| = |U|$.
Hence the total running time is $\OO(|\FF| \cdot (r+1)^{|U|} \cdot |U|) = (r+1)^{|U|} (m+|U|)^{\OO(1)}$. 
%$\OO((n-|S|) \cdot (r+1)^{|S|} \cdot |S|) = \OO^*((r+1)^{|S|})$. 
\end{proof}

\begin{corollary}
There is an algorithm to solve {\sc ThDS-disjointcluster} with running time $(r+1)^{|S|} n^{\OO(1)}$.
\end{corollary}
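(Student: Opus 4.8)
The plan is to chain together the two immediately preceding results of this subsection. Given an instance $(G, S, w, \ell)$ of {\sc ThDS-disjointcluster}, I would first apply the reduction already described above to produce the {\WSMCWP} instance $(U, \FF, \BB, w_U, w_{\BB}, \ell)$ with $U = S$, $\FF = \{S_v \mid v \in V \setminus S\}$ where $S_v = N(v) \cap S$, blocks $\BB_i$ indexed by the clique components $C_i$ of $G \setminus S$, $w_U(u) = w(u)$ for $u \in U$, and $w_{\BB}(\BB_i) = \max_{v \in V(C_i)} w(v)$. Building all the neighborhoods $S_v$, the partition of $\FF$ into blocks, and the two weight functions is clearly doable in time polynomial in $n$, so this construction step contributes only an $n^{\OO(1)}$ factor.

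By the equivalence lemma just proved, $(G, S, w, \ell)$ is a YES-instance of {\sc ThDS-disjointcluster} if and only if the constructed tuple $(U, \FF, \BB, w_U, w_{\BB}, \ell)$ is a YES-instance of {\WSMCWP}, so it suffices to decide the latter. I would then invoke the dynamic-programming algorithm for {\WSMCWP} established above, whose running time is $(r+1)^{|U|}(m + |U|)^{\OO(1)}$. Since $|U| = |S|$ and $m = |\FF| = |V \setminus S| \le n$, this becomes $(r+1)^{|S|}(n + |S|)^{\OO(1)} = (r+1)^{|S|} n^{\OO(1)}$, matching the claimed bound; if the algorithm reports that no feasible subfamily of size $\ell$ exists, we output NO.

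There is no genuine obstacle here — the statement is a bookkeeping corollary folding the equivalence lemma into the algorithmic lemma. The only two points worth a moment's care are: (i) confirming that the size of the family equals the number of non-modulator vertices, so that $m \le n$ and the polynomial overhead of the {\WSMCWP} routine is absorbed into $n^{\OO(1)}$; and (ii) noting that the threshold $r$ enters the running time only through the base $r+1$ of the exponential and through the ranges of the weight functions, so that the bound $(r+1)^{|S|} n^{\OO(1)}$ is already in exactly the promised form and requires no further manipulation.
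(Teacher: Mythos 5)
Your proposal is correct and follows exactly the route the paper intends: the corollary is obtained by composing the reduction to {\WSMCWP} (via the equivalence lemma) with the $(r+1)^{|U|}(m+|U|)^{\OO(1)}$ dynamic programming algorithm, using $|U|=|S|$ and $m\le n$. Nothing is missing.
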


We have the following theorem.

\begin{theorem}
\label{lemma:thds-cvd-fpt}
There is a $(r+2)^{k} n^{\OO(1)}$ algorithm to solve {\sc ThDS-CVD}.
\end{theorem}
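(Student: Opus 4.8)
The plan is to follow the same guess-and-reduce scheme used for {\sc DS-CVD} and in the proof of Theorem~\ref{lemma:eds-cvd-fpt}. Given an instance $(G, S, r, \ell)$ of {\sc ThDS-CVD} with $|S| = k$ (if a CVD set is not supplied, first run the algorithm of Boral et al.\ \cite{BCKP16} in $1.92^{k} n^{\OO(1)}$ time, which is dominated by the bound we are aiming for), I would enumerate all $2^{k}$ subsets $S' \subseteq S$ as candidates for $D^{\star} \cap S$, where $D^{\star}$ is a hypothetical threshold-$r$ dominating set. Unlike the cases of {\sc IDS-CVD} and {\sc DC-CVD}, no structural test on $S'$ is needed here (a threshold dominating set is neither required to be independent nor to induce a clique), so every subset is a legitimate guess.

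For a fixed guess $S'$, the remaining solution vertices must lie in $V \setminus S$, and for threshold domination every vertex---including those of $S'$---must still acquire enough neighbours in the solution. Accordingly I would form the residual instance by deleting every vertex $v \in N[S']$ with $|N(v) \cap S'| \ge r$ (it is already dominated $r$ times, and any further picked neighbour of it is irrelevant, so it is safe to drop), and assigning to each surviving vertex $v$ the weight $w(v) = r - |N(v) \cap S'|$. This yields an instance $(\hat G, \hat S, w, \hat \ell)$ of {\sc ThDS-disjointcluster} with $\hat S = S \setminus \{v : |N(v) \cap S'| \ge r\}$, $\hat \ell = \ell - |S'|$, and $\hat G \setminus \hat S$ still a cluster graph; note $|\hat S| \le k - |S'|$.

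The equivalence I would then establish is that $D^{\star} = S' \cup D$ is a threshold-$r$ dominating set of $G$ of size at most $\ell$ if and only if $D \subseteq V \setminus S$ is a solution of size at most $\hat \ell$ to $(\hat G, \hat S, w, \hat \ell)$: for each vertex $v$ the sets $N(v) \cap S'$ and $N(v) \cap D$ are disjoint, so $|N(v) \cap D^{\star}| \ge r$ holds exactly when $|N(v) \cap D| \ge w(v)$; and if some clique of $\hat G \setminus \hat S$ is simply too small to meet the residual threshold, the dynamic programming solver for {\sc ThDS-disjointcluster} returns $\infty$, correctly certifying that this guess fails. Applying the corollary that solves {\sc ThDS-disjointcluster} in $(r+1)^{|\hat S|} n^{\OO(1)}$ time then handles each guess in $(r+1)^{k - |S'|} n^{\OO(1)}$ time, and we answer YES iff some guess succeeds.

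Summing over all guesses gives a total running time of $\sum_{i=0}^{k} \binom{k}{i} (r+1)^{k-i} \cdot n^{\OO(1)} = (r+2)^{k} n^{\OO(1)}$ by the binomial theorem, as claimed. The only step that requires genuine care is the bookkeeping of the residual weights under the guess $S'$---deciding which vertices to delete versus keep with weight $0$, and observing that vertices of $V \setminus S$ adjacent to $S'$ remain available because threshold domination (in contrast to IDS/EDS) imposes no independence constraint---but no new algorithmic idea beyond the preceding corollary is needed, so I do not anticipate a serious obstacle.
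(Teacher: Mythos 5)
Your proposal follows essentially the same route as the paper: guess $S' = D^\star \cap S$, delete the already-satisfied vertices, assign residual weights $r - |N(v)\cap S'|$ and reduce to {\sc ThDS-disjointcluster}, solve each residual instance in $(r+1)^{k-|S'|} n^{\OO(1)}$ time via the weighted multicover dynamic program, and sum binomially to obtain $(r+2)^{k} n^{\OO(1)}$. The one point you flag yourself---that a vertex of $V \setminus S$ whose threshold is already met by $S'$ should be \emph{retained} with weight $0$ (it may still be needed in the solution to dominate others) rather than deleted---is treated no more carefully in the paper's own exposition, so there is no substantive difference between the two arguments.
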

\begin{proof}
For each guess $S'$ in the modulator with $|S'| = i$, we get a {\sc ThDS-disjointcluster} instance with $|S| = k-i$ and solve it with running time $(r+1)^{k-i} n^{\OO(1)}$. 
Hence the total running time in solving {\sc ThDS-CVD} is $\sum\limits_{i=1}^{k} {k \choose i} (r+1)^{k-i} n^{\OO(1)} = (r+2)^k n^{\OO(1)}$. 
\end{proof}

Since {\sc TDS-CVD} is {\sc ThDS-CVD} with $r=1$, we have
\begin{corollary}
There is an $3^{k} n^{\OO(1)}$ algorithm to solve {\sc TDS-CVD}.
\end{corollary}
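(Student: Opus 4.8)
The plan is to observe that this corollary is an immediate specialization of Theorem~\ref{lemma:thds-cvd-fpt}. Recall from the problem definitions in Section~\ref{sec:problem-definitions} (and the discussion at the start of Section~\ref{sec:cluster-upper-bounds}) that a total dominating set is precisely a threshold dominating set with threshold $r=1$: the condition ``every vertex $v$ has at least $r$ neighbours in $S$'' becomes ``every vertex has at least one neighbour in $S$'', which is exactly the definition of TDS. Hence an instance $(G,S,\ell)$ of {\sc TDS-CVD} is the same as the instance of {\sc ThDS-CVD} with the same $G$, $S$, $\ell$ and with the threshold parameter set to $r=1$.

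First I would state this equivalence of instances explicitly, so that a {\sc Yes}-instance of {\sc TDS-CVD} corresponds bijectively to a {\sc Yes}-instance of {\sc ThDS-CVD} with $r=1$ (no gadgets or transformations are needed; it is literally the same input with $r$ fixed). Then I would invoke Theorem~\ref{lemma:thds-cvd-fpt}, which solves {\sc ThDS-CVD} in $(r+2)^k n^{\OO(1)}$ time, and substitute $r=1$ to obtain a running time of $(1+2)^k n^{\OO(1)} = 3^k n^{\OO(1)}$. Since $r=1$ is a constant, the polynomial factor $n^{\OO(1)}$ (which in Theorem~\ref{lemma:thds-cvd-fpt} hides a dependence on $r$ through the $(r+1)^{|U|}(m+|U|)^{\OO(1)}$ bound for {\WSMCWP}) remains genuinely polynomial in $n$, so the claimed bound is correct.

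There is no real obstacle here: the work has already been done in establishing Theorem~\ref{lemma:thds-cvd-fpt} via the reduction through {\sc ThDS-disjointcluster} and {\WSMCWP}. The only point worth a sentence of care is confirming that the guessing step over subsets $S' \subseteq S$ and the associated summation $\sum_{i=1}^{k}\binom{k}{i}(r+1)^{k-i}n^{\OO(1)} = (r+2)^k n^{\OO(1)}$ indeed collapses to $3^k n^{\OO(1)}$ when $r=1$, which it does by the binomial theorem. Thus the corollary follows directly.
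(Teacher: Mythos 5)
Your proposal is correct and matches the paper's argument exactly: the paper derives this corollary in one line by noting that {\sc TDS-CVD} is {\sc ThDS-CVD} with $r=1$ and substituting into the $(r+2)^k n^{\OO(1)}$ bound of Theorem~\ref{lemma:thds-cvd-fpt}. Your additional remarks about the instance equivalence and the binomial-sum collapse are fine but not needed beyond what the paper states.
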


\subsection{Lower bounds}
\label{sec:clusterlowerbounds}
We first give lower bounds for {\sc DS-CVD}, {\sc DC-CVD} and {\sc TDS-CVD} by giving a reduction from the {\SetCover} problem.
%\todo[inline]{Ashwin: Add Dominating Clique lower bound too as same reduction work.}

We start with the following lemma.
\begin{lemma}\label{lemma:split-graph-domset-within-clique}
Let $G = (C \uplus I, E)$ be a connected split graph with clique $C$ and independent set $I$. For every dominating set $D$ in $G$ of size $k$, there exist a dominating set $D'$ of size at most $k$ such that $D' \cap I = \emptyset$.
\end{lemma}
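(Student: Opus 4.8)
The plan is to take an arbitrary dominating set $D$ of size $k$ and transform it into one of size at most $k$ that avoids the independent set $I$, by exchanging each vertex of $D \cap I$ for a suitably chosen vertex of the clique $C$. First I would handle the degenerate cases: if $|C| \le 1$ the graph is essentially trivial (a single vertex, or an edge, or a star with a one-vertex ``clique''), and one checks directly that a single clique vertex dominates everything; so assume $|C| \ge 2$. Now consider any vertex $u \in D \cap I$. Since $G$ is connected and $u$ lies in the independent set, $u$ has at least one neighbor, and every neighbor of $u$ lies in $C$ (as $I$ is independent); pick any such neighbor $c_u \in C$. The key observation is that $c_u$ dominates everything $u$ dominated \emph{plus} more: $N[u] \subseteq \{u\} \cup C \subseteq N[c_u]$, because $c_u$ is adjacent to all of $C \setminus \{c_u\}$ and to $u$, and $N(u) \subseteq C$.

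The main step is then the replacement argument. Let $D' = (D \setminus I) \cup \{\, c_u : u \in D \cap I \,\}$. Clearly $D' \cap I = \emptyset$ and $|D'| \le |D| = k$, since we replace each discarded vertex of $D \cap I$ by at most one vertex of $C$ (and these may coincide, only helping the bound). To see $D'$ is dominating, take any $w \in V(G)$; since $D$ dominates $G$, there is some $d \in D$ with $w \in N[d]$. If $d \notin I$ then $d \in D'$ and we are done. If $d \in I$, then $w \in N[d] \subseteq N[c_d]$ by the observation above, and $c_d \in D'$, so $w$ is dominated. Hence $D'$ is a dominating set of size at most $k$ disjoint from $I$.

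I do not anticipate a genuine obstacle here — the statement is a standard ``shifting into the clique'' argument — but the one point that needs care is the connectivity hypothesis and the small cases: the inclusion $N[u] \subseteq N[c_u]$ relies on $u$ actually having a neighbor (so that $c_u$ exists) and on the fact that in a split graph every neighbor of an independent-set vertex is in the clique. If $G$ were disconnected one could have an isolated vertex in $I$ with no clique vertex to move it to; connectivity rules this out (or, more precisely, forces any such vertex to be the whole graph, covered by the $|C| \le 1$ case). So the write-up should state the partition $V = C \uplus I$ explicitly, invoke connectivity exactly where $c_u$ is produced, and otherwise the verification is the two-line case check above.
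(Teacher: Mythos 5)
Your proof is correct and follows essentially the same route as the paper's: replace each vertex of $D \cap I$ by an arbitrary clique neighbor (which exists by connectivity and the independence of $I$) and observe that $N[u] \subseteq \{u\} \cup C \subseteq N[c_u]$. The separate treatment of $|C| \le 1$ is harmless but not needed, since the containment $\{u\}\cup C \subseteq N[c_u]$ holds for any clique size once $c_u$ exists.
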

\begin{proof}
First, note that there does not exist a vertex $u \in I$ with no neighbor in $C$. If so, note that $u$ is an isolated vertex in $G$ as $u$ cannot have neighbors in $I$ as well as $I$ is an independent set. This contradicts that $G$ is a connected graph.

Thus, for every $u \in I, N(u) \cap C \neq \emptyset$. We construct a set $D'$ from $D$ as follows. For every vertex $u \in D \cap I$, replace $u$ with an arbitrary vertex $v \in  N(u) \cap C$.  Note that now, $D' \cap I = \emptyset$. We claim that $D'$ is a dominating set in $G$. Suppose there exist a vertex $w$ that is not dominated by $D'$. Since $D$ is a dominating set and we only replaced vertices of $D \cap I$ in $D$, we can conclude that $w$ is dominated by a vertex $u \in D \cap I$. Thus, either $w = u$ or $w \in C$ as $I$ is an independent set. Let $v \in N(u) \cap C$ be the vertex that replaced $u$. Since $C$ is a clique, $v$ dominates all the vertices in $C$ including $w$ if $w \in C$. Also, $v$ dominates $u$ as $v \in N(u)$; eliminating the case that $w=u$ as well. Therefore, $D'$ is a dominating set in $G$.
\end{proof}

We have the following corollary.

\begin{corollary}\label{corollay:split-dc-tds-within-clique}
Let $G = (C \uplus I, E)$ be a connected split graph with clique $C$ and independent set $I$. For every dominating clique (or total dominating set) $D$ in $G$ of size $k$ ($|C|, k \geq 2$ for total dominating set), there exist a dominating clique (or total dominating set) $D'$ of size at most $k$ such that $D' \cap I = \emptyset$.
\end{corollary}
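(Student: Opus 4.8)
The plan is to mimic the proof of Lemma~\ref{lemma:split-graph-domset-within-clique}, performing the same swap of each vertex of $D \cap I$ with an arbitrary clique neighbour, and then to argue separately that the two extra structural constraints --- being a clique, and being a \emph{total} dominating set --- are preserved by the swap. As in the lemma, connectedness of $G$ guarantees that every $u \in I$ has a nonempty neighbourhood $N(u) \cap C$ in the clique, so the replacement is well defined, and the resulting $D'$ satisfies $D' \cap I = \emptyset$ with $|D'| \le k$. Lemma~\ref{lemma:split-graph-domset-within-clique} already tells us $D'$ is a dominating set; what remains is the additional property.

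For the dominating clique case this is essentially immediate: after the swap, $D' \subseteq C$, and since $C$ is a clique, every subset of $C$ is a clique, so $D'$ is automatically a clique (and it dominates $G$ by the lemma). For the total dominating set case, recall a total dominating set requires every vertex of $G$ --- including those in $D'$ itself --- to have a \emph{neighbour} in $D'$, not merely to be in $D'$. First I would observe that if $|D' \cap C| \ge 2$ then every vertex of $C$ has a neighbour in $D' \cap C \subseteq D'$ (distinct clique vertices are adjacent), and every vertex $u \in I$ that was originally dominated by $D$ --- which is all of $I$ --- still has a clique neighbour in $D'$, because any original $D$-neighbour of $u$ lying in $I$ got swapped to a clique vertex which (a) if it is a neighbour of $u$ we are done, and (b) in any case we can instead use that $u$ had a neighbour argument via the dominating-set part plus $|D' \cap C| \ge 2$. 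The cleanest route: note $D' \subseteq C$ and $|D'| = |D' \cap C| \ge 2$ (we may assume $|D| \ge 2$, and the swap does not decrease size unless it merges two vertices into one, in which case we can pad $D'$ with an arbitrary extra clique vertex, keeping $|D'| \le k$ and using $|C| \ge 2$); then every vertex of $C$ has a neighbour in $D'$ since $|D' \cap C| \ge 2$, and every vertex $u \in I$ has a neighbour in $D'$ because in the original $D$, $u$ had a neighbour (total domination of $u$), that neighbour lies in $C$ (as $I$ is independent), and if it was already in $C$ it stays in $D'$, while if a vertex of $D\cap I$ was swapped the new clique vertex dominates $u$ only if adjacent --- so more carefully, replace "$u$ had a neighbour $w\in D$" by: $w \in D \cap C$, and such $w$ is untouched by the swap, hence $w \in D'$ and $u$ still has neighbour $w$.

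I expect the only real subtlety to be the size bookkeeping in the total-domination case when the swap identifies two previously distinct vertices of $D$ (two vertices of $D \cap I$ sent to the same clique neighbour, or a vertex of $D \cap I$ sent onto a vertex already in $D \cap C$), which could drop $|D'|$ below $2$ only if $|D|=2$ and both vertices collapse; the hypothesis $|C|, k \ge 2$ is exactly what lets us re-pad $D'$ with a spare clique vertex to restore $|D' \cap C| \ge 2$ without exceeding $k$. Once that case is dispatched, the verification that every vertex of $C$ (via $|D'\cap C|\ge 2$) and every vertex of $I$ (via an untouched clique neighbour from the original total dominating set) has a neighbour in $D'$ is routine, and the corollary follows.
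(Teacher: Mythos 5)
Your proposal is correct and takes essentially the same route as the paper's proof: perform the swap from Lemma~\ref{lemma:split-graph-domset-within-clique}, observe that $D' \subseteq C$ makes the clique property automatic, and obtain total domination by ensuring $|D'| \ge 2$ inside the clique, padding with a spare vertex of $C$ if the swap collapses the set to a single vertex. Your detour through the original total dominating set to re-dominate the vertices of $I$ is harmless but unnecessary --- since $D' \cap I = \emptyset$, the dominating-set property of $D'$ already gives every $u \in I$ a neighbour in $D'$.
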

\begin{proof}
We construct the dominating set $D'$ as in Lemma \ref{lemma:split-graph-domset-within-clique}.  Since $D'$ is within the clique $C$, it is a dominating clique. Thus, the corollary follows for dominating clique.

If $|D'| \geq 2$, it is a total dominating set. This is because the vertices $v \in D'$ also have a neighbor $u \in N(v) \cap D'$ as $D'$ is a clique. If $|D'|=1$, we add an arbitrary vertex from $C \setminus D'$ to $D'$. From a similar argument, it follows that $D'$ is a total dominating set.
\end{proof}

In the following lemma, we prove the lower bounds for {\sc DS-CVD}, {\sc DC-CVD} and {\sc TDS-CVD} via a reduction from {\SetCover}. Though the reduction is a standard one, we give the proof for the sake of completeness, especially to clarify how the reduction works for {\sc Dominating Clique} and {\sc Total Dominating Set} problems.

%\todo[inline]{Add minimal for handling total dominating set?}
\begin{lemma}
\label{lemma:general-about-ds-and-tds-redn}
There is a polynomial time algorithm that takes an instance $(U, \FF, \ell)$ of {\SetCover} where $\ell \geq 2$, and outputs an instance $(G, \ell)$ of {\sc DS-CVD} (or {\sc DC-CVD} or {\sc TDS-CVD}) where $G$ has a cluster vertex deletion set with $|U|$ vertices, and $(U,\FF,\ell)$ has a set cover of size $\ell$ if and only if $G$ has a dominating set (or a dominating clique, or a total dominating set) of size $\ell$.
\end{lemma}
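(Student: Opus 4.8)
The plan is to use the textbook reduction from {\SetCover} to {\sc Dominating Set}, observe that the graph it produces is a \emph{connected split graph} whose independent side has exactly $|U|$ vertices and is a cluster vertex deletion set, and then handle the two delicate directions (for {\sc DC-CVD} and {\sc TDS-CVD}) through Lemma~\ref{lemma:split-graph-domset-within-clique} and Corollary~\ref{corollay:split-dc-tds-within-clique}. Concretely: given $(U,\FF,\ell)$ with $\FF=\{A_1,\dots,A_m\}$ and $\ell\geq 2$, I first dispose of degenerate inputs (if some $u\in U$ lies in no set of $\FF$, output a fixed NO-instance; likewise $U=\emptyset$ or $|\FF|\leq 1$ can be decided directly). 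Otherwise build $G$ on vertices $\{x_u : u\in U\}\cup\{y_A : A\in\FF\}$, making the $y_A$'s into a clique $C$, the $x_u$'s into an independent set $I$, and adding the edge $x_uy_A$ exactly when $u\in A$. Then $G\setminus I$ is a single clique, so $I$ is a cluster vertex deletion set of size $|U|$; and since every $x_u$ has a neighbour and the $y_A$'s are pairwise adjacent, $G$ is a connected split graph with clique side $C$ and independent side $I$. The output is $(G,\ell)$ with $I$ as the modulator, and this is clearly polynomial-time.

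For the forward direction, if $\{A_{j_1},\dots,A_{j_t}\}$ with $t\leq\ell$ covers $U$, then $D=\{y_{A_{j_1}},\dots,y_{A_{j_t}}\}$ dominates $G$: each $x_u$ has an incident chosen set-vertex, and the $y_A$'s dominate one another inside $C$. Since $D\subseteq C$ it is simultaneously a dominating clique. For {\sc TDS-CVD}, if $t\geq 2$ then $D$ is already a total dominating set because each vertex of $C$ has a \emph{second} vertex of $D$ as a neighbour; if $t=1$ then necessarily $A_{j_1}=U$, and we add one further vertex of $C$ to $D$ — legal precisely because $\ell\geq 2$ — yielding a total dominating set of size $2\leq\ell$. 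This is the only place the hypothesis $\ell\geq 2$ is consumed.

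For the converse, let $D$ be a dominating set (resp.\ dominating clique, total dominating set) of $G$ with $|D|\leq\ell$. Since $G$ is a connected split graph, Lemma~\ref{lemma:split-graph-domset-within-clique} (resp.\ Corollary~\ref{corollay:split-dc-tds-within-clique}) lets us replace $D$ by one of size at most $|D|$ that avoids $I$, i.e.\ lies inside $C$. Then $\{A : y_A\in D\}$ covers $U$: for each $u\in U$ the vertex $x_u\notin D$ is dominated, hence has a neighbour $y_A\in D$, and $x_uy_A\in E(G)$ forces $u\in A$; this family has size $|D|\leq\ell$. Together with the forward direction this gives the claimed equivalence for all three problems. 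I expect the only real obstacle to be bookkeeping on the {\sc DC}/{\sc TDS} side: making sure $G$ is connected so that the split-graph replacement statements apply (which is why uncovered-element and tiny-$\FF$ instances must be shunted aside, the latter because Corollary~\ref{corollay:split-dc-tds-within-clique} needs $|C|\geq 2$ in the total-domination case), plus the $t=1$ padding that uses $\ell\geq 2$; everything else is a routine verification.
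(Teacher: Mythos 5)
Your proposal is correct and follows essentially the same route as the paper: the standard split-graph reduction with the universe as the independent side (hence a cluster vertex deletion set of size $|U|$), the forward direction by picking set-vertices inside the clique, and the reverse direction via Lemma~\ref{lemma:split-graph-domset-within-clique} and Corollary~\ref{corollay:split-dc-tds-within-clique}. Your extra care about connectivity (shunting aside uncovered elements) and the $t=1$ padding for total domination are minor tidiness points the paper leaves implicit, not a different argument.
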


\begin{proof}
For the {\SetCover} instance $(U,\FF, \ell)$, let  $U = \{ u_1, .....u_k\}$ and $\FF = \{S_1,...,S_m\}$. Let us construct the graph $G = (U \uplus V,E)$ with vertex sets $U = \{ u_1, .....u_k\}$ and $V = \{s_1,...,s_m\}$.
Thus, every vertex in $U$ corresponds to an element in the universe and every vertex in $V$ corresponds to a set in family $\FF$.
Add edges $(u_i,s_j)$ if and only if $u_i \in S_j$. Also, add all edges $(s_i,s_j), i \neq j$ to make $G[V]$ an $m$-clique (clique with $m$ vertices). Note that $G \setminus U$ is a cluster graph with $|U| = k$.

Now we prove that there is a subset $\FF' \subseteq \FF$ of size $\ell$ covering $U$ if and only if there is a dominating set (or a dominating clique, or a total dominating set) of size $\ell$ in $G$.

For the forward direction, we claim the vertices in $V$ corresponding to the subsets in $\FF'$ form a dominating set of size $\ell$.
More specifically, we claim the set $V' = \{s_i \in V|S_i \in \FF'\}$ is a dominating set.
Since $G[V]$ is a clique, by picking at least one vertex from $V$, all the vertices of $V$ are dominated. 
Since $\FF'$ covers all the elements of $U$, for each vertex $u_i \in U$, there exists a vertex $s_j \in V'$ such that $(u_i,s_j) \in E(G)$.
Hence all the vertices of $U$ are dominated by $V'$ as well. Hence $V'$ is a dominating set in $G$ of size $\ell$.  %Note that since $V'$ is a subset of clique $V$, $V'$ is also a dominating clique. Also, since

For the reverse direction, we first note that $G$ is a split graph as $U$ is an independent set and $V$ is a clique. Thus, Lemma \ref{lemma:split-graph-domset-within-clique} implies there is a dominating set of size at most $\ell$ that does not contain any vertex from $U$.
%\iffalse
%Let $D \subseteq V(G)$ be some dominating set of size $\ell$ in $G$. 
%Suppose there is a vertex $u \in U$ in $D$. 
%Since $G[U]$ is an independent set, $u$ can only dominate some subset of vertices in $V$. 
%Since $G[V]$ is a clique, by picking any vertex of $V$, all the vertices of $V$ can be dominated. 
%Hence, $D \setminus \{u\} \cup \{v\}$ for any neighbor $v$ of $u$ is also a dominating set of at most the same size. 
%Thus, we can replace all the vertices $u \in U \cap D$ by a corresponding vertex in $V$ and get a dominating set $D'$ of size at most $\ell$ without any vertices in $U$. 
%\fi
Picking the sets corresponding to the vertices in $D \subseteq V$, we get a subset $\FF' \subseteq \FF$ of size at most $\ell$ covering the universe $U$.

We now prove the same claim for {\sc DC-CVD} or {\sc TDS-CVD}. Note that in the forward direction above, the set $V'$ is entirely within the clique in $G \setminus U$. Thus, $V'$ is a dominating clique. Since $|V'| = \ell \geq 2$, $V'$ is also a total dominating set. This is because the vertices $v \in V'$ also have a neighbor $u \in N(v) \cap V'$.

In the reverse direction, as $\ell \geq 2$, Corollary \ref{corollay:split-dc-tds-within-clique} implies that there is a dominating clique (or total dominating set) of size at most $\ell$ that does not contain any vertex from $U$. The corresponding sets in the family $\FF$ cover $U$.
\end{proof}

%\todo[inline]{Diptapriyo: Are we having to choose two vertices here? I don't see it yet.}

We have the following theorem which follows from Lemma \ref{lemma:general-about-ds-and-tds-redn} and Conjecture~\ref{thm:scc}. 

\begin{theorem}
\label{theorem:ds-cvd-lb}
{\sc DS-CVD}, {\sc DC-CVD} and {\sc TDS-CVD} cannot be solved in $(2- \epsilon)^k n^{\OO(1)}$ time for any $\epsilon > 0$ unless the Set Cover Conjecture fails. Furthermore, {\sc Dominating Set}, {\sc Dominating Clique} and {\sc Total Dominating Set} parameterized by deletion distance to even a single clique cannot be solved in $(2- \epsilon)^k n^{\OO(1)}$ time for any $\epsilon > 0$ unless the Set Cover Conjecture fails.

The second statement in the theorem follows by noting that in the reduction in Lemma \ref{lemma:general-about-ds-and-tds-redn}, the graph instance is such that deletion of $k$ vertices of the graph result in a single clique. 
\end{theorem}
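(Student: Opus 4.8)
The plan is to derive both statements directly from the parameter-preserving reduction already established in Lemma~\ref{lemma:general-about-ds-and-tds-redn}, reading it as a fine-grained (exponential-time) reduction rather than merely a classical one. Concretely, suppose toward a contradiction that {\sc DS-CVD} can be solved in $(2-\epsilon)^k n^{\OO(1)}$ time for some fixed $\epsilon > 0$ (the argument for {\sc DC-CVD} and {\sc TDS-CVD} will be identical). Given an arbitrary instance $(U,\FF,\ell)$ of {\SetCover}, I would first dispose of the degenerate cases $\ell \le 1$ in polynomial time (just check whether some single member of $\FF$ equals $U$). When $\ell \ge 2$ I apply the polynomial-time algorithm of Lemma~\ref{lemma:general-about-ds-and-tds-redn} to obtain an equivalent instance $(G,\ell)$ of {\sc DS-CVD} in which $G$ has a cluster vertex deletion set of size exactly $k := |U|$ and $|V(G)| = |U|+|\FF|$, and then run the hypothetical algorithm on $(G,\ell)$.

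The only thing to track carefully is the bookkeeping of parameters and sizes: the new parameter is $k = |U|$ with no blow-up, and the number of vertices $n = |U|+m$ is polynomial in the input size of the {\SetCover} instance, so $n^{\OO(1)} = (m+|U|)^{\OO(1)}$. Hence the whole procedure runs in $(2-\epsilon)^{|U|}(m+|U|)^{\OO(1)}$ time and, by the equivalence in Lemma~\ref{lemma:general-about-ds-and-tds-redn}, correctly decides $(U,\FF,\ell)$, contradicting the Set Cover Conjecture (Conjecture~\ref{thm:scc}). The same argument applies verbatim to {\sc DC-CVD} and {\sc TDS-CVD}, because Lemma~\ref{lemma:general-about-ds-and-tds-redn} produces a single graph $G$ that works simultaneously for all three variants (using $\ell \ge 2$, so that the dominating set placed inside the clique is also a dominating clique and a total dominating set).

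For the second, stronger statement I would simply re-inspect the graph $G$ built in Lemma~\ref{lemma:general-about-ds-and-tds-redn}: its vertex set is $U \uplus V$, where $U$ is an independent set of element-vertices and $V$ is turned into a single $m$-clique. Deleting the $k = |U|$ vertices of $U$ leaves exactly the clique $G[V]$, i.e.\ a single clique (a fortiori a cluster graph). Therefore the very same reduction witnesses that {\sc Dominating Set}, {\sc Dominating Clique} and {\sc Total Dominating Set}, parameterized by the deletion distance to a single clique, cannot be solved in $(2-\epsilon)^k n^{\OO(1)}$ time unless the Set Cover Conjecture fails.

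Since essentially all the combinatorial work is already inside Lemma~\ref{lemma:general-about-ds-and-tds-redn}, I do not expect a genuine obstacle here; the only points that need care are (i) handling {\SetCover} instances with $\ell \le 1$ outside the reduction, as Lemma~\ref{lemma:general-about-ds-and-tds-redn} assumes $\ell \ge 2$, and (ii) verifying that the polynomial overhead of the reduction is absorbed into the $(m+|U|)^{\OO(1)}$ factor, so that an $(2-\epsilon)^k$-time algorithm on the target genuinely yields an $(2-\epsilon)^{|U|}$-time algorithm for {\SetCover} with the \emph{same} $\epsilon$.
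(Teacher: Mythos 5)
Your proposal is correct and follows essentially the same route as the paper: both treat Lemma~\ref{lemma:general-about-ds-and-tds-redn} as a parameter-preserving, polynomial-time reduction from {\SetCover} with $k=|U|$, so that a $(2-\epsilon)^k n^{\OO(1)}$ algorithm for any of the three problems would violate the Set Cover Conjecture, and both observe that the constructed graph becomes a single clique after deleting the $|U|$ element-vertices. Your additional care about the $\ell\le 1$ boundary case and the polynomial overhead is a welcome but minor refinement over the paper's terser argument.
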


\begin{proof}
Suppose that there is an algorithm solving {\sc DS-CVD} in $(2- \epsilon)^k n^{\OO(1)}$ time. 
Then by Lemma~\ref{lemma:general-about-ds-and-tds-redn}, we can solve {\sc Set-Cover} with $|U|=k$ in $(2- \epsilon)^k n^{\OO(1)}$ time violating the Set Cover Conjecture (SCC). 
This completes the proof. 
Note that the dominating set in the reduction is also total. Hence {\sc TDS-CVD} also cannot be solved in $(2- \epsilon)^k n^{\OO(1)}$ time for any $\epsilon > 0$ unless Set Cover Conjecture fails.
\end{proof}

The following theorem follows from Theorem~\ref{theorem:no-poly-kernel-set-cover} and Lemma~\ref{lemma:general-about-ds-and-tds-redn}.

\begin{theorem}
\label{theorem:ds-cvd-kernel-lb}
{\sc DS-CVD}, {\sc TDS-CVD} and {\sc ThDS-CVD} do not have polynomial sized kernels unless \nka.
\end{theorem}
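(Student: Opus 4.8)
The plan is to invoke the polynomial-parameter-transformation machinery of Proposition~\ref{pptProperty}. Concretely, I would argue that the reduction of Lemma~\ref{lemma:general-about-ds-and-tds-redn} is in fact a PPT, in the sense of Definition~\ref{pptdefinition}, from {\SetCover} parameterized by the universe size to {\sc DS-CVD} (respectively {\sc TDS-CVD}) parameterized by the size of the cluster vertex deletion set.

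First I would check the three requirements of Definition~\ref{pptdefinition}. The map runs in polynomial time by Lemma~\ref{lemma:general-about-ds-and-tds-redn}, and correctness (yes-instance iff yes-instance) is exactly the equivalence proved there. For the parameter bound: the produced graph $G$ admits a cluster vertex deletion set of size exactly $|U|$, so the new parameter equals the old one and the bounding polynomial can be taken to be the identity. The only caveat is the hypothesis $\ell \ge 2$ in Lemma~\ref{lemma:general-about-ds-and-tds-redn}; this is harmless, since a {\SetCover} instance with $\ell \le 1$ can be decided in polynomial time (is $U$ empty, or does a single set equal $U$?), after which we output a fixed trivial yes- or no-instance of {\sc DS-CVD} / {\sc TDS-CVD} of constant size. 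Thus we obtain a genuine PPT in every case.

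Next, to apply Proposition~\ref{pptProperty} I would note that classical {\sc Set-Cover} is {\NP}-hard, while {\sc DS-CVD} and {\sc TDS-CVD} are in {\NP} (given a candidate vertex set, one verifies in polynomial time that it has the prescribed size and dominates, resp.\ totally dominates, $G$). Hence Proposition~\ref{pptProperty} yields: a polynomial kernel for {\sc DS-CVD} (or {\sc TDS-CVD}) would give a polynomial kernel for {\SetCover} parameterized by the universe size, contradicting Theorem~\ref{theorem:no-poly-kernel-set-cover} unless \nka. For {\sc ThDS-CVD}, observe that {\sc TDS-CVD} is precisely {\sc ThDS-CVD} with threshold $r=1$; composing the above PPT with the choice $r=1$ (equivalently, the trivial parameter-preserving reduction {\sc TDS-CVD} $\to$ {\sc ThDS-CVD}) gives a PPT from {\SetCover} to {\sc ThDS-CVD}, which is also in {\NP}, so the same conclusion follows.

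I do not expect a genuine obstacle here: Lemma~\ref{lemma:general-about-ds-and-tds-redn} does all the structural work, and what remains is bookkeeping — checking the PPT preconditions (polynomial parameter bound, {\NP}-hardness of the source, {\NP}-membership of the targets) and disposing of the degenerate small-$\ell$ cases. The point needing the most care is merely ensuring the parameter of the constructed instance is polynomially bounded in the original parameter, which here is immediate because the two coincide.
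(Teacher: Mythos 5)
Your proposal is correct and follows essentially the same route as the paper: Lemma~\ref{lemma:general-about-ds-and-tds-redn} is observed to be a PPT from {\SetCover} parameterized by universe size, and Proposition~\ref{pptProperty} together with Theorem~\ref{theorem:no-poly-kernel-set-cover} yields the kernel lower bound, with {\sc ThDS-CVD} handled via the $r=1$ specialization. Your write-up is in fact more careful than the paper's, which omits the explicit verification of the PPT preconditions and the degenerate $\ell\le 1$ case.
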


\begin{proof}
From Theorem \ref{theorem:no-poly-kernel-set-cover}, we know that {\SetCover} parameterized by the size of the universe does not admit polynomial sized kernels unless {\nka}. Since the reduction provided in Lemma~\ref{lemma:general-about-ds-and-tds-redn} is a polynomial parameter transformation (PPT) (see Definition \ref{pptdefinition}), by Proposition \ref{pptProperty}, we have that these problems including {\sc ThDS-CVD} do not have polynomial kernel unless {\nka}.
\end{proof}

Note that the proof idea of Theorem~\ref{theorem:ds-cvd-kernel-lb} does not work for {\sc IDS-CVD}. 
This is because, in the reduction, we turn the graph induced by the family of sets into a clique. Hence, only one vertex of the clique can be in an independent dominating set of such a graph. Nevertheless, we give a
 $(2-\epsilon)^k n^{\OO(1)}$ lower bound for {\sc IDS-CVD} under SETH using the following lower bound result for a different problem which is {\sc MMVC-VC}.
Recall that {\sc MMVC-VC} refers to the problem of finding a maximum sized minimal vertex cover in a graph parameterized by the size of a given vertex cover. %Recall that a vertex cover in a graph is a subset of vertices that covers all edges. 

\begin{theorem}[\cite{Zehavi17}]\footnote{Note that the SETH based lower bound result and the result ruling out the existence of polynomial kernel from paper \cite{Zehavi17} use different constructions.}
\label{theorem:mmvc-cv-lb}
Unless SETH fails, {\sc MMVC-VC} cannot be solved in $(2-\epsilon)^k n^{\OO(1)}$ time.
Moreover, {\sc MMVC-VC} does not admit polynomial sized kernel unless \nka.
\end{theorem}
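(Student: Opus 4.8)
The plan is to prove both halves by reductions into {\sc MMVC-VC}, working throughout with the complementary picture: for a graph $G$, a set $C$ is a minimal vertex cover of size at least $\ell$ if and only if $V(G)\setminus C$ is a maximal independent set — equivalently an independent dominating set — of size at most $|V(G)|-\ell$. Thus {\sc MMVC-VC} is, up to this complementation, the problem of computing a minimum independent dominating set with a given vertex cover $S$ of size $k$ as the parameter, and the two tasks are (a) to rule out a $(2-\epsilon)^{k}n^{\OO(1)}$ algorithm under SETH and (b) to rule out a polynomial kernel under {\nka}.

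For (a) I would give, for every fixed $q\ge 3$, a polynomial-time reduction from {\sc $q$-CNF-SAT} with $n$ variables and $m$ clauses to an instance $(G,S,\ell)$ of {\sc MMVC-VC} in which the exhibited vertex cover has size $k=n(1+o(1))$ and $G$ has an independent dominating set of size at most $|V(G)|-\ell$ iff $\phi$ is satisfiable. The ingredients: (i) a constant-size \emph{variable gadget} per variable $x_i$ (an edge together with one or two pendant vertices, or a similar small gadget) forcing every maximal independent set to select exactly one of two designated vertices, thereby encoding a truth assignment, and engineered so the gadget contributes only \emph{one} vertex to $S$; (ii) \emph{clause gadgets} carrying a ``sensor'' vertex adjacent to the literal-vertices that satisfy the clause, built so that a \emph{minimum} maximal independent set is compelled to take extra vertices from a clause gadget exactly when that clause is left unsatisfied; and (iii) a preprocessing step using the sparsification lemma to bring $m$ down to $\OO(n)$ (with the hidden constant depending on $q$), followed by grouping the clauses so that the \emph{total} contribution of all clause gadgets to $S$ is only $o(n)$. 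Choosing $\ell=|V(G)|-(\text{base cost of the variable gadgets})$ then yields the equivalence, and a $(2-\epsilon)^{k}n^{\OO(1)}$ algorithm would decide {\sc $q$-CNF-SAT} in time $(2-\epsilon)^{n(1+o(1))}n^{\OO(1)}=(2-\epsilon')^{n}n^{\OO(1)}$ with $\epsilon'>0$ depending only on $\epsilon$ — uniformly over $q$ — contradicting SETH.

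For (b) I would give a polynomial parameter transformation from {\sc CNF-SAT} parameterized by the number of variables, which has no polynomial kernel unless {\nka} (Theorem~\ref{theorem:no-poly-kernel-cnf-sat}). Here only a polynomial bound on $k$ is needed — not the $n(1+o(1))$ bound — so a much coarser construction suffices: take literal-vertices $t_i,f_i$ with the edges $\{t_i,f_i\}$, a vertex $c_j$ per clause adjacent to the literal-vertices satisfying $C_j$, and (if needed) a few pendant vertices to force the intended behaviour; then $\{t_i,f_i:i\in[n]\}$ is a vertex cover of size $2n=\OO(n)$, and one checks that $G$ has a minimal vertex cover of size at least a suitable $\ell$ iff $\phi$ is satisfiable. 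Since {\sc CNF-SAT} is {\NP}-hard and {\sc MMVC-VC} lies in {\NP}, Proposition~\ref{pptProperty} delivers the kernel lower bound. (In line with the footnote, these are genuinely two different reductions: the first is tuned to the exponential base $2$, the second only to keeping $k$ polynomial.)

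The crux is reduction (a): pushing the vertex-cover parameter down to $n(1+o(1))$ \emph{uniformly in $q$}. Naive gadgets — literal-vertices plus one vertex per clause — only give $k=\Theta(n)$ with a constant strictly above $1$, which is enough for (b) but useless against SETH, since as $q$ grows the clause width (and hence the blow-up) grows too. So the reduction must simultaneously (1) amortize the variable gadget cost to $1$ per variable, typically by grouping variables into blocks; (2) control $m$ via sparsification; and (3) use clause gadgets whose cost structure both correctly detects unsatisfied clauses \emph{and} is robust against ``cheating'' maximal independent sets that are smaller than the intended threshold even when $\phi$ is unsatisfiable. Establishing this last robustness — showing no maximal independent set beats the target size unless $\phi$ is satisfiable, despite the large clause gadgets — is the delicate part of the correctness proof.
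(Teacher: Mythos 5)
First, a point of comparison: the paper does not prove Theorem~\ref{theorem:mmvc-cv-lb} at all. It is imported verbatim from \cite{Zehavi17} and used as a black box (together with Observation~\ref{lem:relate-mmvc-ids}) to derive the lower bounds for {\sc IDS-VC}, {\sc IDS-CVD} and {\sc IDS-SVD}. So any self-contained argument is already doing more than the paper does; the question is whether what you wrote is actually a proof.

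Your complementation viewpoint (maximum minimal vertex cover $\leftrightarrow$ minimum independent dominating set) and the two-reduction architecture --- one reduction tuned to exponential base $2$ for SETH, one merely polynomial-parameter for the kernel bound --- are consistent with the footnote and with \cite{Zehavi17}, and part (b) is realizable essentially as you sketch it via Theorem~\ref{theorem:no-poly-kernel-cnf-sat} and Proposition~\ref{pptProperty}. The gap is in part (a), and you have in effect named it yourself: the entire SETH bound rests on a reduction whose exhibited vertex cover has size $n(1+o(1))$, yet the proposal never constructs the gadgets that achieve this. Concretely: (i) a variable gadget must present a genuine binary choice to a maximal independent set while contributing only \emph{one} vertex to the cover, but an edge $\{t_i,f_i\}$ both of whose endpoints carry clause edges forces either both endpoints or all of their clause-neighbours into any vertex cover, so ``contributes only one vertex to $S$'' is an unproved design constraint rather than a construction; (ii) sparsification still leaves $c_q\cdot n$ clauses with $c_q$ growing in $q$, so the claim that grouped clause gadgets contribute only $o(n)$ to $S$ is precisely where the uniformity in $q$ must be earned, and no grouping gadget is described; (iii) you explicitly defer the robustness argument that no maximal independent set undercuts the target size when $\phi$ is unsatisfiable, which is the heart of the correctness proof. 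A proposal that lists the three properties the construction ``must'' have, observes that the naive construction has none of them, and stops, is a correct problem analysis but not a proof; as written, part (a) has to be replaced by the actual construction of \cite{Zehavi17} (or an equivalent one) before the theorem is established.
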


We now prove the following observation from which it follows that the complement of a maximum minimal vertex cover is a minimum independent dominating set. 
 This observation is already known \cite{BCP15,Zehavi17}.
But we still give proof for completeness.

\begin{observation}
\label{lem:relate-mmvc-ids}
If $T$ is a minimal vertex cover of the graph $G$, then $V(G) \setminus T$ is an independent dominating set in $G$.
Furthermore, if $T$ is a maximum minimal vertex cover, then $V(G) \setminus T$ is a minimum independent dominating set.
\end{observation}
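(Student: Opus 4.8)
The plan is to prove the two sentences of the observation separately and then note that the construction is actually an inclusion-reversing bijection, from which the ``furthermore'' part is immediate.

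First I would handle the first sentence directly. Since $T$ is a vertex cover, every edge of $G$ has an endpoint in $T$, so no edge has both endpoints in $V(G)\setminus T$; hence $V(G)\setminus T$ is independent. To see it is dominating, take any $v\in T$. Minimality of $T$ means $T\setminus\{v\}$ is not a vertex cover, so some edge $e$ is not covered by $T\setminus\{v\}$; as $T$ itself covers $e$, the vertex $v$ must be an endpoint of $e$, and the other endpoint $w$ satisfies $w\notin T$. Thus every $v\in T$ has a neighbour in $V(G)\setminus T$, and every vertex of $V(G)\setminus T$ trivially lies in its own closed neighbourhood, so $N[V(G)\setminus T]=V(G)$. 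Hence $V(G)\setminus T$ is an independent dominating set.

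Next, to get the ``furthermore'' part, I would prove the converse correspondence: if $S$ is an independent dominating set of $G$, then $V(G)\setminus S$ is a minimal vertex cover. Independence of $S$ gives that $V(G)\setminus S$ is a vertex cover. For minimality, suppose for contradiction that $(V(G)\setminus S)\setminus\{v\}$ is still a vertex cover for some $v\notin S$; then every edge incident to $v$ has its other endpoint in $V(G)\setminus S$, so $v$ has no neighbour in $S$, contradicting that the dominating set $S$ must dominate $v$ (recall $v\notin S$). Therefore $V(G)\setminus S$ is a minimal vertex cover.

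Combining the two directions, $S\mapsto V(G)\setminus S$ is a bijection between the independent dominating sets and the minimal vertex covers of $G$, and it satisfies $|V(G)\setminus S|=|V(G)|-|S|$, hence reverses the size order. Consequently, maximizing $|T|$ over minimal vertex covers is the same as minimizing $|V(G)\setminus T|$ over independent dominating sets, so if $T$ is a maximum minimal vertex cover then $V(G)\setminus T$ is a minimum independent dominating set. I do not expect a real obstacle here; the only point requiring a moment's care is the degenerate case of isolated vertices, but this is consistent with both directions (an isolated vertex is never in a minimal vertex cover and always in every independent dominating set), so no separate argument is needed.
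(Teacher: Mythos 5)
Your proof is correct and follows essentially the same route as the paper: a direct argument that the complement of a minimal vertex cover is an independent dominating set, followed by the complementary correspondence (complement of an independent dominating set is a minimal vertex cover) to transfer the extremal statement. Your write-up is in fact slightly more careful than the paper's on the minimality of the complementary vertex cover, which the paper justifies somewhat tersely inside its proof by contradiction.
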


\begin{proof}
A set $T \subseteq V(G)$ is a maximum minimal vertex cover when $T$ is a minimal vertex cover and among all minimal vertex covers, $T$ has the maximum number of vertices.
Let $D = V(G) \setminus T$.
Clearly, $D$ is an independent set.
Note that for all $u \in V(G)$, either $u \in D$ or if $u \notin D$, then $u \in T$.
Clearly, as $T$ is minimal, there must be a neighbor $v$ that is not in $T$.
Therefore, such a neighbor can only be in $D$.
Hence, $D$ is a dominating set implying that $D$ is an independent dominating set.

Now suppose that $D$ is not a minimum independent dominating set.
Then there exists another independent dominating set $D'$ such that $|D'| < |D|$.
Now consider $T' = V(G) \setminus D'$.
Clearly $|T'| > |S|$ and any vertex in $D'$ has some neighbor in $T'$ (otherwise $D'$ is not minimum).
Therefore, $T'$ is a minimal vertex cover.
But then $T$ is not a maximum minimal vertex cover and that is a contradiction.
Hence, $D$ which is $V(G) \setminus T$ is a minimum independent dominating set.
\end{proof}

%\todo[inline]{Diptapriyo: this footnote has to be checked.}

From Observation \ref{lem:relate-mmvc-ids}, we know that for a solution $T$ of an {\sc MMVC-VC} instance, $V(G) \setminus T$ is a minimum independent dominating set, and therefore the solution of {\sc IDS-VC} of the same instance. 
Thus, the {\sc MMVC-VC} problem is equivalent to {\sc IDS-VC}. 
Due to Theorem \ref{theorem:mmvc-cv-lb}, we have the following result.

\begin{theorem}
\label{theorem:ids-vc-lb}
{\sc IDS-VC} cannot be solved in $(2-\epsilon)^k n^{\OO(1)}$ time unless {\sc SETH} fails. Moreover {\sc IDS-VC} does not have any polynomial kernel unless {\nka}.
\end{theorem}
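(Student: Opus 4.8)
The plan is to observe that {\sc MMVC-VC} and {\sc IDS-VC} are essentially the same problem up to complementing the solution set, so the two lower bounds of Theorem~\ref{theorem:mmvc-cv-lb} transfer almost verbatim. Concretely, I would exhibit a parameter-preserving transformation sending an instance $(G,S,\ell)$ of {\sc MMVC-VC} to the instance $(G,S,|V(G)|-\ell)$ of {\sc IDS-VC}: the graph $G$ and the given vertex cover $S$ are left untouched, and only the threshold is flipped. Hence the parameter $k=|S|$ is preserved exactly, and the map is trivially computable in polynomial time.

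Next I would verify correctness of this transformation using Observation~\ref{lem:relate-mmvc-ids}. In the forward direction, if $T$ is a minimal vertex cover of $G$ with $|T|\ge\ell$, then by the first part of Observation~\ref{lem:relate-mmvc-ids} the set $V(G)\setminus T$ is an independent dominating set, and it has size $|V(G)|-|T|\le|V(G)|-\ell$. For the converse, if $D$ is an independent dominating set of $G$ with $|D|\le|V(G)|-\ell$, then $V(G)\setminus D$ is a vertex cover (since $D$ is independent, every edge has an endpoint outside $D$), and it is minimal: for each $v\notin D$, the fact that $D$ is dominating provides a neighbour of $v$ inside $D$, so deleting $v$ from $V(G)\setminus D$ uncovers an edge. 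Thus $V(G)\setminus D$ is a minimal vertex cover of size $\ge\ell$. Consequently $(G,S,\ell)$ is a {\sc Yes}-instance of {\sc MMVC-VC} if and only if $(G,S,|V(G)|-\ell)$ is a {\sc Yes}-instance of {\sc IDS-VC}.

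With this equivalence in hand both claimed statements follow immediately. If {\sc IDS-VC} admitted a $(2-\epsilon)^k n^{\OO(1)}$ algorithm, then composing the polynomial-time transformation above with that algorithm would solve {\sc MMVC-VC} in $(2-\epsilon)^k n^{\OO(1)}$ time, contradicting Theorem~\ref{theorem:mmvc-cv-lb} and hence SETH. For the kernel statement, the same transformation is a polynomial parameter transformation from {\sc MMVC-VC} to {\sc IDS-VC} (it runs in polynomial time and keeps $k'=k$); since the classical version of {\sc Maximum Minimal Vertex Cover} is {\NP}-hard and {\sc Independent Dominating Set} is in {\NP}, Proposition~\ref{pptProperty} shows that a polynomial kernel for {\sc IDS-VC} would yield one for {\sc MMVC-VC}, again contradicting Theorem~\ref{theorem:mmvc-cv-lb}. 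I do not expect any real obstacle here: the only point needing a moment's care is checking both directions of the threshold flip, in particular the minimality of $V(G)\setminus D$, which is precisely what Observation~\ref{lem:relate-mmvc-ids} is designed to supply.
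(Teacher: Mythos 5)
Your proposal is correct and follows essentially the same route as the paper: both use Observation~\ref{lem:relate-mmvc-ids} to establish that {\sc MMVC-VC} and {\sc IDS-VC} are the same problem on the same graph and modulator with complementary thresholds, and then transfer both lower bounds from Theorem~\ref{theorem:mmvc-cv-lb}. You merely spell out the threshold flip and the minimality of $V(G)\setminus D$ more explicitly than the paper does, which is a harmless (indeed welcome) elaboration.
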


We now note that any vertex cover is a cluster vertex deletion set. Hence the cluster vertex deletion set size parameter is at most the vertex cover size parameter. From Theorem \ref{theorem:ids-vc-lb}, we have the following result.

\begin{corollary}
\label{theorem:ids-cvd-lb}
{\sc IDS-CVD} cannot be solved in $(2- \epsilon)^k n^{\OO(1)}$ time for any $\epsilon > 0$ unless SETH fails.
Moreover {\sc IDS-CVD} does not have any polynomial kernel unless \nka.
\end{corollary}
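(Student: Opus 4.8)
The plan is to observe that the identity map on instances is simultaneously a parameter-preserving reduction and a polynomial parameter transformation from {\sc IDS-VC} to {\sc IDS-CVD}, so both halves of the corollary fall out of Theorem~\ref{theorem:ids-vc-lb}. First I would record the trivial structural containment: if $S$ is a vertex cover of $G$, then $G \setminus S$ is edgeless, and an edgeless graph is a cluster graph, since each of its components is a single vertex, i.e.\ a clique on one vertex. Hence $S$ is also a cluster vertex deletion set of $G$. Therefore any input $(G, S, \ell)$ with $|S| = k$ to {\sc IDS-VC} is, verbatim, a legal input $(G, S, \ell)$ to {\sc IDS-CVD} with the same parameter $k$, and the two instances ask literally the same question---whether $G$ has an independent dominating set of size at most $\ell$---so they agree on being \textsc{Yes} or \textsc{No}.

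For the running-time bound I would argue by contradiction. Suppose {\sc IDS-CVD} can be solved in $(2-\epsilon)^k n^{\OO(1)}$ time for some $\epsilon > 0$. Running this algorithm on the unmodified {\sc IDS-VC} instance solves {\sc IDS-VC} in $(2-\epsilon)^k n^{\OO(1)}$ time, contradicting Theorem~\ref{theorem:ids-vc-lb}; hence SETH fails.

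For the kernel bound I would invoke Proposition~\ref{pptProperty}. The identity map is a polynomial parameter transformation from {\sc IDS-VC} to {\sc IDS-CVD} --- polynomial-time computable, parameter unchanged, and equivalence of \textsc{Yes}-instances by the observation above. The classical version of {\sc IDS-VC} is \NP-hard: given any instance $(G, \ell)$ of ordinary {\sc Independent Dominating Set}, the set $V(G)$ is a vertex cover of $G$, so $(G, V(G), \ell)$ is an equivalent instance of {\sc IDS-VC}; and {\sc IDS-CVD} is clearly in \NP. Hence Proposition~\ref{pptProperty} yields that a polynomial kernel for {\sc IDS-CVD} would give one for {\sc IDS-VC}, contradicting Theorem~\ref{theorem:ids-vc-lb} (equivalently Theorem~\ref{theorem:mmvc-cv-lb}). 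So {\sc IDS-CVD} has no polynomial kernel unless \nka.

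I do not expect any real obstacle here: the whole argument rests on the one-line fact that a vertex cover is a cluster vertex deletion set, so nothing else in the instance needs to change. The only point worth stating carefully, rather than leaving implicit, is the verification of the hypotheses of Proposition~\ref{pptProperty} --- \NP-hardness of the source problem and membership of the target in \NP --- which is needed only for the kernel lower bound.
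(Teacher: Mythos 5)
Your proposal is correct and follows exactly the paper's route: the observation that a vertex cover is a cluster vertex deletion set makes the identity map a parameterized reduction (and a PPT) from {\sc IDS-VC} to {\sc IDS-CVD}, so both lower bounds transfer from Theorem~\ref{theorem:ids-vc-lb}. You simply spell out the verification of the hypotheses of Proposition~\ref{pptProperty} more explicitly than the paper does.
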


\iffalse
\begin{corollary}
\label{theorem:ids-cvd-lb}
{\sc IDS-CVD} cannot be solved in $(2- \epsilon)^k n^{\OO(1)}$ time for any $\epsilon > 0$ unless SETH fails.
Moreover, {\sc IDS-CVD} does not have any polynomial kernel unless \nka.
\end{corollary}
\begin{proof}
We restate our problem definition as follows.

\defparproblem{{\sc IDS-VC}}{An undirected graph $G = (V, E), S \subseteq V(G)$ such that $S$ is a vertex cover of $G$ and an integer $\ell$.}{$|S|$}{Is there an independent dominating set of $G$ with at most $\ell$ vertices?}

We have the following theorem.

\begin{theorem}[~\cite{Zehavi17}]
\label{lem:reduction-for-mmvc}
Unless the {\sc SETH} fails, {\sc MMVC-VC} cannot be solved in $(2-\epsilon)^k n^{\OO(1)}$ time. 
\end{theorem}

It follows from Observation~\ref{lem:relate-mmvc-ids}, that the complement of an independent dominating set is minimal vertex cover.
Hence, due to Theorem \ref{lem:reduction-for-mmvc} and Lemma \ref{lem:relate-mmvc-ids}, we have the following result.

\begin{theorem}\label{theorem-ids-vc-lower}
{\sc IDS-VC} cannot be solved in $(2-\epsilon)^k n^{\OO(1)}$ time unless the {\sc SETH} fails.
\end{theorem}

As every vertex cover is a cluster vertex deletion set, it follows that our statement is true.
\end{proof}
\fi

For {\sc EDS-CVD}, we can only prove a weaker lower bound of $2^{o(k)}$ time assuming ETH, but we give the lower bound for EDS parameterized by even a larger parameter, i.e. the size of a vertex cover. We have the following results. 

\begin{theorem}
\label{thm:lower-bound-eds-vc-new}
{\sc EDS-VC} cannot be solved in $2^{o(|S|)}$ time unless the {\it ETH} fails.
\end{theorem}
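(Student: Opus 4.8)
The plan is to give a polynomial-time (in fact linear-parameter) reduction from \textsc{$3$-CNF-SAT} to \textsc{EDS-VC} such that a formula on $n$ variables produces a graph whose vertex cover $S$ has size $\OO(n)$. If such a reduction exists and \textsc{EDS-VC} could be solved in $2^{o(|S|)}$ time, then \textsc{$3$-CNF-SAT} would be solvable in $2^{o(n)}\cdot n^{\OO(1)}$ time, contradicting ETH (Conjecture~\ref{thm:eth}). So the whole game is to engineer the gadgets so that (i) an efficient dominating set exists in $G$ iff $\phi$ is satisfiable, and (ii) all but $\OO(n)$ vertices of $G$ can be collected into an independent set, i.e. $G$ has a vertex cover of size $\OO(n)$.

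First I would set up variable gadgets. For each variable $x_i$ I would create a small constant-size gadget whose only efficient-domination-consistent configurations correspond to the two truth values of $x_i$; a natural choice is a short path or a triangle-with-pendants whose unique-domination constraint forces a binary choice, and I would route one ``true'' literal vertex $t_i$ and one ``false'' literal vertex $f_i$ out of the gadget. The key design constraint for the parameter bound is that the \emph{selectable} vertices — those that may lie in the efficient dominating set — together with a few hub vertices must form a vertex cover, so I would make the literal/clause connection vertices form an independent set dominated from the variable side, and push all edges onto the $\OO(n)$ gadget vertices. For each clause $c_j = (\ell_{j1}\vee\ell_{j2}\vee\ell_{j3})$ I would attach a clause gadget adjacent to the three corresponding literal vertices, designed so that it is efficiently dominated exactly when at least one of those literal vertices is chosen (``true'') — this is the standard trick of a vertex that must be dominated exactly once, forcing ``at least one'' while the ``exactly once'' part of EDS forbids two chosen literals from covering it, which one handles by making literal vertices in the same clause non-adjacent and adding auxiliary private-neighbor vertices to absorb the parity. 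Since the number of clauses $m = \OO(n^3)$ is too large to appear in the parameter, the clause gadgets must contribute \emph{only} to the independent-set side of the vertex cover: each clause gadget should consist of degree-bounded vertices all of whose edges go to the $\OO(n)$ literal/hub vertices, so that $S = \{\text{variable-gadget vertices}\}\cup\{\text{literal vertices}\}\cup\{\text{hubs}\}$ of size $\OO(n)$ remains a vertex cover.

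Then I would prove the two directions of correctness. For the forward direction, given a satisfying assignment, select the literal vertices that are true together with the forced gadget vertices and verify that every vertex is dominated exactly once — the clause gadgets check out because each clause has a true literal, and the auxiliary vertices are designed to soak up any over-coverage. For the reverse direction, given an efficient dominating set $D$ in $G$, I would argue that within each variable gadget $D$ must pick exactly the ``true'' or exactly the ``false'' configuration (no other subset satisfies the local unique-domination constraints), read off an assignment, and then show that every clause gadget being efficiently dominated forces its clause to be satisfied. Finally I would tally $|S| = \OO(n)$, invoke ETH: a $2^{o(|S|)}$ algorithm gives a $2^{o(n)}$ algorithm for \textsc{$3$-CNF-SAT}, contradiction.

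The main obstacle I anticipate is the tension between the ``exactly once'' requirement of efficient domination and the disjunctive ``at least one true literal'' semantics of a clause, \emph{combined with} the requirement that clause gadgets cannot enlarge the parameter. The clean way around it is to keep the three literal vertices of a clause pairwise non-adjacent (so they never double-dominate a common neighbor) and to introduce, per clause, a constant number of auxiliary vertices each of bounded degree with all neighbors among the $\OO(n)$ ``expensive'' vertices, so that parity/double-coverage issues are resolved locally without any auxiliary vertex needing to be chosen. Getting this gadget exactly right — small, degree-bounded, only touching the small side — is the crux; everything else is bookkeeping. (A plausible shortcut, if the gadgetry gets unwieldy, is to instead reduce from a problem already known to have no $2^{o(|S|)}$ algorithm under ETH parameterized by vertex cover, or from \textsc{EDS}/\textsc{Perfect Code} with a low-vertex-cover-width instance, but I would expect the direct \textsc{$3$-CNF-SAT} reduction above to be the intended route.)
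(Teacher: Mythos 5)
Your overall strategy (a direct polynomial-time reduction from {\sc $3$-CNF-SAT} to {\sc EDS-VC}, then invoking ETH) is exactly the paper's route, but your plan contains a genuine gap in the one place you yourself identify as the crux: the clause gadget. You propose to resolve the tension between the ``exactly once'' semantics of efficient domination and the ``at least one true literal'' semantics of a clause by adding, per clause, auxiliary vertices that ``absorb the parity \ldots without any auxiliary vertex needing to be chosen.'' This mechanism cannot work: in efficient domination, a vertex adjacent to two chosen literal vertices is dominated twice, and no amount of \emph{unchosen} additional vertices can undo that --- unchosen vertices dominate nothing and hence absorb nothing. If a single clause vertex sees all three literals, you have encoded $1$-in-$3$ semantics, not OR. The only way to encode OR under exact coverage is to put \emph{chosen selector vertices inside the clause gadget} that take responsibility for whichever of the clause's ports were not hit by a true literal. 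This is what the paper does: each clause gets three port vertices $c_1,c_2,c_3$ (each adjacent to \emph{one} literal, keeping the literals from double-dominating a common neighbour) plus seven selectors $d_0,d_1,d_2,d_3,d_{1,2},d_{2,3},d_{1,3}$ forming a clique, where $d_T$ dominates exactly the ports indexed by $T$; exactly one selector is chosen per clause, and a selector covering all three ports is deliberately absent, so a clause with no true literal cannot be efficiently dominated.

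This unavoidably puts $\Theta(1)$ vertices per clause (here, six of the seven clique vertices) into every vertex cover, so your self-imposed requirement that $|S|=\OO(n)$ and that clause gadgets live entirely on the independent-set side is not just unnecessary but unachievable for this kind of gadget: with only $\OO(n)$ hub vertices you cannot give $m=\Theta(n^3)$ clauses private chosen selectors, and without chosen selectors you are stuck with exact-cover semantics. The missing ingredient that makes the weaker bound suffice is the sparsification-lemma form of ETH: {\sc $3$-CNF-SAT} has no $2^{o(n+m)}$-time algorithm, so a vertex cover of size $2n+6m=\OO(n+m)$ --- which is what the paper's construction yields --- already gives the claimed $2^{o(|S|)}$ lower bound. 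Once you drop the $\OO(n)$ constraint and allow the clause gadget to contribute a constant number of clique vertices to $S$, your outline goes through essentially as in the paper.
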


\begin{proof}
A polynomial time reduction from {\sc $3$-SAT} to {\sc Exact-SAT} was given by Schaefer~\cite{Schaefer78}.
This reduction takes an instance of {\sc $3$-CNF-SAT} problem consisting of $n$ variables and $m$ clauses and reduces it into an instance of {\sc Exact-SAT} problem with $n + 6m$ variables and $5m$ clauses.
The input to the {\sc Exact-SAT} problem is a formula $\varphi$ and the objective is to check if there exists an assignment that satisfies exactly one literal from every clause.
After that, it is possible to reduce {\sc Exact-SAT} problem into {\sc EDS-VC} that goes similar to the line of the reduction described in Section~\ref{sec:seth-lower-bound-ids-svd}.

In this section, we independently give a direct reduction from {\sc 3-SAT} to {\sc EDS-VC} which we believe is a lot simpler.

{\bf Construction:} Let $\phi$ be the input formula with variables $x_1,\ldots,x_n$ and clauses $C_1,\ldots,C_m$.
For all $i \in [n]$, we create $v_i,\bar{v_i}$. 
The vertex $v_i$ corresponds to $x_i$ appearing in its pure form and $\bar{v_i}$ corresponds to $x_i$ appearing in negated form. 
Equivalently, the vertex $v_i$ corresponds to the literal $x_i$, and the vertex $\bar{v_i}$ corresponds to the literal $\neg x_i$.
For all $i \in [n]$, we add edge $(v_i,\bar{v_i})$.
We call this a variable gadget.

\begin{figure}[ht]
\centering
	\includegraphics[scale=0.24]{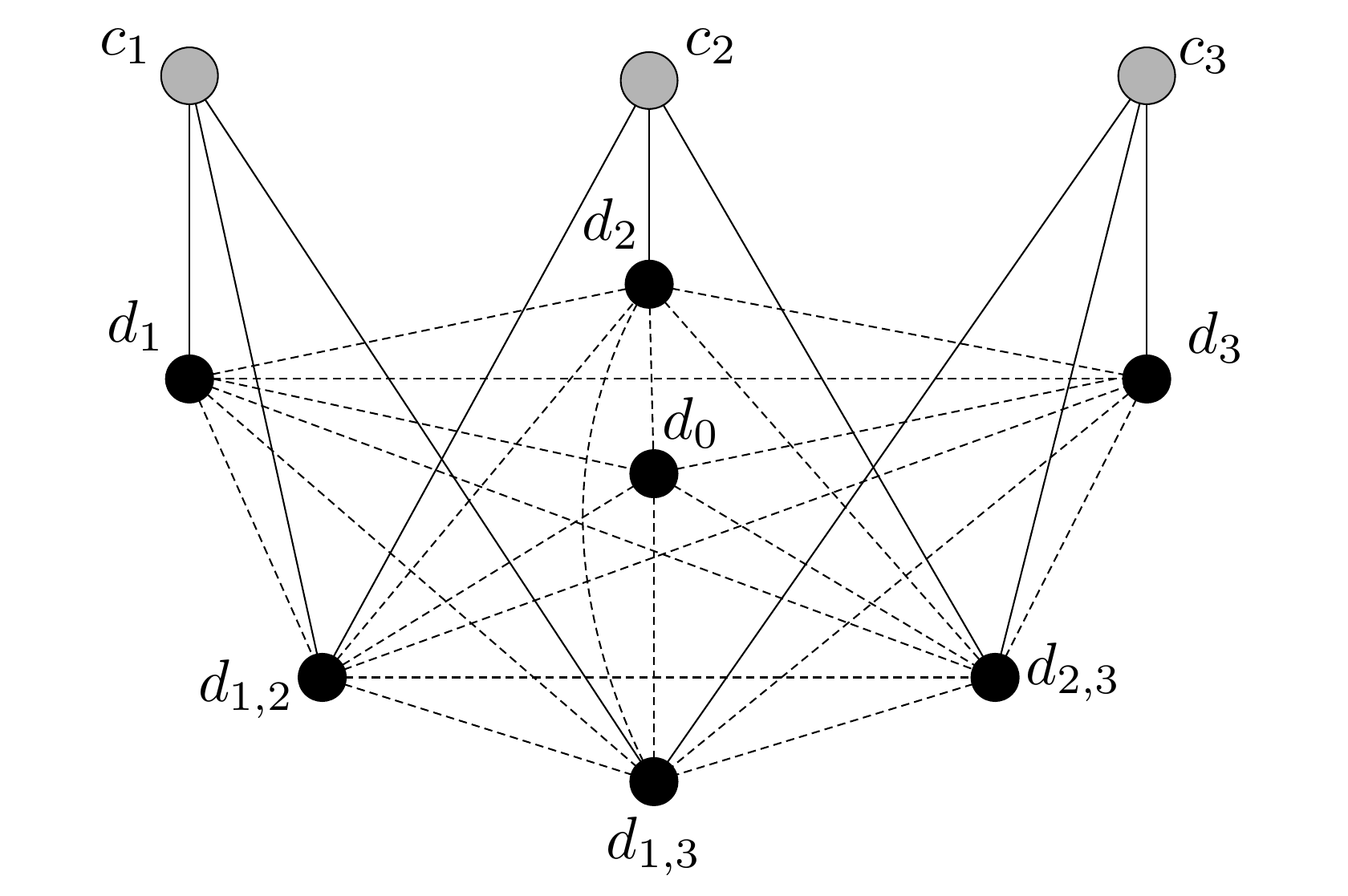}
	\caption{An illustration of the clause gadget described in Section~\ref{thm:lower-bound-eds-vc-new}. The vertices are $c_1,c_2,c_3,d_0,d_1,d_2,d_3,d_{1,2},d_{2,3},d_{1,3}$.
The edges are as follows: (i) $c_1$ is adjacent to $d_1, d_{1,2}$ and $d_{1,3}$, (ii) $c_2$ is adjacent to $d_2, d_{1,2}$ and $d_{2,3}$,
(iii) make $c_3$ adjacent to $d_3, d_{2, 3}$ and $d_{1,3}$, (iv) make $d_0$ adjacent to each of the six vertices $d_1, d_2, d_3, d_{1,2}, d_{2,3}$, and $d_{1,3}$, and (v) $\{d_0, d_1, d_2, d_3, d_{1,2}, d_{2,3}, d_{1,3}\}$ is a clique.}
\label{fig:3SAT-EDS-Vertex-Cover}
\end{figure}

Let $C$ be a clause. 
We create vertices $c_1,c_2,c_3,d_0,d_1,d_2,d_3,d_{1,2},d_{2,3},d_{1,3}$ and form a clause gadget as follows.
\begin{itemize}
	\item make $c_1$ adjacent to $d_1, d_{1,2}$ and $d_{1,3}$,
	\item make $c_2$ adjacent to $d_2, d_{1,2}$ and $d_{2,3}$,
	\item make $c_3$ adjacent to $d_3, d_{2, 3}$ and $d_{1,3}$,
	\item make $d_0$ adjacent to each of the six vertices $d_1, d_2, d_3, d_{1,2}, d_{2,3}$, and $d_{1,3}$, and
	\item finally, convert $\{d_0, d_1, d_2, d_3, d_{1,2}, d_{2,3}, d_{1,3}\}$ into a clique.
\end{itemize}
 
We refer to the Figure~\ref{fig:3SAT-EDS-Vertex-Cover} for an illustration of the clause gadget.
The vertices $c_1,c_2,c_3$ represent a copy of clause $C$.
%We repeat this process for each of the clauses.
Suppose a clause $C = (x_i \lor \neg {x_j} \lor x_p)$.
We add edges $(v_i,c_1),(\bar{v_j},c_2),(v_p,c_3)$.
This completes our construction.
Now we prove the following lemma.

\begin{lemma}
\label{lem:3sat-eds-vc-eqv}
$\phi$ is satisfiable if and only if $G_{\phi}$ has an EDS of size $n + m$.
\end{lemma}

\begin{proof}
$(\Rightarrow)$ Let $\phi$ be satisfiable.
Let $\bar{a} = (a_1,\ldots,a_n)$ be a satisfying assignment where $\forall i \in [n]: a_i \in \{true,false\}$.
We construct $D$ as follows.
If $a_i = true$, then we add $v_i$ into $D$.
Otherwise we add $\bar{v_i}$ into $D$.
Fix a clause $C$.
If only $c_1$ is dominated, then we pick $d_{2,3}$ into $D$.
The case is symmetric when only $c_2$ or $c_3$ is dominated.
Otherwise, if $c_1$ and $c_2$ are dominated but $c_3$ is not dominated, then we add $d_3$ into $D$.
Other cases are symmetric when exactly two from $c_1,c_2,c_3$ are dominated.
When all $c_1,c_2,c_3$ are dominated, then we pick $d_0$.
In this way, we pick exactly one vertex from each of the clause gadgets and get an efficient dominating set of size $n+m$.

$(\Leftarrow)$ Let $D$ be an efficient dominating set of size $n+m$.
Clearly, by construction exactly one vertex is picked from each of the clause gadgets and exactly one vertex is picked from each of the variable gadgets.
We construct an assignment $\bar{a} = (a_1,\ldots,a_n)$ as follows.
If $v_i \in D$, then we assign $a_i = true$, otherwise $\bar{v_i} \in D$ and we assign $a_i = false$.
Suppose $\bar{a}$ does not satisfy $\phi$.
Then there exists a clause $C$ that is not satisfied by $\bar{a}$.
It means that none of $c_1,c_2,c_3$ is dominated by variable vertices.
Now note that to dominate them at least two vertices are required from clause gadget.
But between every two vertices in clause gadget, the distance is two.
So, in such a case $D$ was not an efficient dominating set which is a contradiction.
So, $\bar{a}$ satisfies $\phi$ proving $\phi$ to be a satisfiable formula.
\end{proof}

From the above lemma and construction, we can construct a vertex cover of $G_{\phi}$ containing $2n + 6m$ vertices as follows.
Clearly, the vertices $\{v_i, \bar{v_i}\}$ for all $i \in [n]$ and the vertices $\{d_1,d_2,d_3,d_{1,2},d_{2,3},d_{1,3}\}$ for all clauses form a vertex cover with at most $2n + 6m$ vertices.
Hence, $G_{\phi}$ has a vertex cover having $2n + 6m$ vertices.
This provides a reduction from {\sc $3$-CNF-SAT} to {\sc EDS-VC}.

Suppose there exists an algorithm $\BB$ that solves {\sc EDS-VC} in $2^{o(k)} n^{\OO(1)}$ time.
Now we use algorithm $\BB$ to provide an algorithm for $3$-CNF-SAT running in $\OO(2^{o(n+m)})$ time.
Let $\phi$ be an instance of {\sc $3$-CNF-SAT} consisting of $n$ variables and $m$ clauses.
We construct $G_{\phi}$ as described above.
Now we use algorithm $\BB$.
If $\BB$ outputs $(G_{\phi},n+m)$ as ``NO'' instance then we output that $\phi$ is unsatisfiable.
Otherwise $\BB$ outputs an efficient dominating set of size $n+m$.
By construction as described in Lemma~\ref{lem:3sat-eds-vc-eqv}, we construct an assignment $\bar{a} = (a_1,\ldots,a_n)$ for {\sc $3$SAT}.
We know that $\bar{a}$ satisfies the clause by Lemma~\ref{lem:3sat-eds-vc-eqv}.
So we output $\bar{a}$ as a satisfying assignment.
As $\BB$ runs in $\OO(2^{o(n+m)})$ time and the transformation from $\phi$ to $G_{\phi}$ takes polynomial time, we get an algorithm for $3$-CNF-SAT in $\OO(2^{o(n+m})$ time.
It follows from~\cite{IPZ01,IP01} (See also Theorem $14.4$ of~\cite{CFKLMPPS15}) that unless ETH fails, $3$-CNF-SAT cannot be solved in $\OO(2^{o(n+m)})$ time.
This contradicts ETH.
So {\sc EDS-VC} cannot be solved in $2^{o(|S|)} n^{\OO(1)}$ time unless ETH fails.
It is well-known that any vertex cover is a cluster vertex deletion set.
So, {\sc EDS-CVD} also cannot be solved in $2^{o(k)} n^{\OO(1)}$ time unless ETH fails.
\end{proof}

Since the cluster vertex deletion set size parameter is at most the vertex cover size parameter for any graph, we have the following corollary.

\begin{corollary}
{\sc EDS-CVD} cannot be solved in $2^{o(|S|)} n^{\OO(1)}$ time unless {\it ETH} fails.
\end{corollary}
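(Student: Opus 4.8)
The plan is to obtain this as an immediate consequence of Theorem~\ref{thm:lower-bound-eds-vc-new} together with the elementary fact that the cluster vertex deletion parameter never exceeds the vertex cover parameter. The work has already been done: the direct reduction from {\sc $3$-CNF-SAT} to {\sc EDS-VC} in the proof of Theorem~\ref{thm:lower-bound-eds-vc-new} rules out $2^{o(|S|)} n^{\OO(1)}$ algorithms for {\sc EDS-VC} under ETH, so here I only need to transfer this lower bound from the vertex cover parameter to the (never larger) CVD parameter.

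First I would record the structural observation: if $S$ is a vertex cover of a graph $G$, then $G \setminus S$ is edgeless, and an edgeless graph is trivially a cluster graph, since each of its connected components is a single vertex, i.e.\ a $K_1$, which is a clique. Hence every vertex cover of $G$ is also a cluster vertex deletion set of $G$; in particular the minimum CVD size of $G$ is at most its minimum vertex cover size. I would then note that this makes the identity map a (trivial) parameterized reduction from {\sc EDS-VC} to {\sc EDS-CVD}: given an instance $(G, S, \ell)$ of {\sc EDS-VC} in which $S$ is a vertex cover of size $k$, the very same triple $(G, S, \ell)$ is a valid instance of {\sc EDS-CVD} with modulator $S$ (as $G\setminus S$ is a cluster graph), with the same parameter value $k$, the same question, and therefore the same answer.

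Finally, I would close by contradiction: suppose {\sc EDS-CVD} admitted an algorithm running in $2^{o(k)} n^{\OO(1)}$ time. Running it on the instances produced above would solve {\sc EDS-VC} in $2^{o(|S|)} n^{\OO(1)}$ time, contradicting Theorem~\ref{thm:lower-bound-eds-vc-new} and hence ETH. There is essentially no obstacle in this corollary; the only point requiring a sentence of care is that the reduction underlying Theorem~\ref{thm:lower-bound-eds-vc-new} genuinely outputs a \emph{bona fide} vertex cover of $G_\phi$ (the set $\{v_i,\bar v_i : i\in[n]\}\cup\{d_1,d_2,d_3,d_{1,2},d_{2,3},d_{1,3}\text{ over all clauses}\}$, of size $2n+6m$), so that re-reading that set as a CVD modulator of the same size is legitimate — which it is, since deleting it leaves an edgeless graph.
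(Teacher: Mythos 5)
Your proposal is correct and matches the paper's argument exactly: the paper likewise derives the corollary from Theorem~\ref{thm:lower-bound-eds-vc-new} by observing that any vertex cover is a cluster vertex deletion set (the remaining graph being edgeless, hence a cluster graph), so the same instance and modulator transfer the $2^{o(|S|)}$ ETH lower bound from {\sc EDS-VC} to {\sc EDS-CVD}. Your extra sentence verifying that the set $\{v_i,\bar v_i\}\cup\{d_1,d_2,d_3,d_{1,2},d_{2,3},d_{1,3}\}$ is a genuine vertex cover of $G_\phi$ is a small but welcome piece of added care.
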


\section{Dominating Set variants parameterized by SVD size}
\label{sec:eds-ids-svd}
In this section, we address the parameterized complexity of dominating set variants when parameterized by the size of a given SVD set $S$.
Note that DS and TDS are \NP-hard on split graphs ~\cite{raman2008short}.  We have the following theorem.

\begin{theorem}
{\sc Dominating Clique} is {\NP-hard} on split graphs. 
\end{theorem}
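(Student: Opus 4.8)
The plan is to reduce from a known {\NP}-hard problem about dominating sets or dominating cliques in a restricted graph class. The cleanest route is to observe that {\sc Dominating Clique} is already known to be {\NP}-hard on general graphs (indeed, deciding whether a graph has a dominating clique at all is {\NP}-complete, as shown by Kratsch and others), but to keep the proof self-contained I would reduce from a classical {\NP}-hard problem such as {\sc Set Cover} or directly from {\sc SAT}-type problems, and engineer the construction so that the resulting graph is split. In fact, Lemma~\ref{lemma:general-about-ds-and-tds-redn} already gives a reduction from {\SetCover} to {\sc DC-CVD} where the constructed graph $G = (U \uplus V, E)$ is a \emph{split graph} (with $V$ the clique and $U$ the independent set), and $(U,\FF,\ell)$ has a set cover of size $\ell$ iff $G$ has a dominating clique of size $\ell$. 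Since {\SetCover} is {\NP}-hard, this immediately yields {\NP}-hardness of {\sc Dominating Clique} on split graphs. So the first step would simply be to invoke that construction and verify that it is a valid polynomial-time many-one reduction for the decision problem, using Corollary~\ref{corollay:split-dc-tds-within-clique} for the backward direction exactly as in the proof of Lemma~\ref{lemma:general-about-ds-and-tds-redn}.

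Concretely, the steps in order: (1) Take an instance $(U,\FF,\ell)$ of {\SetCover} with $\ell \ge 2$ (the case $\ell = 1$ is polynomial-time checkable separately, and one can always pad $\FF$ with a copy of the universe to force $\ell \ge 2$ without changing satisfiability in a controlled way, or just handle it directly). (2) Build $G$ as in Lemma~\ref{lemma:general-about-ds-and-tds-redn}: vertex set $U \uplus \{s_1,\dots,s_m\}$, edges $u_i s_j$ whenever $u_i \in S_j$, and all edges among the $s_j$'s so that they form a clique. Then $G$ is split with clique part $V = \{s_1,\dots,s_m\}$ and independent part $U$. (3) Forward direction: a set cover $\FF'$ of size $\ell$ gives the vertex set $V' = \{s_i : S_i \in \FF'\}$, which is a clique (inside $V$) and dominates all of $V$ and all of $U$, hence is a dominating clique of size $\ell$. (4) Backward direction: given a dominating clique $D$ of size $\ell$, apply Corollary~\ref{corollay:split-dc-tds-within-clique} to replace it (if necessary) by a dominating clique $D' \subseteq V$ of size at most $\ell$; the corresponding sets cover $U$. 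This establishes the equivalence, and since the construction is polynomial, {\sc Dominating Clique} is {\NP}-hard on split graphs.

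The main obstacle, if any, is a subtlety about whether we need $\ell \ge 2$: Corollary~\ref{corollay:split-dc-tds-within-clique} as stated for dominating cliques works for all $k$, but one should double-check connectivity — Lemma~\ref{lemma:split-graph-domset-within-clique} assumes $G$ is connected. In the construction, $G$ is connected provided every element of $U$ lies in some set of $\FF$ (else that element-vertex is isolated), which we may assume w.l.o.g.\ since an uncovered element makes $(U,\FF,\ell)$ a trivial {\sc No}-instance. A second minor point: {\SetCover} is {\NP}-hard even when restricted to instances where a set cover of the target size, if it exists, has size $\ge 2$; alternatively, reduce from {\sc Vertex Cover} or {\sc Dominating Set} on a class where the optimum is at least $2$. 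None of this is a real difficulty — the reduction is essentially already done in Lemma~\ref{lemma:general-about-ds-and-tds-redn}, and the theorem follows by reading off the split structure of the constructed graph, exactly as the remark after Theorem~\ref{theorem:ds-cvd-lb} already notes that $G$ minus $k$ vertices is a single clique (hence $G$ itself is split).
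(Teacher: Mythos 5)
Your proof is correct, but it takes a different route from the paper's. The paper's proof is a two-line argument: by Lemma~\ref{lemma:split-graph-domset-within-clique}, on a connected split graph every dominating set can be replaced by one of the same size inside the clique part, which is automatically a dominating clique; hence {\sc Dominating Set} and {\sc Dominating Clique} coincide on split graphs, and the claim follows from the known {\NP}-hardness of {\sc Dominating Set} on split graphs~\cite{raman2008short}. You instead reduce from {\SetCover}, reusing the construction of Lemma~\ref{lemma:general-about-ds-and-tds-redn} and observing that the graph it produces is split. Both arguments ultimately rest on the same structural fact (Lemma~\ref{lemma:split-graph-domset-within-clique} resp.\ Corollary~\ref{corollay:split-dc-tds-within-clique}), so the difference is only in the choice of source problem: the paper's version is shorter and leans on a cited hardness result, while yours is self-contained within the paper and makes explicit that hardness already holds on split graphs consisting of one clique plus an independent set. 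Your handling of the two boundary issues (connectivity, which you dispatch by assuming every universe element occurs in some set, and the $\ell\ge 2$ hypothesis of Lemma~\ref{lemma:general-about-ds-and-tds-redn}, which is harmless since $\ell=1$ instances of {\SetCover} are decidable in polynomial time) is sound, so the argument goes through as written.
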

\begin{proof}
From Lemma \ref{lemma:split-graph-domset-within-clique}, we can conclude that for every dominating set, there is a dominating set of the same size within the clique part of the split graph. Since this dominating set is within the clique, it is a dominating clique. 
Thus, an instance $(G,k)$ is a YES-instance of {\sc Dominating Set}  on split graphs if and only if $(G,k)$ is a YES-instance of {\sc Dominating Clique}  on split graphs. Since {\sc Dominating Set} is \NP-hard on split graphs, the theorem follows.
\end{proof}

Since DS, DC and TDS are {\NP-hard} on split graphs, they are para-\NP-hard when parameterized by the deletion distance to a split graph. Hence we focus only on EDS and IDS.

We assume that $S$ is given with the input. Otherwise given $(G, k)$, we use an 
$1.27^{k+o(k)} n^{\OO(1)}$ algorithm due to 
Cygan and Pilipczuk~\cite{CP13} to find a set of vertices of size at most $k$ whose removal makes $G$ a split graph.

\subsection{EDS and IDS parameterized by SVD size}

In this section, we illustrate the parameterized complexity of {\sc EDS-SVD} and {\sc IDS-SVD}. 
We restate the problem definitions.

\defparproblem{{\sc (EDS/IDS)-SVD}}{An undirected graph $G = (V, E), S \subseteq V(G)$ such that $G \setminus S$ is a split graph and an integer $\ell$.}{$|S|$}{Does $G$ have an efficient dominating set/independent dominating set/dominating clique of size at most $\ell$?}

%First, we provide a simple algorithm for {\sc IDS-SVD}. The idea is to make a guess for the solution within the SVD and solve the resulting disjoint problem in polynomial time. It turns out that it works for {\sc EDS-SVD} too.

\begin{theorem}
\label{thm:simple-eds-ids-svd-fpt}
{\sc EDS-SVD} and {\sc IDS-SVD} can be solved in $2^{|S|} n^{\OO(1)}$ time.
\end{theorem}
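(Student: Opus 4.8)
The plan is to guess the intersection $S' = X \cap S$ of an optimal solution $X$ with the modulator $S$; there are $2^{|S|}$ such guesses, and I claim each can be completed in polynomial time. Write $V(G)\setminus S = C \uplus I$ with $C$ a clique and $I$ an independent set (a split partition of the split graph $G\setminus S$ can be found in polynomial time). The key structural observation is that \emph{both} an independent dominating set and an efficient dominating set are independent sets: for an EDS $X$, two adjacent $u,v\in X$ would give $|N[u]\cap X|\ge 2$. Hence $|X\cap C|\le 1$, so I additionally guess $X\cap C$, which is either empty or a single vertex $c\in C$; that is another $|C|+1\le n+1$ choices. Once $S'$ and the optional clique vertex are fixed, set $T := S'\cup (X\cap C)$.

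The point is that $X\cap I$ is now \emph{forced}. Consider first IDS. Since every vertex $u\in I$ has $N(u)\subseteq S\cup C$, a vertex $u\in I$ with $u\notin N[T]$ cannot be dominated by any vertex of $X$ other than itself (its potential dominators outside $T$ lie in $I$, which is independent), so $u\in X$; conversely, since $X$ is independent and $T\subseteq X$, no vertex of $X\cap I$ can lie in $N[T]$. Therefore, for the correct guess, $X\cap I = \{u\in I : u\notin N[T]\}$, and the whole candidate solution $X = T\cup\{u\in I:u\notin N[T]\}$ is determined. The algorithm then checks whether $T$ is an independent set, whether $T\cup\{u\in I:u\notin N[T]\}$ is a dominating set of $G$, and whether its size is at most $\ell$; it outputs \textsc{Yes} iff some guess passes all checks. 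Correctness in the forward direction is immediate from the forcing argument, and in the reverse direction because any set recovered by the algorithm is explicitly verified to be an independent dominating set.

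For EDS the same scheme applies, with two extra bookkeeping steps forced by the ``dominated exactly once'' requirement. First, the guess $S'$ is discarded unless the sets $\{N[v] : v\in S'\}$ are pairwise disjoint, and moreover no vertex of $X\setminus S'$ may be adjacent to or contained in $N[S']$ (otherwise some vertex of $N[S']$ would be dominated twice); so the candidate solution is built only from vertices at distance at least $2$ from every vertex of $S'$. Second, once $T = S'\cup(X\cap C)$ is fixed, the forced $I$-part is $\{u\in I : u\notin N[T]$ and $u$ is at distance at least $2$ from $S'\}$, by the analogous argument: an undominated $u\in I$ has all its neighbours in $S\cup C$ and hence must dominate itself, while any vertex of $X\cap I$ is non-adjacent to $T$ and to $S'$. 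Finally one verifies that the resulting set is an efficient dominating set of $G$ of size at most $\ell$, i.e.\ that every vertex of $G$ has exactly one closed neighbour in the candidate set; the algorithm answers \textsc{Yes} iff some guess yields a valid EDS.

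The running time is $2^{|S|}$ choices for $S'$, times $O(n)$ choices for the clique vertex, times polynomial-time verification, i.e.\ $2^{|S|} n^{\OO(1)}$. The only real subtlety is the EDS case: one must check that the preprocessing (forbidding solution vertices in or adjacent to $N[S']$) together with the final ``exactly one closed neighbour'' verification precisely capture the efficiency condition, so that the algorithm neither misses a valid EDS nor accepts an invalid one. The independent-set structure of $I$ (and the clique structure of $C$, limiting $|X\cap C|$ to at most one) is exactly what makes the $I$-part determined and keeps each guess polynomial.
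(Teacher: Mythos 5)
Your proposal is correct and follows essentially the same route as the paper: guess $S' = X \cap S$, use the independence of any IDS/EDS to restrict $X \cap C$ to at most one vertex (an extra factor of $|C|+1$), observe that the $I$-part of the solution is then forced, and verify each candidate, for both problems. The only nitpick is in the EDS case, where the vertices forbidden from the solution are those at distance at most two from $S'$ (i.e.\ in $N[S'] \cup N^{=2}(S')$), so your phrase ``distance at least $2$ from $S'$'' should read ``distance at least $3$''; since your algorithm ends with an explicit check that the candidate is an efficient dominating set, this slip does not affect correctness.
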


\begin{proof}
We will essentially prove that once we guess a correct subset $S'$ of $S$ that is in the EDS solution we seek, the remaining set of vertices can be determined in polynomial time, and then the claim will follow as we will try all possible guesses of $S'$.
As in the case of EDS parameterized by cluster vertex deletion set, once we guess the subset $S'$, we delete $N[S']$ and color
$N^2[S']$ red. 
Recall that we color a vertex by red if that vertex cannot be picked in a feasible efficient dominating set but has to be dominated.
Let $S'' = S \setminus N[S']$ be the remaining vertices of $S$ and $(C, I)$ be a partition of $G\setminus S$ into a clique $C$ and an independent set $I$.
%We delete the red vertices from $C$, and if this deletes the entire vertex set of $C$, then we abandon this guess and move to a new guess.
%If all vertices in $C$ and at least one vertex from $I$ get marked as red, then we abandon this guess and move to the next guess.
%Because this process will not lead to any feasible solution even if exists.
To dominate the vertices of $C$, we need to pick some non-red vertex of $C$ or some non-red vertices from $I$. 
In particular, there are up to $|C|+1$ choices, i.e. either we pick exactly one vertex from $C$ or we decide not to pick any vertex from $C$ into our solution.
If we pick any vertex $u \in C$, we delete $N[v]$, and move the vertices of $N^{=2}(v)$ to $S''$. 
If we decide to pick no vertex from $C$, we move all the vertices of $C$ to $S''$.
After these choices have been made, all vertices of $C$ have been deleted (or moved to $S''$). 
Now if there are any red vertices in $I$, we move to the new guess, as such vertices cannot be dominated.
Otherwise, to dominate the remaining non-red vertices in $I$, we need to pick them all. Now we check whether the final solution picked is an EDS for the entire graph (in particular they should uniquely dominate $S'')$. This proves that {\sc EDS-SVD} can be solved in $\OO^*(2^{|S|})$ time.
%\end{proof}

The algorithm for IDS-SVD also works similarly.
First, we guess an independent set $S' \subseteq S$.
We delete $N[S']$ from $G$.
Now we are left with the split graph $(C, I)$ and vertices in $T = S \setminus N[S']$.
We have to use vertices from $C \cup I$ only to dominate vertices in $C \cup T \cup I$.
We guess vertices in $C$.
There can be at most $|C| + 1$ many guesses as at most one vertex can be an element of a solution.
If $v \in C$ is decided to be picked in the solution, then $N[v]$ is deleted.
Now $I \setminus N[v]$ is essential to be part of the solution.
If $A = S' \cup \{v\} \cup (I \setminus N[v])$ forms an independent dominating set, then we store $A$ as a candidate for being a solution.
If no vertex from $C$ is decided to be picked into the solution, then we have to pick all vertices from $I$ into the solution.
If $S' \cup I$ is a solution, then we store $S' \cup I$ also as a candidate for being a solution.
We go through all these candidates and choose one that is of the smallest cardinality.
We repeat this step for all possible subsets of $S$ that form an independent set.
So, for {\sc IDS-SVD} also, there exists an algorithm running in time $\OO^*(2^{|S|})$ for {\sc IDS-SVD}.
\end{proof}

\subsection{Improved Algorithm for EDS-SVD}
\label{sec:eds-split-deletion-set}

In this section, we give an improved algorithm for {\sc EDS-SVD} parameterized by the size of a given split vertex deletion set $S$ breaking the barrier of $2^k n^{\OO(1)}$.

Let $F = G \setminus S$.  
As $F$ is a split graph, $V(F) = C \uplus I$ where $C$ induces a clique and $I$ induces an independent set.
The algorithm uses a branching technique. Consider any efficient dominating set $D$ of a graph.
Any two vertices $u, v \in D$ must have a distance of at least three in $G$.
At any intermediate stage of the algorithm, we make a choice of not picking a vertex and we mark such vertices by coloring them red.
Other vertices are colored blue.
So, essentially, if a vertex is colored blue, we can freely pick it, and if a vertex is colored red, we are forbidden to pick it.

We initialize $D = \emptyset$ which is the solution set we seek.
Initially, we also color all the vertices by {\em blue} color.
Consider any pair of blue vertices $x, y \in S$.
If the distance between $x$ and $y$ is at most two in $G$, then we use the following branching rule. And we measure the progress of the algorithm by $\mu (G)$ which is the number of blue vertices in $S$, which is initially $|S| \leq k$.
%Hence, at the beginning of the algorithm, $\mu (G) = |S| \leq k$.

\begin{branching rule}
\label{branch-rule:branch-on-pair-from-S}
Consider a pair of blue vertices $x, y \in S$ such that the distance between $x$ and $y$ is at most two in $G$.
We branch on the pair $(x,y)$ as follows.
\begin{enumerate}
    \item
    In the first branch, we add $x$ into $D$, delete $N[x]$ from $G$, color the vertices in $N^{=2}(x)$ by red.
    
    \item
    In the second branch, we add $y$ into $D$, delete $N[y]$ from $G$, color the vertices in $N^{=2}(y)$ by red.

    \item
    In the third branch, we color $x,y$ by red.
\end{enumerate}
\end{branching rule}

\begin{figure}[t]
    \centering
    \includegraphics[scale=0.3]{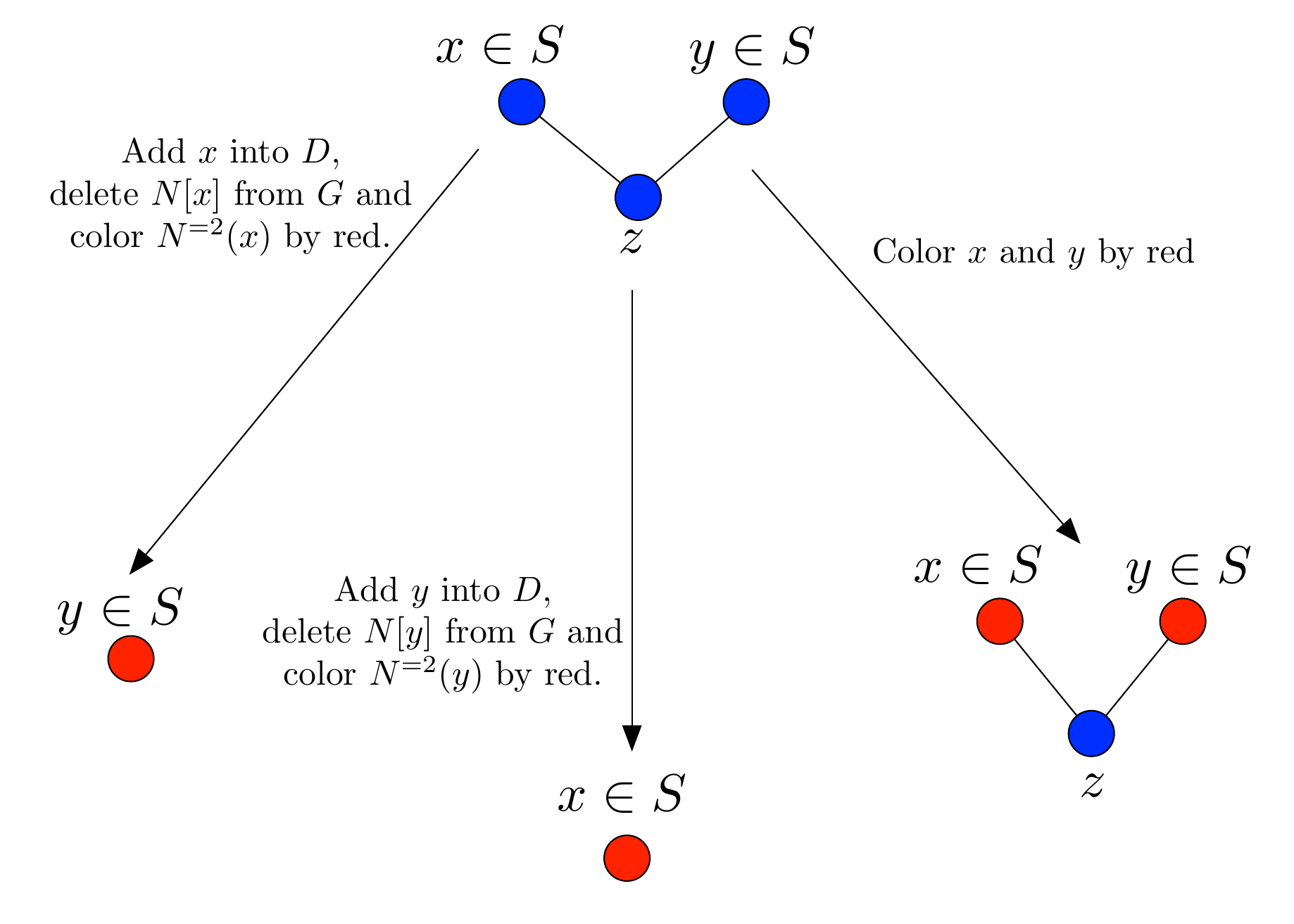}
    \caption{Illustration of Branching Rule \ref{branch-rule:branch-on-pair-from-S}. Note that the number of blue vertices drops by at least two in each of the branches.}
    \label{fig:eds-svd-branch-rule-on-pair-from-S}
\end{figure}

Clearly, the branches are exhaustive as both $x$ and $y$ cannot be in a feasible solution.
Furthermore, in the first branch, $x$ is deleted from $S$, and $y$ is colored red.
Symmetrically in the second branch, $y$ is deleted from $S$, and $x$ is colored red.
In the third branch, $x$ and $y$ are colored red. So in each of the branches, $\mu(G)$ drops by at least two resulting in a $(2,2,2)$ branch vector.
When this branching rule is not applicable, for every pair of blue vertices $x, y \in S$, $N[x] \cap N[y] = \emptyset$.
Since $C$ is a clique, we can have at most one vertex from $C$ in the solution.
When we decide to pick some vertex $v \in C$ into the solution, we delete $N[v]$ and color $N^{=2}(v)$ as red. 
So all vertices of $C$ get deleted.
 There are at most $|C|$ vertices in $C$.
When we decide not to pick any vertex from $C$ into the solution, then we color all vertices of $C$ as red.
Therefore, we have $(|C|+1)$ choices from the vertices of $C$. 
The measure $\mu(G)$ does not increase in any of these choices. 
A multiplicative factor of $(|C|+1)$ would come in the running time because of this `one-time' branching.
After having made our choices of vertex from $C$ in the solution, we are allowed to choose only some vertices from $I$.
We apply the following reduction rule to rule out some simple boundary conditions.

\begin{reduction rule}
\label{rule:unique-blue-neighbor}
If there exists a red vertex $x \in V(G)$ such that $N_G(x)$ has only one blue vertex $y$, then add $y$ into $D$, delete $N[y]$ from $G$ and color $N^{=2}(y)$ as {\it red}.
Also if there exists a blue vertex $x \in V(G)$ such that $N_G(x)$ contains no blue vertex, then add $x$ into $D$, delete $N[x]$ from $G$ and color $N^{=2}(x)$ as {\it red}.
\end{reduction rule}

It is easy to see that the above reduction rule is safe.
Note that we have some blue vertices in $I$.
Such vertices can only be dominated by themselves or a unique blue vertex in $S$, as otherwise Branching Rule~\ref{branch-rule:branch-on-pair-from-S} would have been applicable.
Now, suppose that there exists a blue vertex $x \in S$ that has at least two blue neighbors $u, v \in I$.
%If we decide to pick $u$ (or symmetrically $v$) into $D$, then we are not allowed to pick $x$ or $v$ (symmetrically $u$) in $D$ but then $u$ or $v$ cannot be dominated.
Our next reduction rule illustrates that we must pick $x$ into the solution.

\begin{reduction rule}
\label{rule:picking-blue-from-S}
If there exists a blue vertex $x \in S$ such that $N_G(x)$ contains at least two blue neighbors $u, v \in I$, then add $x$ into $D$, delete $N[x]$ from $G$ and color vertices in $N^{=2}(x)$ red.
\end{reduction rule}

If the above reduction rule is not applicable, we are ensured that a blue vertex in $S$ can have at most one blue neighbor in $I$.
In the following lemma, we give a formal proof why the above reduction rule is safe and the measure does not increase when the above reduction rule is applied.

\begin{lemma}
\label{lem:safeness-picking-blue-from-S}
Reduction Rule~\ref{rule:picking-blue-from-S} is safe.
\end{lemma}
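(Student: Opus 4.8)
The plan is to prove that $x$ must belong to \emph{every} efficient dominating set of $G$ that is consistent with the current colouring and that contains the partial solution $D$ built so far; granting this, forcing $x$ into $D$, deleting $N[x]$, and recolouring $N^{=2}(x)$ red is just the standard clean-up after a forced pick in an EDS branching, and it will only remain to check that the measure does not go up.

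First I would pin down the structural situation at the moment the rule fires. The vertices $u,v$ are distinct blue vertices of $I$, hence nonadjacent. Moreover this rule is applied only after the one-time choice of the (at most one) solution vertex from the clique $C$ has been made, so by then every vertex of $C$ has been deleted or coloured red. Since $N_G(u)\subseteq S\cup C$ (as $u\in I$ and $I$ is independent), the only blue neighbours of $u$ in the current graph lie in $S$; and since Branching Rule~\ref{branch-rule:branch-on-pair-from-S} is no longer applicable, no two distinct blue vertices of $S$ are at distance at most two in $G$, so $u$ has at most one blue neighbour in $S$ --- a second one $x'$ would give $dist_G(x,x')\le 2$ along the path $x-u-x'$. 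As $x$ is by hypothesis a blue neighbour of $u$ lying in $S$, it is the unique one; the identical argument applies to $v$.

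Next I would derive the forced pick. Let $D^\star$ be an efficient dominating set of $G$ with $D\subseteq D^\star$ that avoids every red vertex. Then $u$ must be dominated by $D^\star$, and by the previous paragraph its only admissible dominators are $u$ itself and $x$; likewise for $v$. If $x\notin D^\star$, then $u\in D^\star$ and $v\in D^\star$, whence $\{u,v\}\subseteq N(x)\cap D^\star$ and $|N[x]\cap D^\star|\ge 2$, contradicting efficiency. Hence $x\in D^\star$. Because $D^\star$ is efficient and $x\in D^\star$, no neighbour of $x$ lies in $D^\star$, so the vertices of $N[x]$ are each dominated exactly once (by $x$) and can be removed without loss of information, while any vertex at distance exactly two from $x$ cannot be in $D^\star$ (the intermediate common neighbour would be dominated twice), which legitimises recolouring $N^{=2}(x)$ red. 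The converse direction is routine: appending $x$ to any solution of the reduced instance yields an efficient dominating set of $G$ of the same size respecting the original colouring, since the only fresh domination demands come from $N[x]$ and are met exactly once by $x$. Finally, for the measure: the rule deletes the blue vertex $x$ of $S$ (and possibly deletes or reddens further blue vertices of $S$ lying in $N[x]\cup N^{=2}(x)$) and never turns a red vertex of $S$ blue, so $\mu(G)$ strictly decreases --- in particular it does not increase. The one genuinely delicate step is the first: certifying that, apart from $u$ itself, the only admissible dominator of $u$ is $x$. This rests entirely on bookkeeping that the clique-branching on $C$ has already happened (so no blue vertex of $C$ remains) and that Branching Rule~\ref{branch-rule:branch-on-pair-from-S} is exhausted; once that is in place, the remainder is the familiar efficiency computation.
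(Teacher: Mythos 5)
Your proof is correct and follows essentially the same route as the paper's: both hinge on the observation that, with $C$ entirely red and Branching Rule~\ref{branch-rule:branch-on-pair-from-S} exhausted, each of $u,v$ can only be dominated by itself or by $x$, which forces $x$ into any feasible solution. The only cosmetic difference is that you reach the contradiction via $x$ being dominated twice by $u$ and $v$, while the paper argues that one of $u,v$ would be left undominated; your additional remarks on the clean-up and on $\mu(G)$ correspond to material the paper places in the surrounding text and in Lemma~\ref{lem:no-measure-increase}.
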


\begin{proof}
The safety of this reduction rule is based on the fact that any feasible solution (if exists) must contain $x \in S$ under this construction.
Suppose that this is not the case.
We assume for the sake of contradiction that $D$ is an efficient dominating set that does not contain $x$.
Then either $u \in I$ or $v \in I$ but not both.
Note that any blue vertex in $I$ has only one blue neighbor in $S$, as all the vertices of $C$ are red.
So if $u \in D$ then $x, v \notin D$.
Then $v$ cannot be dominated at all.
Similarly if $v \in D$, then $u$ cannot be dominated at all.
So $x \in D$ and this concludes the proof.
\end{proof}

\begin{lemma}
\label{lem:no-measure-increase}
Reduction Rules~\ref{rule:unique-blue-neighbor} and~\ref{rule:picking-blue-from-S} can be implemented in polynomial time, and they do not increase $\mu(G)$.
\end{lemma}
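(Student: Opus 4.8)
The plan is to handle the two claims — polynomial-time implementability and non-increase of the measure — separately, and within each, to treat the two reduction rules in turn. Recall that $\mu(G)$ is defined as the number of blue vertices in $S$, so the only thing that can make $\mu(G)$ grow is a vertex of $S$ being recolored from red to blue; since our rules never recolor anything blue (they only colour vertices red, delete vertices, or add vertices to $D$), no blue vertex is ever created, and in particular no new blue vertex of $S$ is created. I would state this observation up front as the conceptual heart of the measure argument, and then just check it against each operation the rules perform.

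First I would verify the running time. Both rules are triggered by a local condition: Rule 3.4 asks for a red vertex $x$ whose neighbourhood $N_G(x)$ contains exactly one blue vertex (or a blue vertex $x$ with no blue neighbour at all), and Rule 3.5 asks for a blue vertex $x\in S$ with at least two blue neighbours in $I$. Scanning all vertices and, for each, inspecting its neighbourhood and counting blue members takes $\OO(n+m)$ time, so detecting applicability is polynomial. The action taken in each case — adding one vertex $y$ (or $x$) to $D$, deleting $N[y]$, and colouring $N^{=2}(y)$ red — requires computing a closed neighbourhood and a distance-two neighbourhood, both of which are polynomial (a BFS of depth two suffices); updating the colours and the graph is then linear. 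Hence each application is polynomial, and since every application strictly decreases $n$ (at least the vertex $y$ and its neighbours are removed) or strictly decreases the number of blue vertices, the rules are applied at most $\OO(n)$ times overall, keeping the total cost polynomial.

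Next I would argue the measure does not increase. For Rule 3.5, the vertex $x\in S$ that is picked is itself blue and lies in $S$; adding it to $D$ and deleting $N[x]$ removes $x$ from $S$, which decreases $\mu(G)$ by one, and colouring the vertices of $N^{=2}(x)$ red can only decrease $\mu(G)$ further (if any of them lie in $S$ and were blue). No vertex is recoloured blue, so $\mu(G)$ strictly decreases — in particular it does not increase. For Rule 3.4, the vertex $y$ that is added to $D$ may or may not lie in $S$; in either case deleting $N[y]$ only removes vertices, and colouring $N^{=2}(y)$ red only turns blue vertices red, so again $\mu(G)$ cannot increase (it may stay the same, which is fine — the rule's purpose is to clean up boundary conditions, and its non-increase of $\mu$ is what lets us interleave it freely with the branching rule without spoiling the branch vector).

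I do not anticipate a serious obstacle here; the lemma is essentially a bookkeeping check. The one subtlety worth spelling out carefully is that the operations "delete $N[y]$" and "colour $N^{=2}(y)$ red" are exactly the operations already used in Branching Rule 3.3, so their cost and their effect on $\mu$ have effectively been analysed before; the only genuinely new content is confirming that the \emph{trigger vertex} in Rule 3.5 is guaranteed to sit in $S$ (so that the measure actually drops rather than merely failing to rise), which is immediate from the rule's hypothesis "$x\in S$". I would present the proof in two short paragraphs mirroring the two sentences of the lemma statement, and keep it terse.
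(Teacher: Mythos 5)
Your proof is correct and follows essentially the same route as the paper: the measure argument rests on the observation that neither rule ever recolours a vertex blue, so blue vertices of $S$ can only be deleted or turned red, and the polynomial-time claim is a routine check of the trigger conditions and the neighbourhood operations. Your version is somewhat more detailed than the paper's (which simply asserts both points), but there is no substantive difference.
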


\begin{proof}
It is not hard to see that both the reduction rules can be implemented in polynomial-time.
In order to give proof that $\mu(G)$ does not increase, our arguments are the following.
Reduction Rule~\ref{rule:unique-blue-neighbor} does not add any blue vertex into $S$, rather can delete a blue vertex from $S$.
Similarly, Reduction Rule~\ref{rule:picking-blue-from-S} only deletes a blue vertex from $S$.
So, $\mu$ does not increase in either of these two reduction rules.
\end{proof}

If there are red vertices in $I$ having no blue neighbor in $S$, then we move to the next branch as such a vertex cannot be dominated.
Thus any blue vertex in $I$ has only one blue neighbor in $S$ and any blue vertex in $S$ has only one blue neighbor in $I$.
As Reduction Rule~\ref{rule:unique-blue-neighbor} is not applicable, any red vertex $x \in S \cup C$ has at least two blue neighbors in $u, v \in N_G(x)$.
Clearly both $\{u,v\} \not\subset S$ as otherwise Branching Rule~\ref{branch-rule:branch-on-pair-from-S} would have been applicable.
So, now we are left with the case that $u,v \in I$ or $u \in I, v \in S$ but $(u,v)$ may or may not be an edge.
Now we apply the following branching rule.

\begin{branching rule}
\label{branch-rule:pair-of-non-adjacent}
Let $x$ be a red vertex in $S$ with two blue neighbors $u, v$.
\begin{enumerate}
	\item If $u, v \in I$, then we branch as follows.
	\begin{enumerate}
	    \item
	    In one branch we add $u$ into $D$, then delete $N[u]$ from $G$, we color $N^{=2}(u)$ as red. 
	   As $v \in N^{=2}(u)$ and $v$ has only one blue neighbor $z \in S$, we add $z$ also into $D$, delete $N[z]$ from $G$ and color $N^{=2}(z)$ by red.

        \item
        In the second branch, we add $v$ into $D$, delete $N[v]$ from $G$, color $N^{=2}(v)$ as red. 
        As $u \in N^{=2}(v)$ and $u$ has only one blue neighbor $y \in S$, we add $y$ also into $D$, delete $N^{=2}(y)$ from $G$ and color $N^{=2}(z)$ by red.

        \item
        In the third branch, color both $u$ and $v$ by red. Add the only blue neighbor $y$ of $u$ and $z$ of $v$ into $D$. Delete $N[y], N[z]$ from $G$ and color the vertices in $N^{=2}(y) \cup N^{=2}(z)$ by red.
	\end{enumerate}
	We refer to Figure \ref{fig:eds-svd-branch-rule-of-non-adjacent-part-one} for an illustration.

	\item $u \in I, v \in S, (u,v) \notin E(G)$, then we branch as follows.
	
	\begin{enumerate}
	    \item
	    In the first branch, we add $u$ to $D$ and color $v$ as red. This forces us to pick the only blue neighbor $z$ of $v$ where $z \in I$. So, we add $z$ to $D$. Delete  $N[u], N[z]$ from $G$ and color $N^{=2}(u), N^{=2}(z)$ as red.
    
       \item
       In the second branch, we color $u$ as red. This forces us to pick the only neighbor $y$ of $u$ where $y \in S$. And we pick $v$ into $D$ as well as $y$ into $D$. We delete $N[v], N[y]$ from $G$ and color $N^{=2}(v), N^{=2}(y)$ by red.

        \item
        In the third branch, we color both $u$ and $v$ by red. This forces us to pick the only blue neighbor $z \in N_G(v) \cap I, y \in N_G(u) \cap S$ into $D$. So, we pick $z$ into $D$, delete $N[z], N[y]$ from $G$ and color $N^{=2}(y), N^{=2}(z)$ by red.
	\end{enumerate}
	We refer to Figure \ref{fig:eds-svd-branch-rule-of-non-adjacent-part-two} for an illustration.
\end{enumerate}
\end{branching rule}

\begin{figure}
    \centering
    \includegraphics[scale=0.23]{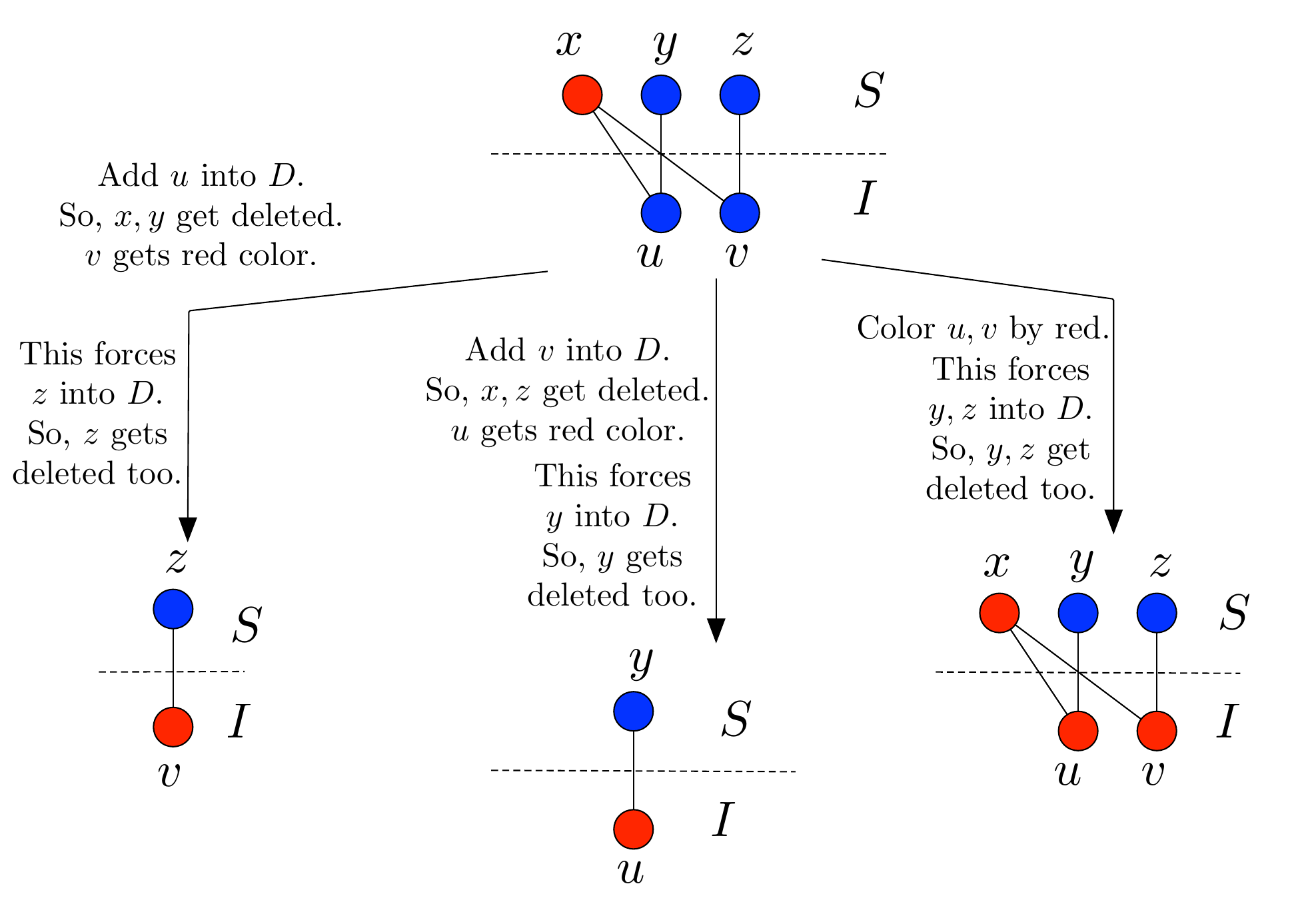}
    \caption{Illustration of Branching Rule~\ref{branch-rule:pair-of-non-adjacent} for the first case. Note that the number of blue vertices in $S$ drops by at least two in each of the branches.}
    \label{fig:eds-svd-branch-rule-of-non-adjacent-part-one}
\end{figure}

\begin{figure}
    \centering
    \includegraphics[scale=0.23]{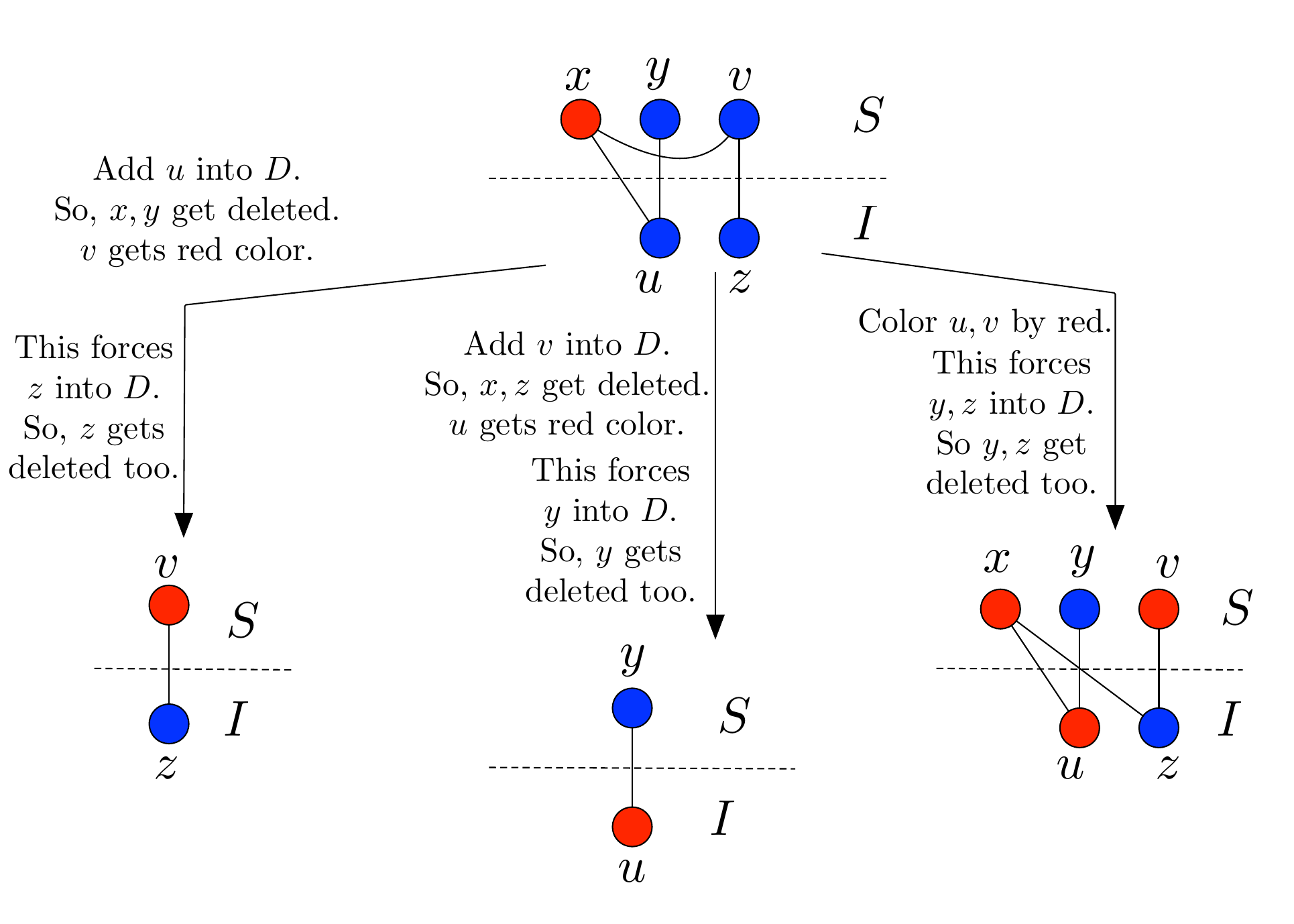}
    \caption{Illustration of Branching Rule~\ref{branch-rule:pair-of-non-adjacent} for the second case. Note that the number of blue vertices in $S$ drops by two in each of the branches.}
    \label{fig:eds-svd-branch-rule-of-non-adjacent-part-two}
\end{figure}

Observe that $\mu(G)$ drops by at least two in all three branches as eventually two blue vertices of $S$ get deleted in all the branches.
Hence, the branching vector is $(2, 2, 2)$.
If none of the above rules are applicable, then we have $u \in S, v \in I$ and $(u,v) \in E(G)$.
We know that either $u \in D$ or $v \in D$.
Consider the red vertices in $N(u)$ and red vertices in $N(v)$.
As Branching Rule~\ref{branch-rule:branch-on-pair-from-S}, Reduction Rule~\ref{rule:unique-blue-neighbor} and Branching Rule~\ref{branch-rule:pair-of-non-adjacent} are not applicable, by the following lemma, using which we can pick $u$ or $v$ arbitrarily.

\begin{lemma}
\label{lem:equal-neighbor-hood-property}
If Branching Rule~\ref{branch-rule:branch-on-pair-from-S}, Reduction Rule~\ref{rule:unique-blue-neighbor} and Branching Rule~\ref{branch-rule:pair-of-non-adjacent} are not applicable and $x$, then for any two arbitrary blue vertices $u, v$, it holds that $N(u) \setminus \{v\} = N(v) \setminus \{u\}$.
We refer to Figure \ref{fig:last-lemma-illustration} for an illustration.
\end{lemma}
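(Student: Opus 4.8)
The plan is to read off, from the non-applicability of the earlier rules, a rigid matching structure on the blue vertices, and then show that the pair $u\in S$, $v\in I$ with $uv\in E(G)$ currently under consideration are \emph{true twins}; the claimed equality $N(u)\setminus\{v\}=N(v)\setminus\{u\}$ is precisely that statement, and it justifies picking either of $u,v$ into $D$. So first I would record the structure of the blue vertices. Since Branching Rule~\ref{branch-rule:branch-on-pair-from-S} is inapplicable, no two blue vertices of $S$ lie within distance two; in particular a blue vertex of $S$ has no blue neighbour in $S$, and as every surviving vertex of $C$ is red its blue neighbours all lie in $I$; by Reduction Rules~\ref{rule:unique-blue-neighbor} and~\ref{rule:picking-blue-from-S} being inapplicable it has at least one and at most one, hence exactly one, in $I$. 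Symmetrically a blue vertex of $I$ has exactly one blue neighbour, which lies in $S$ (two would give a common neighbour triggering Branching Rule~\ref{branch-rule:branch-on-pair-from-S}). Hence the blue vertices of $S$ and of $I$ are paired by a perfect matching $M$, and in particular $uv\in M$.

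Next I would analyse an arbitrary red vertex $w$ of the current (still-alive) instance. Reduction Rule~\ref{rule:unique-blue-neighbor} being inapplicable forces $w$ to have at least two blue neighbours; they cannot both lie in $S$ by Branching Rule~\ref{branch-rule:branch-on-pair-from-S}, and Branching Rule~\ref{branch-rule:pair-of-non-adjacent} excludes both ``two blue neighbours in $I$'' and ``one in $I$, one in $S$, non-adjacent''. A one-line counting step (three vertices cannot be two-coloured so that every pair is $S$--$I$ bichromatic) then gives that $w$ has \emph{exactly} two blue neighbours $a_w\in S$, $b_w\in I$ with $a_wb_w\in E(G)$; and since $b_w$ is the unique blue neighbour in $I$ of the blue vertex $a_w$, the edge $a_wb_w$ belongs to $M$.

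Finally I would prove $N(u)\setminus\{v\}=N(v)\setminus\{u\}$ by double inclusion. If $w\in N(u)\setminus\{v\}$, then $w$ is red (the only blue neighbour of $u$ is $v$), $u$ is a blue neighbour of $w$ lying in $S$, so $u=a_w$; then $b_w$ is $u$'s unique blue $I$-neighbour, namely $v$, so $w\sim v$. The reverse inclusion is symmetric: a red $w\in N(v)\setminus\{u\}$ has $v=b_w$, hence $a_w=u$ and $w\sim u$. Since $u\sim v$ this makes $u,v$ true twins; to dominate $u$ one needs $u$ or a neighbour of $u$ in $D$, and every neighbour of $u$ other than $v$ is red, so exactly one of $u,v$ lies in any efficient dominating set, and the two choices produce isomorphic residual instances --- so the algorithm may fix either (cf.\ Figure~\ref{fig:last-lemma-illustration}).

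The step I expect to be the main obstacle is the red-vertex analysis: combining the three inapplicable rules to force ``exactly two blue neighbours forming a matching edge'' needs care about red vertices lying in $C$ versus in $S$, about vertices already deleted in the current branch, and about the degenerate case where no blue vertex of $S$ remains (then $\mu(G)=0$ and the lemma is vacuous, but the algorithm must still terminate correctly). Once that structure is established, the twin argument is routine.
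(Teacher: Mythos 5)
Your proof is correct and follows essentially the same route as the paper's: show any $w\in N(u)\setminus\{v\}$ is red, use the inapplicability of Reduction Rule~\ref{rule:unique-blue-neighbor} to force a second blue neighbour of $w$, and use Branching Rules~\ref{branch-rule:branch-on-pair-from-S} and~\ref{branch-rule:pair-of-non-adjacent} to force that neighbour to be adjacent to $u$ and hence equal to $v$, then argue symmetrically. Your added scaffolding (the perfect matching on blue vertices and the ``exactly two blue neighbours'' characterisation of red vertices) is a cleaner packaging of the same facts the paper uses implicitly, and your reading of the (loosely stated) lemma as applying to the matched adjacent pair $u\in S$, $v\in I$ is the intended one.
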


\begin{proof}
Suppose that $x \in N(u) \setminus \{v\}$.
Clearly, $x$ is a red vertex by the premise.
As Branching Rule~\ref{branch-rule:branch-on-pair-from-S} is not applicable, $x$ cannot have any other neighbor which is a blue vertex of $S$.
As Reduction Rule~\ref{rule:unique-blue-neighbor} is not applicable, $x$ has another blue neighbor and let $y$ be that blue neighbor.
Clearly, $y \in I$.
If $(u,y) \notin E(G)$, then Branching Rule~\ref{branch-rule:pair-of-non-adjacent} is applicable.
So $(u,y) \in E(G)$ implying that $y = v$ (as $v$ is the only blue neighbor of $u$).
So, $x \in N(v) \setminus \{u\}$ implying that $N(u) \setminus \{v\} \subseteq N(v) \setminus \{u\}$.
Similarly we can prove that $N(v) \setminus \{u\} \subseteq N(u) \setminus \{v\}$.
This implies that $N(u) \setminus \{v\} = N(v) \setminus \{u\}$.
\end{proof}

\begin{figure}[t]
\centering	
	\includegraphics[scale=0.24]{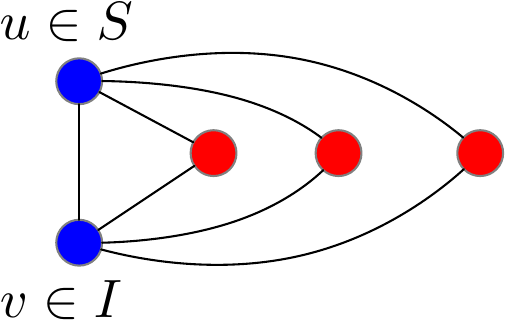}
	\caption{An illustration of Lemma \ref{lem:equal-neighbor-hood-property}. The red vertices are $N(v) \setminus \{u\} \subseteq S$ and $N(u) \setminus \{v\} \subseteq S$.}
\label{fig:last-lemma-illustration}
\end{figure}

This completes the description of our algorithm.
The measure is $k$ initially and the branching continues until $k$ drops to $0$.
So, we have the following recurrence.
$$T(k) \leq 3\cdot T(k-2) + \alpha(n+k)^c$$
Solving this recurrence, we get $1.732^k\cdot n^{\OO(1)}$ implying the following theorem.

\begin{theorem}
\label{thm:eds-svd-upper-bound}
{\sc EDS-SVD} can be solved in $3^{k/2} n^{\OO(1)}$ time.
\end{theorem}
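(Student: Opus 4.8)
The plan is to assemble the branching and reduction rules developed above into a single recursive algorithm and then bound the size of its search tree. Throughout the recursion I would maintain the colour invariant already used above: a blue vertex is one the algorithm is still free to place into the efficient dominating set $D$, a red vertex is one forbidden from $D$ but still required to be dominated exactly once, and a deleted vertex is one already dominated exactly once by a committed vertex; committing $v$ to $D$ means deleting $N[v]$ and recolouring $N^{=2}(v)$ red, which is precisely what enforces ``dominated exactly once''. The measure of progress is $\mu(G)$, the number of blue vertices of $S$, which is at most $k$ at the start and at which the branching stops once it reaches $0$.

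At each recursive call the algorithm applies, in priority order: Branching Rule~\ref{branch-rule:branch-on-pair-from-S}; then the one-time $(|C|+1)$-way choice of which vertex of $C$ (if any) goes into $D$; then Reduction Rules~\ref{rule:unique-blue-neighbor} and~\ref{rule:picking-blue-from-S}; then Branching Rule~\ref{branch-rule:pair-of-non-adjacent}; and finally, when none of these is applicable, the ``twin'' clean-up justified by Lemma~\ref{lem:equal-neighbor-hood-property}. For correctness I would argue exhaustiveness and safety branch by branch: the two branching rules are exhaustive because any two vertices of an efficient dominating set lie at pairwise distance at least three in $G$, so among two blue vertices at distance at most two at most one can belong to $D$; the $C$-choice is exhaustive because $C$ is a clique and hence contains at most one vertex of $D$; and Reduction Rules~\ref{rule:unique-blue-neighbor} and~\ref{rule:picking-blue-from-S} are safe by Lemma~\ref{lem:safeness-picking-blue-from-S} and the obvious unique-dominator argument for the degenerate cases. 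When the recursion reaches a leaf where nothing is applicable, Lemma~\ref{lem:equal-neighbor-hood-property} shows the surviving blue vertices split into pairs $(u,v)$ with $u\in S$, $v\in I$, $(u,v)\in E(G)$ and $N(u)\setminus\{v\}=N(v)\setminus\{u\}$; exactly one of $u,v$ must be in $D$, and since their closed neighbourhoods and their distance-two sets coincide, either choice triggers the same deletions and recolourings, so we may pick one arbitrarily, possibly re-invoking the (non-branching) reduction rules, and iterate until no blue vertex remains. We then verify directly that the accumulated set $D$ dominates every originally present vertex exactly once; the instance is a YES-instance iff some leaf of the search tree passes this check, and we output a smallest such $D$.

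For the running time, each branch of Branching Rule~\ref{branch-rule:branch-on-pair-from-S} and of both cases of Branching Rule~\ref{branch-rule:pair-of-non-adjacent} removes at least two blue vertices of $S$ (in the $u\in I$, $v\in S$ case of the latter this is checked separately in each of the three branches, but two blue $S$-vertices always disappear: $v$ itself and the unique blue $S$-neighbour of $u$), whereas Lemmas~\ref{lem:safeness-picking-blue-from-S} and~\ref{lem:no-measure-increase} guarantee that the reduction rules never increase $\mu(G)$, and neither does the $C$-choice. Hence the branching that decreases the measure obeys the recurrence $T(k)\le 3\,T(k-2)+\alpha(n+k)^{c}$, which solves to $1.732^{k}n^{\OO(1)}=3^{k/2}n^{\OO(1)}$; the $(|C|+1)\le n+1$-way choice from $C$ is executed at most once along any root-to-leaf path, since afterwards $C$ is empty, so it contributes only a further polynomial factor, and the reduction rules and leaf clean-up are polynomial by Lemma~\ref{lem:no-measure-increase}. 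Multiplying these together gives the claimed bound $3^{k/2}n^{\OO(1)}$.

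The step I expect to be the main obstacle is the leaf analysis rather than the recurrence: one must be certain that when all four rules are inapplicable the residual instance is genuinely trivial, i.e. that Lemma~\ref{lem:equal-neighbor-hood-property} accounts for \emph{every} remaining blue vertex, that the ``pick $u$ or $v$ arbitrarily'' step is truly free (it changes $D$ but not the set of subsequently reachable states, because $N[u]=N[v]$ after dropping the $u$--$v$ edge and $N^{=2}(u)=N^{=2}(v)$), and that the running red/deleted bookkeeping is faithful enough that the final exact-domination check on $G$ is both necessary and sufficient. A secondary point requiring care is verifying that the rule priority order really makes the $(|C|+1)$-way branching a one-time event, so that it enters the bound as a multiplicative $n^{\OO(1)}$ factor and not as an exponential one.
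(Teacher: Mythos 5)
Your proposal follows the paper's own argument essentially verbatim: the same priority ordering of Branching Rules~\ref{branch-rule:branch-on-pair-from-S} and~\ref{branch-rule:pair-of-non-adjacent}, the one-time $(|C|+1)$-way choice contributing only a polynomial multiplicative factor, the measure $\mu(G)$ dropping by two in every branch to give the recurrence $T(k)\le 3\,T(k-2)+\alpha(n+k)^c$, and the leaf clean-up via Lemma~\ref{lem:equal-neighbor-hood-property}. The delicate points you flag (completeness of the leaf analysis and the one-time nature of the $C$-choice) are exactly the ones the paper relies on, so the proposal is correct and matches the paper's proof.
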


\subsection{Lower Bounds for IDS and EDS}
\label{sec:seth-lower-bound-ids-svd}

It is well-known to us that the size of the SVD set is always smaller than the size of the vertex cover.
So, we have the following corollary as a consequence of Theorem \ref{theorem:ids-vc-lb}.

\begin{corollary}
{\sc IDS-SVD} cannot be solved in $(2-\epsilon)^k n^{\OO(1)}$ time unless {\sc SETH} fails and it does not admit polynomial kernels unless \nka.
\end{corollary}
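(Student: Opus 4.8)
The plan is to obtain both assertions as immediate consequences of Theorem~\ref{theorem:ids-vc-lb} via the elementary observation that every vertex cover of a graph is in particular a split vertex deletion set. First I would record the following: if $S \subseteq V(G)$ is a vertex cover of $G$, then $G \setminus S$ is an edgeless graph, and an edgeless graph is trivially a split graph (take the clique part to be empty and the independent set part to be the whole vertex set). Consequently, the identity map that sends an instance $(G, S, \ell)$ of {\sc IDS-VC} to the same triple $(G, S, \ell)$, now read as an instance of {\sc IDS-SVD}, is correctness-preserving and leaves the parameter $k = |S|$ unchanged. In the terminology of the preliminaries, this is at once a parameterized reduction and a polynomial parameter transformation from {\sc IDS-VC} to {\sc IDS-SVD}.

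For the running-time bound I would argue by contradiction: assume some algorithm solves {\sc IDS-SVD} in $(2-\epsilon)^k n^{\OO(1)}$ time for an $\epsilon > 0$. Composing the identity reduction above with this algorithm yields a $(2-\epsilon)^k n^{\OO(1)}$-time algorithm for {\sc IDS-VC}, which contradicts Theorem~\ref{theorem:ids-vc-lb} under the Strong Exponential Time Hypothesis. Hence no such algorithm for {\sc IDS-SVD} exists unless {\sc SETH} fails.

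For the kernel lower bound I would invoke Proposition~\ref{pptProperty} with $P = {}$much{\sc IDS-VC} and $Q = {}${\sc IDS-SVD}: the classical version of {\sc IDS-VC} is {\NP}-hard, {\sc IDS-SVD} is clearly in {\NP}, and the identity map is a PPT from the former to the latter, so a polynomial kernel for {\sc IDS-SVD} would give one for {\sc IDS-VC}; the latter is ruled out by Theorem~\ref{theorem:ids-vc-lb} unless {\nka}. The argument has essentially no obstacle; the only items requiring (routine) care are checking that an edgeless graph is a split graph, that the transformation preserves the parameter exactly, and that {\sc IDS-SVD} belongs to {\NP} — none of which is difficult. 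If one prefers, the whole corollary can simply be phrased as ``since the minimum SVD size is at most the vertex cover size, Theorem~\ref{theorem:ids-vc-lb} applies verbatim.''
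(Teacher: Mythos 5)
Your proposal is correct and matches the paper's argument exactly: the paper also derives this corollary from Theorem~\ref{theorem:ids-vc-lb} by observing that any vertex cover is a split vertex deletion set (an edgeless graph being a split graph), so the lower bounds for {\sc IDS-VC} transfer verbatim. The only blemish is the stray word ``much'' before {\sc IDS-VC} in your kernelization paragraph, which is purely typographical.
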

%\todo[inline]{Two different proofs for the same bound}

%For {\EDS} as the size of the SVD set is always smaller than the size of the vertex cover,
Using similar arguments for EDS, we have the following corollary of Theorem~\ref{thm:lower-bound-eds-vc-new}.
\begin{corollary}
{\sc EDS-SVD} cannot be solved in $2^{o(|S|)} n^{\OO(1)}$ time unless {\it ETH} fails.
\end{corollary}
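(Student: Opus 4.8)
The plan is to observe that the ETH lower bound for {\sc EDS-VC} transfers verbatim to {\sc EDS-SVD}, because every vertex cover is in particular a split vertex deletion set. First I would recall that if $S \subseteq V(G)$ is a vertex cover of $G$, then $G \setminus S$ has no edges, so it is an independent set; and an edgeless graph is trivially a split graph (take the clique part empty and the independent set part to be all of $V(G) \setminus S$). Hence any instance $(G, S, \ell)$ of {\sc EDS-VC} is, with no modification at all, also a valid instance of {\sc (EDS)-SVD} with the identical parameter value $|S|$ and the identical question.

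Next I would argue the contrapositive: suppose {\sc EDS-SVD} admitted an algorithm running in $2^{o(|S|)} n^{\OO(1)}$ time. Feeding it the instance above would solve {\sc EDS-VC} in $2^{o(|S|)} n^{\OO(1)}$ time, contradicting Theorem~\ref{thm:lower-bound-eds-vc-new}, which itself rests on ETH. Therefore no such algorithm for {\sc EDS-SVD} exists unless ETH fails.

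There is essentially no technical obstacle here; the only point needing a one-line justification is that an edgeless graph qualifies as a split graph, which is immediate from the definition. As an alternative (slightly longer) route one could go directly through the $3$-CNF-SAT reduction inside the proof of Theorem~\ref{thm:lower-bound-eds-vc-new}, noting that in the constructed graph $G_\phi$ the set $\{v_i, \bar{v_i} : i \in [n]\}$ together with the six $d$-vertices of every clause gadget is not merely a vertex cover but in fact a split vertex deletion set of the same size $2n + 6m$, so that the reduction from {\sc $3$-CNF-SAT} already produces {\sc EDS-SVD} instances; but invoking the inclusion ``vertex cover $\Rightarrow$ SVD set'' and quoting Theorem~\ref{thm:lower-bound-eds-vc-new} as a black box is the cleanest and shortest argument, and that is what I would write.
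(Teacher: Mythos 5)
Your proposal is correct and matches the paper's own argument exactly: the paper derives this corollary from Theorem~\ref{thm:lower-bound-eds-vc-new} by noting that any vertex cover is also a split vertex deletion set (the remaining edgeless graph being split), so the {\sc EDS-VC} lower bound transfers directly. Your additional remark that the $3$-CNF-SAT construction already yields SVD instances is a fine alternative but unnecessary, just as the paper treats the inclusion as the whole proof.
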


\section{Concluding Remarks}
Our paper provides a systematic study of the structural parameterizations of some of the {\sc Dominating Set} variants.
We have complemented with lower bounds based on ETH and SETH. 
One immediate open problem is to narrow the gap between upper and lower bounds, especially for the dominating set variants parameterized by the size of cluster vertex deletion set.
It has been observed that {IDS} is the complementary version of {\sc Maximum Minimal Vertex Cover} problem.
So a natural approach for an $2^k n^{\OO(1)}$ algorithm for {\sc IDS-CVD} is to apply the ideas used in~\cite{Zehavi17} to get $2^k n^{\OO(1)}$ algorithm for 
{\sc MMVC-VC}.
But this seems to require more work as there may not exist a minimal vertex cover that intersects the CVD set $S$ in a particular subset.

After the publication of the preliminary version of our paper, Misra and Rathi \cite{MisraR19} have studied {\sc Dominating Set} and {\EDS} and their respective generalizations to $r$-distance domination parameterized by deletion distance to edgless graphs, cluster graphs, split graphs, and complements of bipartite graphs. 
The authors improved the running-time of our algorithm for EDS parameterized by deletion distance to split graphs $k$ from $1.732^k n^{\OO(1)}$ to $1.619^k n^{\OO(1)}$. 
Bergougnoux et al.~\cite{BK2017} have given an $2^{O(k)} n^{\OO(1)}$ algorithm for connected dominating set (the dominating set induces a connected graph) for clique-width $k$ graphs when the $k$-expression is given as input. An interesting open problem is whether connected dominating set has a simpler FPT algorithm as in the FPT algorithms in this paper, when parameterized by the CVD set.

\paragraph*{\bf Dedication:} Rolf Niedermeier's work \cite{guo2004structural,niedermeier2010reflections} introduced the notion of studying problems parameterized by deletion distance to the class of graphs where the problem is in {\fP}, which our work is also part of. The last author fondly remembers his long association with Rolf Niedermeier as one of the early converts to parameterized complexity. His first meeting was in the first Dagstuhl on parameterized complexity and he was glad to know that Rolf was inspired by his early survey in the area. Since then both have had multiple meetings including a joint Indo-German project when Rolf was in Jena. He fondly remembers his interesting visits to Jena and Rolf's visit to Chennai. He is also particularly inspired by Rolf's constant quest for application areas where parameterized complexity can be applied.

\paragraph*{Acknowledgement:} Research of Diptapriyo Majumdar has been supported by Science and Engineering Research Board grant number SRG/2023/001592.

\bibliographystyle{plain}
%\bibliography{main-file}

%\newpage
%\appendix
%\input{appendix.tex}

%\bibliography{main-file}

\begin{thebibliography}{10}

\bibitem{alber2004polynomial}
Jochen Alber, Michael~R Fellows, and Rolf Niedermeier.
\newblock Polynomial-time data reduction for dominating set.
\newblock {\em Journal of the ACM (JACM)}, 51(3):363--384, 2004.

\bibitem{AshokRTarxiv22}
Pradeesha Ashok, Rajath Rao, and Avi Tomar.
\newblock Polynomial kernels for generalized domination problems.
\newblock {\em CoRR}, abs/2211.03365, 2022.

\bibitem{BK2017}
Benzamin Bergougnoux and Mamadou~M. Kant{\'e}.
\newblock Fast exact algorithms for some connectivity problems parametrized by
  clique-width.
\newblock {\em arXiv preprint arXiv:1707.03584}, 2017.

\bibitem{bodlaender2011kernel}
Hans~L. Bodlaender, Stephan Thomass{\'e}, and Anders Yeo.
\newblock Kernel bounds for disjoint cycles and disjoint paths.
\newblock {\em Theoretical Computer Science}, 412(35):4570--4578, 2011.

\bibitem{BLJRJV2010}
Hans~L. Bodlaender, Erik~Jan van Leeuwen, Johan M.~M. van Rooji, and Martin
  Vatshelle.
\newblock {Faster Algorithms on Branch and Clique Decompositions}.
\newblock In {\em MFCS}, pages 174--185. Springer, 2010.

\bibitem{BCKP16}
Anudhyan Boral, Marek Cygan, Tomasz Kociumaka, and Marcin Pilipczuk.
\newblock A fast branching algorithm for cluster vertex deletion.
\newblock {\em Theory Comput. Syst.}, 58(2):357--376, 2016.

\bibitem{BCP15}
Nicolas Boria, Fedarico~Della Croce, and Vangelis~Th. Paschos.
\newblock On the max min vertex cover problem.
\newblock {\em Discrete Applied Mathematics}, 196:62--71, 2015.

\bibitem{Cai03a}
Leizhen Cai.
\newblock {Parameterized Complexity of Vertex Colouring}.
\newblock {\em Discrete Applied Mathematics}, 127(3):415--429, 2003.

\bibitem{CappelleGS21Lagos}
M{\'{a}}rcia~R. Cappelle, Guilherme C.~M. Gomes, and Vin{\'{\i}}cius~Fernandes
  dos Santos.
\newblock Parameterized algorithms for locating-dominating sets.
\newblock In Carlos~E. Ferreira, Orlando Lee, and Fl{\'{a}}vio~Keidi Miyazawa,
  editors, {\em Proceedings of the {XI} Latin and American Algorithms, Graphs
  and Optimization Symposium, {LAGOS} 2021, Online Event / S{\~{a}}o Paulo,
  Brazil, May 2021}, volume 195 of {\em Procedia Computer Science}, pages
  68--76. Elsevier, 2021.

\bibitem{ChakrabortyFMT24}
Dipayan Chakraborty, Florent Foucaud, Diptapriyo Majumdar, and Prafullkumar
  Tale.
\newblock Tight (double) exponential bounds for identification problems:
  Locating-dominating set and test cover.
\newblock {\em CoRR}, abs/2402.08346, 2024.

\bibitem{ChlebikC06}
Miroslav Chleb{\'{\i}}k and Janka Chleb{\'{\i}}kov{\'{a}}.
\newblock Approximation hardness of edge dominating set problems.
\newblock {\em J. Comb. Optim.}, 11(3):279--290, 2006.

\bibitem{ChlebikC08}
Miroslav Chleb{\'{\i}}k and Janka Chleb{\'{\i}}kov{\'{a}}.
\newblock Approximation hardness of dominating set problems in bounded degree
  graphs.
\newblock {\em Inf. Comput.}, 206(11):1264--1275, 2008.

\bibitem{CMR2000}
Bruno Courcelle, Johann~A. Makowsky, and Udi Rotics.
\newblock Linear time solvable optimization problems on graphs of bounded
  clique-width.
\newblock {\em Theory of Computing Systems}, 33:125--150, 2000.

\bibitem{courcelle2000upper}
Bruno Courcelle and Stephan Olariu.
\newblock Upper bounds to the clique width of graphs.
\newblock {\em Discrete Applied Mathematics}, 101(1):77--114, 2000.

\bibitem{cygan2016problems}
Marek Cygan, Holger Dell, Daniel Lokshtanov, Daniel Marx, Jesper Nederlof,
  Yoshio Okamoto, Rammohan Paturi, Saket Saurabh, and Magnus Wahlstr{\"o}m.
\newblock {On problems as hard as CNF-SAT}.
\newblock {\em ACM Transactions on Algorithms (TALG)}, 12(3):41, 2016.

\bibitem{CFKLMPPS15}
Marek Cygan, Fedor~V. Fomin, Lukasz Kowalik, Daniel Lokshtanov, Daniel Marx,
  Marcin Pilipczuk, Michal Pilipczuk, and Saket Saurabh.
\newblock {\em Parameterized Algorithms}.
\newblock Springer, 2015.

\bibitem{CP13}
Marek Cygan and Marcin Pilipczuk.
\newblock {Split Vertex Deletion meets Vertex Cover: New fixed-parameter and
  exact exponential-time algorithms}.
\newblock {\em Inf. Process. Lett.}, 113(5-6):179--182, 2013.

\bibitem{Diestel}
Reinherd Diestel.
\newblock {\em Graph Theory}.
\newblock Springer, 2006.

\bibitem{dom2014kernelization}
Michael Dom, Daniel Lokshtanov, and Saket Saurabh.
\newblock {Kernelization lower bounds through colors and IDs}.
\newblock {\em ACM Transactions on Algorithms (TALG)}, 11(2):13, 2014.

\bibitem{downey1995fixed}
Rod~G Downey and Michael~R Fellows.
\newblock {Fixed-parameter tractability and completeness II: On completeness
  for W[1]}.
\newblock {\em Theoretical Computer Science}, 141(1-2):109--131, 1995.

\bibitem{DF95}
Rod~G. Downey and Michael~R. Fellows.
\newblock {Fixed-parameter tractability and completeness II: On completeness
  for W[1]}.
\newblock {\em Theoretical Computer Science}, 141(1-2):109--131, 1995.

\bibitem{downey2008parameterized}
Rodney~G Downey, Michael~R Fellows, Catherine McCartin, and Frances Rosamond.
\newblock Parameterized approximation of dominating set problems.
\newblock {\em Information Processing Letters}, 109(1):68--70, 2008.

\bibitem{FJR13}
Michael~R. Fellows, Bart M.~P. Jansen, and Frances~A. Rosamond.
\newblock Towards fully multivariate algorithmics: Parameter ecology and the
  deconstruction of computational complexity.
\newblock {\em European Journal of Combinatorics}, 34(3):541--566, 2013.

\bibitem{FK2010}
Fedor~V. Fomin and Dieter Kratsch.
\newblock {\em Exact Exponential Algorithms}.
\newblock Springer Science \& Business Media, 2010.

\bibitem{FortnowS11}
Lance Fortnow and Rahul Santhanam.
\newblock {Infeasibility of instance compression and succinct PCPs for {NP}}.
\newblock {\em J. Comput. Syst. Sci.}, 77(1):91--106, 2011.

\bibitem{garey1979computers}
M.~R. Garey and David~S. Johnson.
\newblock {\em Computers and Intractability: {A} Guide to the Theory of
  NP-Completeness}.
\newblock W. H. Freeman, 1979.

\bibitem{GJKMR18csr}
Dishant Goyal, Ashwin Jacob, Kaushtubh Kumar, Diptapriyo Majumdar, and
  Venkatesh Raman.
\newblock Structural parameterizations of dominating set variants.
\newblock In Fedor~V. Fomin and Vladimir~V. Podolskii, editors, {\em Computer
  Science - Theory and Applications - 13th International Computer Science
  Symposium in Russia, {CSR} 2018, Moscow, Russia, June 6-10, 2018,
  Proceedings}, volume 10846 of {\em Lecture Notes in Computer Science}, pages
  157--168. Springer, 2018.

\bibitem{guo2004structural}
Jiong Guo, Falk H{\"u}ffner, and Rolf Niedermeier.
\newblock A structural view on parameterizing problems: Distance from
  triviality.
\newblock In {\em Parameterized and Exact Computation: First International
  Workshop, IWPEC 2004, Bergen, Norway, September 14-17, 2004. Proceedings 1},
  pages 162--173. Springer, 2004.

\bibitem{HanakaOOU23}
Tesshu Hanaka, Hirotaka Ono, Yota Otachi, and Saeki Uda.
\newblock Grouped domination parameterized by vertex cover, twin cover, and
  beyond.
\newblock In Marios Mavronicolas, editor, {\em Algorithms and Complexity - 13th
  International Conference, {CIAC} 2023, Larnaca, Cyprus, June 13-16, 2023,
  Proceedings}, volume 13898 of {\em Lecture Notes in Computer Science}, pages
  263--277. Springer, 2023.

\bibitem{haynes1997domination}
Teresa~W. Haynes.
\newblock {\em Domination in graphs: advanced topics}.
\newblock Marcel Dekker, 2017.

\bibitem{haynes2023fundamentals}
Teresa~W Haynes, Stephen~T Hedetniemi, and Michael~A Henning.
\newblock Fundamentals of domination.
\newblock In {\em Domination in Graphs: Core Concepts}, pages 27--47. Springer,
  2023.

\bibitem{IP01}
Russel Impagliazzo and Rammohan Paturi.
\newblock On the complexity of k-sat.
\newblock {\em Journal of Computer and System Sciences}, 62:367--375, 2001.

\bibitem{IPZ01}
Russel Impagliazzo, Rammohan Paturi, and Francis Zane.
\newblock {Which Problems Have Strongly Exponential Complexity?}
\newblock {\em Journal of Computer and System Sciences}, 63(4):512--530, 2001.

\bibitem{JRV14}
Bart M.~P. Jansen, Venkatesh Raman, and Martin Vatshelle.
\newblock {Parameter Ecology for Feedback Vertex Set}.
\newblock {\em Tsinghua Science and Technology}, 19(4):387--409, 2014.

\bibitem{MisraR19}
Neeldhara Misra and Piyush Rathi.
\newblock The parameterized complexity of dominating set and friends revisited
  for structured graphs.
\newblock In Ren{\'{e}} van Bevern and Gregory Kucherov, editors, {\em Computer
  Science - Theory and Applications - 14th International Computer Science
  Symposium in Russia, {CSR} 2019, Novosibirsk, Russia, July 1-5, 2019,
  Proceedings}, volume 11532 of {\em Lecture Notes in Computer Science}, pages
  299--310. Springer, 2019.

\bibitem{niedermeier2010reflections}
Rolf Niedermeier.
\newblock Reflections on multivariate algorithmics and problem
  parameterization.
\newblock In {\em 27th International Symposium on Theoretical Aspects of
  Computer Science}. Schloss Dagstuhl-Leibniz-Zentrum fuer Informatik, 2010.

\bibitem{OSV2013}
Sang-il Oum, Sigve~Hortemo S{\ae}ther, and Martin Vatshelle.
\newblock {Faster Algorithms Parameterized by Clique-width}.
\newblock {\em arXiv preprint arXiv:1311.0224}, 2013.

\bibitem{philip2012polynomial}
Geevarghese Philip, Venkatesh Raman, and Somnath Sikdar.
\newblock Polynomial kernels for dominating set in graphs of bounded degeneracy
  and beyond.
\newblock {\em ACM Transactions on Algorithms (TALG)}, 9(1):1--23, 2012.

\bibitem{raman2008short}
Venkatesh Raman and Saket Saurabh.
\newblock {Short cycles make W-hard problems hard: FPT algorithms for W-hard
  problems in graphs with no short cycles}.
\newblock {\em Algorithmica}, 52(2):203--225, 2008.

\bibitem{Schaefer78}
Thomas~J. Schaefer.
\newblock {The Complexity of Satisfiability Problems}.
\newblock In {\em Proceedings of STOC}, pages 216--226, 1978.

\bibitem{van2011exact}
Johan M.~M. Van~Rooij and Hans~L Bodlaender.
\newblock Exact algorithms for dominating set.
\newblock {\em Discrete Applied Mathematics}, 159(17):2147--2164, 2011.

\bibitem{Zehavi17}
Meirav Zehavi.
\newblock {Maximum Minimal Vertex Cover Parameterized by Vertex Cover}.
\newblock {\em SIAM Journal of Discrete Mathematics}, 31(4):2440--2456, 2017.

\end{thebibliography}
\end{document}